\newcolumntype{C}[1]{>{\centering\let\newline\\\arraybackslash\hspace{0pt}}m{#1}}
\crefname{hypothesis}{Hypothesis}{Hypotheses}
\title{A unifying framework for $n$-dimensional quasi-conformal mappings\thanks{Submitted to the editors DATE.
\funding{This work was supported in part by the National Science Foundation under Grant No.~DMS-2002103 (to Gary P.~T.~Choi), the National Natural Science Foundation of China (NSFC Project number 11771369), the outstanding young scholars of Education Bureau of Hunan Province, P. R. China (number 17B257) and the Natural Science Foundation of Hunan Province, P. R. China (number 2018JJ2375, 2017SK2014, 2018XK2304) (to Jianping Zhang), and HKRGC GRF under project ID 2130656 (to Lok Ming Lui).}}}
\author{Daoping Zhang\thanks{School of Mathematical Sciences, Nankai University, Tianjin 300071, P.R.China
  (\email{daopingzhang@nankai.edu.cn}).}
\and Gary P. T. Choi\thanks{Department of Mathematics, Massachusetts Institute of Technology, Cambridge, MA 02139, USA
  (\email{ptchoi@mit.edu}).}
  \and Jianping Zhang\thanks{School of Mathematics and Computational Science, Hunan National Center for Applied Mathematics, and Hunan Key Laboratory for Computation and Simulation in Science and Engineering, Xiangtan University, Xiangtan, Hunan 411105, P.R.China
  (\email{jpzhang@xtu.edu.cn}).}
  \and Lok Ming Lui\thanks{Department of Mathematics, The Chinese University of Hong Kong, Hong Kong
  (\email{lmlui@math.cuhk.edu.hk}).}}
\DeclareMathOperator{\diag}{diag}
\begin{document}

\maketitle

\begin{abstract}
With the advancement of computer technology, there is a surge of interest in effective mapping methods for objects in higher-dimensional spaces. To establish a one-to-one correspondence between objects, higher-dimensional quasi-conformal theory can be utilized for ensuring the bijectivity of the mappings. In addition, it is often desirable for the mappings to satisfy certain prescribed geometric constraints and possess low distortion in conformality or volume. In this work, we develop a unifying framework for computing $n$-dimensional quasi-conformal mappings. More specifically, we propose a variational model that integrates quasi-conformal distortion, volumetric distortion, landmark correspondence, intensity mismatch and volume prior information to handle a large variety of deformation problems. We further prove the existence of a minimizer for the proposed model and devise efficient numerical methods to solve the optimization problem. We demonstrate the effectiveness of the proposed framework using various experiments in two- and three-dimensions, with applications to medical image registration, adaptive remeshing and shape modeling.
\end{abstract}

\begin{keywords}
Higher-dimensional data, quasi-conformal theory, large deformation mapping, volume prior
\end{keywords}

\begin{AMS}
65D18, 68U05, 68U10
\end{AMS}

\section{Introduction} \label{sect:introduction}
A fundamental task in imaging science is to find an optimal transformation of certain given objects. For instance, given a pair of two-dimensional (2D) images, it is common to search for a mapping that deforms one of the images to match the other image such that the features of the two images are aligned as much as possible. Traditional mapping methods use rigid transformations, isotropic or anisotropic scaling and shear transformations, which are limited by their degrees of freedom and hence do not yield an accurate registration between corresponding objects in general. In recent decades, more advanced non-rigid mapping methods have been developed, including the thin plate splines (TPS) method~\cite{bookstein1989principal}, large deformation diffeomorphic metric mapping (LDDMM)~\cite{joshi2000landmark,beg2005computing}, Demons algorithm~\cite{thirion1998image,vercauteren2009diffeomorphic} etc. (see~\cite{zitova2003image,hernandez2008comparing} for more details). In particular, prescribed landmarks are commonly used to aid the computation of the mappings~\cite{rohr2001landmark,johnson2002consistent}. Area information has also been utilized in the computation of optimal mass transport maps~\cite{zhao2013area,gu2016variational,giri2021open} and density-equalizing maps~\cite{choi2018density,choi2020area,choi2021volumetric}. More recently, quasi-conformal mappings have become increasingly popular for the development of non-rigid image registration~\cite{lam2014landmark,yung2018efficient,tu2020diffeomorphic,zeng2011registration} and surface mapping methods~\cite{lipman2012bounded,lipman2012simple,weber2012computing,wong2014computation,choi2015fast,meng2016tempo,chien2016bounded,lam2017optimized,choi2018linear,choi2020parallelizable,wong2015computing,yang2020quasiconformal,zeng2011registration}, with applications to geometry processing~\cite{choi2016spherical,choi2016fast,choi2017subdivision}, biological shape analysis~\cite{choi2020tooth,choi2020shape} and medical visualization~\cite{choi2015flash,choi2017conformal,zeng2014colon,ta2021quantitative}. Specifically, quasi-conformal theory allows one to ensure the bijectivity and reduce the local geometric distortion of the mappings. However, most of the above-mentioned methods only work for 2D objects embedded in three-dimensional (3D) Euclidean space but not higher-dimensional shapes. While a few recent works have extended the computation of quasi-conformal mappings to higher dimensions~\cite{lee2016landmark,naitsat2018geometric,naitsat2018geometry,zhang20203d,naitsat2021inversion}, they primarily focus on the conformal distortion and landmark mismatch but not the volumetric distortion or any other useful prior information. In this work, we propose a unifying framework for computing $n$-dimensional quasi-conformal mappings, which are folding-free quasi-regular mappings in $\mathbb{R}^n$. Unlike the prior higher-dimensional mapping methods, our proposed framework considers a variational model that involves not only quasi-conformality and landmark constraints but also intensity and volumetric information. The existence of a minimizer for the variational model is theoretically guaranteed. We also introduce a novel use of an exponential term that significantly simplifies the numerical computation of the optimization problem. Altogether, our proposed framework can be effectively applied to a wide range of $n$-dimensional mapping problems. 

The organization of the paper is as follows. In Section~\ref{sect:contribution}, we highlight the contributions of our work. In Section~\ref{sect:background}, we introduce the mathematical background of our proposed framework. The detailed formulation of the proposed framework is then explained in Section~\ref{sect:main}. In Section~\ref{sect:experiment}, we demonstrate the effectiveness of the proposed framework for different $n$-dimensional mapping problems using various synthetic examples. In Section~\ref{sect:application}, we explore the applications of the proposed framework to medical image registration, adaptive remeshing and graphics. We conclude the paper and discuss possible future works in Section~\ref{sect:conclusion}.

\section{Contributions} \label{sect:contribution}
The contributions of our work are as follows:
\begin{enumerate}[(i)] 
\item Our proposed framework takes quasi-conformal distortion, volumetric distortion, landmark correspondence, intensity mismatch and volume prior into consideration, which allows us to handle a broader class of $n$-dimensional mapping problems when compared to the existing methods.
\item The existence of the minimizer for our variational model is theoretically guaranteed.
\item The bijectivity of the mappings obtained by our framework is also guaranteed.
\item The computation of the mappings is more efficient than the prior quasi-conformal mapping methods.
\item The proposed framework can be effectively applied to medical image registration, adaptive remeshing and graphics.
\end{enumerate}

\section{Mathematical background}\label{sect:background}
In this section, we review some basic mathematical concepts relevant to this work, including 2D and $n$-dimensional quasi-conformal maps.

\begin{figure}[t]
    \centering
    \includegraphics[width=\textwidth]{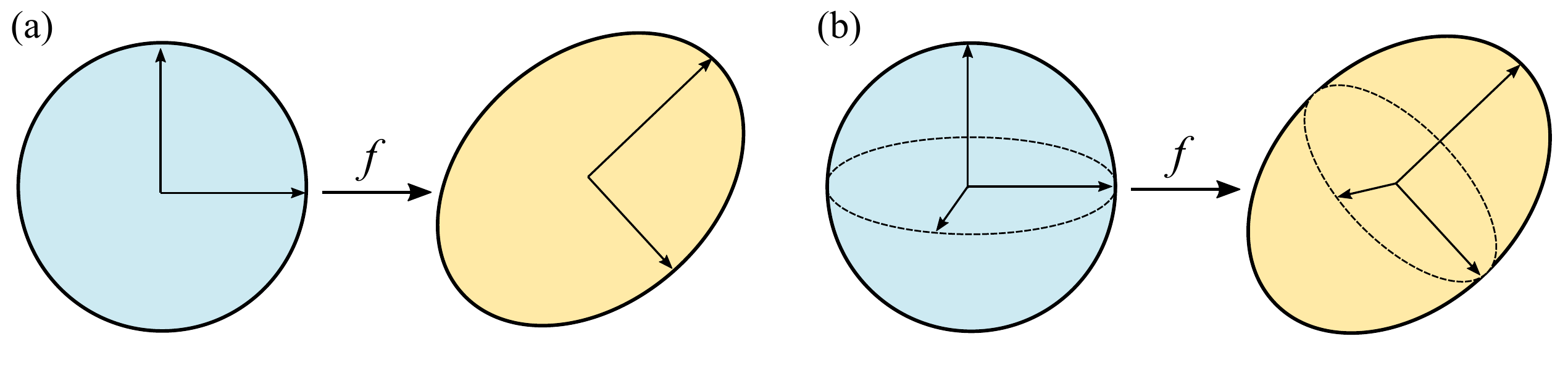}
    \caption{An illustration of quasi-conformal maps. (a) Under a 2D quasi-conformal map, infinitesimal circles are mapped to infinitesimal ellipses. (b) Under a $n$-dimensional quasi-conformal map, infinitesimal spheres of dimension $n-1$ are mapped to infinitesimal ellipsoids of dimension $n-1$.}
    \label{fig:qc_illustration}
\end{figure}

\subsection{2D quasi-conformal maps}
Mathematically, an orientation-preserving homeomorphism $f(z):\mathbb{C}\rightarrow\mathbb{C}$ is said to be \emph{quasi-conformal} if it satisfies the following Beltrami equation~\cite{Lehto1973quasiconformal}:
\begin{equation}\label{BE}
\frac{\partial f}{\partial \bar{z}} = \mu(z)\frac{\partial f}{\partial z}
\end{equation}
for some complex-valued Lebesgue measurable function $\mu:\mathbb{C}\rightarrow\mathbb{C}$ satisfying $\|\mu\|_{\infty}<1$, where $\mu$ is called the Beltrami coefficient \cite{bers1977quasiconformal} and $z= x_{1}+\bm{i}x_{2}$, where $x_1, x_2$ are real numbers. Since $\frac{\partial f}{\partial \bar{z}} = \frac{1}{2}(\frac{\partial f}{\partial x_{1}}+\bm{i}\frac{\partial f}{\partial x_{2}})$ and $\frac{\partial f}{\partial z} = \frac{1}{2}(\frac{\partial f}{\partial x_{1}}-\bm{i}\frac{\partial f}{\partial x_{2}})$, we have 
\begin{equation}\label{NBC}
|\mu(z)|^{2}  = \frac{\|\nabla f\|_{\mathrm{F}}^{2}-2\det\nabla f}{\|\nabla f\|_{\mathrm{F}}^{2}+2\det\nabla f},
\end{equation}
where $\nabla f$ is the Jacobian of $f$ and $\|\cdot\|_{\mathrm{F}}$ represents the Frobenius norm. From \eqref{NBC}, we can see that $\|\mu(z)\|_{\infty}<1 \Longleftrightarrow \det\nabla f >0$, which means that a quasi-conformal mapping is indeed one-to-one. From $\mu(z)$, we can know that the angle of maximal magnification is $\mathrm{arg}(\mu(z))/2$ with magnifying factor $\left|\frac{\partial f}{\partial z}\right| (1+|\mu(z)|)$, the angle of maximal shrinking is the orthogonal angle $(\mathrm{arg}(\mu(z))-\pi)/2$ with shrinking factor $\left|\frac{\partial f}{\partial z}\right|(1-|\mu(z)|)$~\cite{lam2014landmark}, which means that the Beltrami coefficient $\mu$ represents the local geometric distortion (see Fig.~\ref{fig:qc_illustration}(a)). Furthermore, based on the Beltrami coefficient, we can define by $K_{d}$ the dilatation
\begin{equation}\label{dilatation}
K_{d}(f) = \frac{1+|\mu|}{1-|\mu|}
\end{equation}
to express the ratio of the largest singular value of the Jacobian of $f$ divided by the smallest singular value. In addition, if $\mu(z) = 0$, we can get the complex form of the Cauchy-Riemann equation:
\begin{equation}
\bm{i}\frac{\partial f}{\partial x_{1}}=\frac{\partial f}{\partial x_{2}},
\end{equation}
which shows that $f$ is a conformal map. From this perspective, quasi-conformal maps are a generalization of conformal maps. 

\subsection{$n$-dimensional quasi-conformal maps}
Since the Beltrami coefficient is only defined in the complex space, we cannot directly define the $n$-dimensional quasi-conformal maps by extending the Beltrami coefficient or the dilatation $K_d$ to the $n$-dimensional Euclidean space with $n \geq 3$. Nevertheless, one can generalize the notion of conformality distortion for a diffeomorphism in the $n$-dimensional space as explained below. 

An orientation-preserving mapping $\bm{f}(\bm{x}):\mathbb{R}^{n}\rightarrow\mathbb{R}^{n}$ is conformal \cite{gehring2017introduction,iwaniec2001geometric} if
\begin{equation}\label{ndCon}
\nabla\bm{f}^{T}\nabla\bm{f} = (\det\nabla\bm{f})^{2/n}I,
\end{equation}
where $\nabla\bm{f}$ represents the Jacobian of the mapping $\bm{f}$ and $I$ is the identity matrix.
Set $\lambda_{1}\geq\dots\geq\lambda_{n}>0$ as the eigenvalues of $\nabla\bm{f}^{T}\nabla\bm{f}$. 
Then by the eigendecomposition of $\nabla\bm{f}^{T}\nabla\bm{f}$, it is easy to verify that \eqref{ndCon} holds if and only if $\lambda_{1} = \dots = \lambda_{n}$. In addition, on one hand, by the inequality of arithmetic and geometric means, 
\begin{equation}\label{AGM}
\frac{1}{n} \sum_{i=1}^{n}\lambda_{i} \geq (\Pi_{i=1}^{n}\lambda_{i})^{1/n},
\end{equation}
we have 
\begin{equation}
\frac{1}{n}\left(\frac{\|\nabla\bm{f}\|_{\mathrm{F}}^{2}}{(\det\nabla\bm{f})^{2/n}}\right) \geq 1,
\end{equation}
since $\det\nabla\bm{f} = (\Pi_{i=1}^{n}\lambda_{i})^{1/2}$ and $\|\nabla\bm{f}\|_{\mathrm{F}}^{2} = \sum_{i=1}^{n}\lambda_{i}$. On the other hand, since the equality holds in \eqref{AGM} if and only if $\lambda_{1} = \dots = \lambda_{n}$, based on the above discussion, we can see that $\bm{f}$ is conformal if and only if $\frac{1}{n}\left(\frac{\|\nabla\bm{f}\|_{\mathrm{F}}^{2}}{(\det\nabla\bm{f})^{2/n}}\right) = 1$. Hence, the quantity $\frac{1}{n}\left(\frac{\|\nabla\bm{f}\|_{\mathrm{F}}^{2}}{(\det\nabla\bm{f})^{2/n}}\right)$ can be regarded as a measure of how far away $\bm{f}$ is from a conformal map. In \cite{lee2016landmark}, based on this idea, a generalized conformality distortion $K(\bm{f})$ in the $n$-dimensional space is defined:
\begin{equation}\label{nDistortion}
K(\bm{f}):= \left\{
\begin{split}
& \frac{1}{n}\left(\frac{\|\nabla\bm{f}\|_{\mathrm{F}}^{2}}{(\det\nabla\bm{f})^{2/n}}\right) \quad \mathrm{if} \ \det\nabla\bm{f}>0,\\
& +\infty \quad \quad \quad \qquad \qquad \mathrm{otherwise}.
\end{split}\right.
\end{equation}
Furthermore, for $n=2$, it has been pointed out in \cite{zhang20203d} that 
\begin{equation}
K(\bm{f})\leq K_{d}(\bm{f}) \leq 2K(\bm{f}),
\end{equation}
which shows that $K(\bm{f})$ can be used to measure the dilatation of a $n$-dimensional quasi-conformal map (see also Fig.~\ref{fig:qc_illustration}(b)).

\section{Proposed framework}\label{sect:main}
In this section, we first introduce our proposed unifying framework for $n$-dimensional quasi-conformal mappings. We then devise the numerical implementation for solving the proposed variational model.

\subsection{A new framework for $n$-dimensional deformation}
Before introducing our proposed framework, we first review a related work as the motivation of our proposed framework. As described in Section \ref{sect:background}, one can define a generalized conformality distortion $K$ as in \eqref{nDistortion} to measure how far away a map is from a conformal map in the $n$-dimensional space. Based on this term $K$, Lee \textit{et al.} proposed a landmark-matching model \cite{lee2016landmark}:
\begin{equation}\label{LLL}
\min_{\bm{y}} \int_{\Omega}K(\bm{y})\mathrm{d}\bm{x} + \frac{\alpha}{2}\int_{\Omega}\|\Delta\bm{y}\|^{2}_{\mathrm{F}}\mathrm{d}\bm{x}, \quad \mathrm{s.t.} \quad \bm{y}(p_{i}) = q_{i},\ i=1,\dots,m,    
\end{equation}
where $\bm{y}:\mathbb{R}^{n}\rightarrow\mathbb{R}^{n}$ is the desired transformation, $\alpha>0$ is a positive parameter to balance the distortion term and the smooth term, and $p_{i}$ and $q_{i}$ are prescribed landmark points. We can see that minimizing the model \eqref{LLL} with a suitable parameter $\alpha$ leads to a smooth transformation that simultaneously achieves a minimal conformality distortion and satisfies the landmark constraints. In the 3D case, to solve the model \eqref{LLL}, an auxiliary variable $\bm{v}$ is introduced and an equivalent formulation can be derived:
\begin{equation}\label{LLLequivalent}
\min_{\bm{y},\bm{v}} \int_{\Omega}\frac{\|\nabla\bm{y}\|_{\mathrm{F}}^{2}}{(3\det\bm{v})^{2/3}}\mathrm{d}\bm{x} + \frac{\alpha}{2}\int_{\Omega}\|\Delta\bm{y}\|^{2}_{\mathrm{F}}\mathrm{d}\bm{x}, \ \mathrm{s.t.} \ \bm{v}=\nabla\bm{y},\ \det\bm{v}>0,\ \bm{y}(p_{i}) = q_{i}, i=1,\dots,m.   
\end{equation}
Then the alternating direction method of multipliers (ADMM) can be employed to solve \eqref{LLLequivalent}. However, note that the model \eqref{LLL} does not incorporate any intensity or volume information which could improve the accuracy of the mappings in registration problems. In addition, one subproblem in ADMM for solving \eqref{LLLequivalent} involves the inequality constraint $\det\bm{v}>0$, which makes solving the problem computationally expensive.

To overcome these two issues, we first propose the following model for computing $n$-dimensional quasi-conformal mappings:
\begin{equation}\label{PM1}
\begin{split}
&\min_{\bm{y},\theta} \frac{\alpha_{1}}{2}\int_{\Omega}|\theta|^{2}\mathrm{d}\bm{x}+\alpha_{2}\int_{\Omega}\frac{\|\nabla\bm{y}\|_{\mathrm{F}}^{2}}{n(\det\nabla\bm{y})^{2/n}}\mathrm{d}\bm{x} + \frac{\alpha_{3}}{2}\int_{\Omega}\|\Delta\bm{y}\|^{2}_{2}\mathrm{d}\bm{x}+\frac{\alpha_{4}}{2}\int_{\Omega}(T\circ\bm{y}-R)^{2}\mathrm{d}\bm{x}, \\
&\mathrm{s.t.} \ \det\nabla\bm{y}=\mathrm{e}^{\theta},\ \bm{y}(p_{i}) = q_{i},\ i=1,\dots,m,   
\end{split}
\end{equation}
where $T:\Omega\subset\mathbb{R}^{n}\rightarrow \mathbb{R}$ is the template, $R:\Omega\subset\mathbb{R}^{n}\rightarrow \mathbb{R}$ is the reference, $T\circ\bm{y}$ is the deformed template and $\theta$ is a function from $\Omega\subset\mathbb{R}^{n}$ to $\mathbb{R}$. When compared to the previous model \eqref{LLL}, the model \eqref{PM1} contains a new sum of squared differences (SSD). One can also see that the inequality constraint $\det\bm{v}>0$ is converted into an equality constraint by using the exponential function $\mathrm{e}^{\theta}$. Here we also add a regularizer about $\theta$ to minimize the volume change because the geometric meaning of the Jacobian determinant of the transformation represents the ratio of the change of the volume. Furthermore, if we know the volume prior in some specific region $\Omega'$, then we can combine this information with model \eqref{PM1}. This leads to the main proposed unifying model for $n$-dimensional quasi-conformal mappings in this paper:
\begin{equation}\label{PM}
\begin{split}
&\min_{\bm{y},\theta} \frac{\alpha_{1}}{2}\int_{\Omega}|\theta|^{2}\mathrm{d}\bm{x} +\alpha_{2}\int_{\Omega}\frac{\|\nabla\bm{y}\|_{\mathrm{F}}^{2}}{n(\det\nabla\bm{y})^{2/n}}\mathrm{d}\bm{x}  +  \frac{\alpha_{3}}{2}\int_{\Omega}\|\Delta\bm{y}\|^{2}_{2}\mathrm{d}\bm{x} \\
& \qquad \qquad \qquad \qquad  \qquad \qquad\qquad \qquad +\frac{\alpha_{4}}{2}\int_{\Omega'}(\theta-\bar{\theta})^{2}\mathrm{d}\bm{x}
+\frac{\alpha_{5}}{2}\int_{\Omega}(T\circ\bm{y}-R)^{2}\mathrm{d}\bm{x}, \\
&\mathrm{s.t.} \ \det\nabla\bm{y}=\mathrm{e}^{\theta},\ \bm{y}(p_{i}) = q_{i},\ i=1,\dots,m,   
\end{split}
\end{equation}
where $\bar{\theta}$ is a given function indicating the volume prior in the specific region $\Omega'$.

Next, we prove the existence of the solution for the proposed model \eqref{PM}.
\begin{theorem}\label{thm:existence}
Suppose $\Omega$ is bounded and simply connected, $T, R$ are continuous functions from $\Omega\subset\mathbb{R}^{n}\rightarrow\mathbb{R}$, and $\alpha_{i} >0, i=1,\dots,5$. Let
\begin{equation}
\begin{split}
\mathcal{A}:= \{\bm{y}\in \mathcal{C}^{2}(\Omega),\theta\in \mathcal{C}^{0}(\Omega): & \|\theta\|_{\infty}\leq c_{1}, \|\bm{y}\|_{\infty}\leq c_{2}, \|\nabla\bm{y}\|_{\infty}\leq c_{3},\\
& \|\nabla^{2}\bm{y}\|_{\infty}\leq c_{4}, \det\nabla\bm{y}=\mathrm{e}^{\theta}, \bm{y}(p_{i}) = q_{i}, i=1,\dots,m\}
\end{split}
\end{equation}
for some $c_{i} >0, i=1,\dots,4$. Then the proposed model \eqref{PM} admits a minimizer in $\mathcal{A}$. In fact, $\mathcal{A}$ is compact.
\end{theorem}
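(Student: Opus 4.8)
The plan is to run the direct method of the calculus of variations, drawing the needed compactness from the uniform $\mathcal{C}^1$ bounds hard-wired into $\mathcal{A}$ via the Arzelà--Ascoli theorem. I would equip $\mathcal{A}$ with the topology of uniform convergence of $\bm{y}$ together with $\nabla\bm{y}$ (i.e.\ $\mathcal{C}^1$), and I would first note that the statement tacitly assumes $\mathcal{A}\neq\emptyset$ --- true, e.g., when the landmark data $\{(p_i,q_i)\}$ are realized by some admissible map obeying the stated bounds (the identity if $q_i=p_i$); I would record this as a standing hypothesis.

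\textbf{Compactness of $\mathcal{A}$.} Given a sequence $\{(\bm{y}_k,\theta_k)\}\subset\mathcal{A}$, the bounds $\|\bm{y}_k\|_\infty\le c_2$, $\|\nabla\bm{y}_k\|_\infty\le c_3$ make $\{\bm{y}_k\}$ uniformly bounded and equicontinuous, and $\|\nabla\bm{y}_k\|_\infty\le c_3$, $\|\nabla^2\bm{y}_k\|_\infty\le c_4$ make $\{\nabla\bm{y}_k\}$ uniformly bounded and equicontinuous. Arzelà--Ascoli (plus a diagonal extraction) then yields a subsequence with $\bm{y}_k\to\bm{y}^*$ and $\nabla\bm{y}_k\to\nabla\bm{y}^*$ uniformly on $\overline\Omega$. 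For $\theta$ I would not invoke Arzelà--Ascoli directly but instead use the constraint: $\det\nabla\bm{y}_k$ is a polynomial in the (uniformly convergent) entries of $\nabla\bm{y}_k$, so $\det\nabla\bm{y}_k\to\det\nabla\bm{y}^*$ uniformly, and since $\det\nabla\bm{y}_k=\mathrm{e}^{\theta_k}\in[\mathrm{e}^{-c_1},\mathrm{e}^{c_1}]$ is bounded away from $0$, $\theta_k=\log\det\nabla\bm{y}_k\to\log\det\nabla\bm{y}^*=:\theta^*$ uniformly (Lipschitzness of $\log$ on that interval). Uniform convergence then transports every defining inequality, the identity $\det\nabla\bm{y}^*=\mathrm{e}^{\theta^*}$, and the landmark conditions $\bm{y}^*(p_i)=q_i$ to the limit, and the Hessian bound shows $\nabla\bm{y}^*$ is $c_4$-Lipschitz. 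Here I would flag the one honest subtlety: a uniform $\mathcal{C}^1$-limit of $\mathcal{C}^2$ maps with bounded Hessians is only $\mathcal{C}^{1,1}=W^{2,\infty}$ in general, so ``$\bm{y}\in\mathcal{C}^2(\Omega)$'' in the definition of $\mathcal{A}$ should be read as $W^{2,\infty}(\Omega)$ (or $\mathcal{A}$ replaced by its closure in that space) for the set to be genuinely closed; with that reading $(\bm{y}^*,\theta^*)\in\mathcal{A}$ and $\mathcal{A}$ is sequentially compact.

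\textbf{Existence of a minimizer.} Let $J$ denote the objective in \eqref{PM}. Every term is nonnegative and, by \eqref{AGM}, the conformality integrand is $\ge 1$, so $J$ is bounded below and $\inf_{\mathcal{A}}J<\infty$. I would take a minimizing sequence, extract a convergent subsequence as above, and check $J(\bm{y}^*,\theta^*)\le\liminf_k J(\bm{y}_k,\theta_k)$ term by term. The $\int_\Omega|\theta|^2$, $\int_{\Omega'}(\theta-\bar\theta)^2$ terms pass to the limit by uniform convergence of $\theta_k$ on the finite-measure set $\Omega$; the conformality term passes because $\nabla\bm{y}_k\to\nabla\bm{y}^*$ uniformly and the denominator $(\det\nabla\bm{y}_k)^{2/n}$ stays $\ge\mathrm{e}^{-2c_1/n}>0$; the SSD term passes because $T$ is (uniformly) continuous on the bounded region where the images lie and $\bm{y}_k\to\bm{y}^*$ uniformly, hence $T\circ\bm{y}_k\to T\circ\bm{y}^*$ uniformly. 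The Laplacian term is the only one not continuous for $\mathcal{C}^1$ convergence: here I would use $\|\nabla^2\bm{y}_k\|_\infty\le c_4$ to pass (on a further subsequence) to $\nabla^2\bm{y}_k\overset{*}{\rightharpoonup}\nabla^2\bm{y}^*$ in $L^\infty(\Omega)$, and then invoke convexity of $\bm{y}\mapsto\int_\Omega\|\Delta\bm{y}\|_2^2\,\mathrm{d}\bm{x}$ in the second derivatives to conclude weak-$*$ lower semicontinuity, i.e.\ $\liminf_k\int_\Omega\|\Delta\bm{y}_k\|_2^2\,\mathrm{d}\bm{x}\ge\int_\Omega\|\Delta\bm{y}^*\|_2^2\,\mathrm{d}\bm{x}$. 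Summing the estimates gives $J(\bm{y}^*,\theta^*)\le\inf_{\mathcal{A}}J$, so $(\bm{y}^*,\theta^*)$ is a minimizer.

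\textbf{Main obstacle.} The one non-routine point is the second-order regularization term: Arzelà--Ascoli only controls $\bm{y}$ and $\nabla\bm{y}$, so the bending term must be handled by a separate weak-$*$ compactness plus convexity argument, and this is also what dictates that the compactness of $\mathcal{A}$ be understood in the $\mathcal{C}^1$/$W^{2,\infty}$ sense rather than literally in $\mathcal{C}^2$. Everything else --- the equicontinuity estimates, the exchange of the inequality constraint for uniform convergence of $\theta_k$ through the exponential map and the lower bound $\mathrm{e}^{-c_1}$, and the stability of the landmark constraints under uniform convergence --- is standard.
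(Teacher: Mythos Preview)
Your approach is sound and takes a genuinely different route from the paper's. The paper argues compactness by showing $\mathcal{A}$ is complete and totally bounded in the full $\mathcal{C}^2$-type norm $\|\theta\|_\infty + \|\bm{y}\|_\infty + \|\nabla\bm{y}\|_\infty + \|\nabla^2\bm{y}\|_\infty$, and then obtains the minimizer simply from continuity of the objective on a compact set---no lower-semicontinuity analysis is needed, and the Laplacian term causes no trouble because in that topology $\nabla^2\bm{y}_k$ converges uniformly. You instead work in the weaker $\mathcal{C}^1$ topology, extract via Arzel\`a--Ascoli, and handle the bending term separately by weak-$*$ compactness plus convexity. Your route is the more robust one: you correctly note that the bound $\|\nabla^2\bm{y}\|_\infty\le c_4$ gives no equicontinuity of $\nabla^2\bm{y}$, so genuine $\mathcal{C}^2$-compactness is not available without an extra hypothesis (the paper's totally-bounded construction with tent functions glosses over this, and also treats the ball $\{\|\theta\|_\infty\le c_1\}\subset\mathcal{C}^0$ as if it were totally bounded, which it is not absent equicontinuity---your device of recovering $\theta$ from the constraint $\theta=\log\det\nabla\bm{y}$ sidesteps that cleanly). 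The price you pay is the extra weak-$*$/convexity step for $\int_\Omega\|\Delta\bm{y}\|_2^2$ and the need to read $\mathcal{C}^2$ as $W^{2,\infty}$, both of which you flag; the paper's reward for its stronger topology is a one-line existence argument once compactness is (asserted to be) in hand.
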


\begin{proof}
By the Homogeneity Lemma, there exists a $\mathcal{C}^{\infty}$ mapping $\bm{y}^{*}$ satisfying the landmarks constraint \cite{joshi2000landmark}, which is also orientation-preserving. Then we can find the corresponding $\theta^{*}$ such that $\det\nabla\bm{y}^{*} = \mathrm{e}^{\theta^{*}}$. Here, it is clear that $\theta^{*}$ belongs to $\mathcal{C}^{0}(\Omega)$. Hence, by setting $c_{1} = \|\theta^{*}\|_{\infty} , c_{2} = \|\bm{y}^{*}\|_{\infty}, c_{3} = \|\nabla\bm{y}^{*}\|_{\infty}$ and $c_{4} = \|\nabla^{2}\bm{y}^{*}\|_{\infty}$, we have $(\bm{y}^{*},\theta^{*})$ in $\mathcal{A}$, which indicates that $\mathcal{A}$ is non-empty.

To prove that $\mathcal{A}$ is compact, we need to check that $\mathcal{A}$ is complete and totally bounded.

Define $\|\bm{y}\|_{y}:= \|\bm{y}\|_{\infty} + \|\nabla\bm{y}\|_{\infty} + \|\nabla^{2}\bm{y}\|_{\infty} $
and $\|(\bm{y},\theta)\|_{s} := \|\theta\|_{\infty} + \|\bm{y}\|_{y}$. Let the sequence $\{\bm{y}^{k},\theta^{k}\}_{k=1}^{\infty}$ in $\mathcal{A}$ be the Cauchy sequence with respect to the norm $\|\cdot\|_{s}$. Then $\{\theta^{k}\}_{k=1}^{\infty}$ is also a  Cauchy sequence. Since $\mathcal{A}_{\theta} := \{\theta\in\mathcal{C}^{0}(\Omega): \|\theta\|\leq c_{1}\}$ is complete with respect to $\|\cdot\|_{\infty}$, there exists $\bar{\theta}$ such that $\theta^{k} \rightarrow \bar{\theta}$. Similarly, there exist $\bar{\bm{y}}, \bm{u}$ and $\bm{v}$ such that $\bm{y}^{k}\rightarrow \bar{\bm{y}}$,  $\nabla\bm{y}^{k}\rightarrow \bm{u}$ and  $\nabla^{2}\bm{y}^{k}\rightarrow \bm{v}$ respectively. Furthermore, since $\bar{\bm{y}}$ is $\mathcal{C}^{2}$, we have $\bm{u} = \nabla\bar{\bm{y}}$ and $\bm{v} = \nabla^{2}\bar{\bm{y}}$. So  $\mathcal{A}_{\bm{y}} := \{\bm{y}\in\mathcal{C}^{2}(\Omega): \|\bm{y}\|_{\infty}\leq c_{2}, \|\nabla\bm{y}\|_{\infty}\leq c_{3},\|\nabla^{2}\bm{y}\|_{\infty}\leq c_{4}\}$ is complete with respect to $\|\cdot\|_{y}$. Hence, we have $(\bm{y}^{k},\theta^{k})\rightarrow (\bar{\bm{y}},\bar{\theta})$. In addition, since $\det\nabla\bm{y}^{k} = \mathrm{e}^{\theta^{k}}$ and $\bm{y}^{k}(p_{i}) = q_{i}$, by the continuity, we have $\det\nabla\bar{\bm{y}} = \mathrm{e}^{\bar{\theta}}$ and $\bar{\bm{y}}(p_{i}) = q_{i}$, which shows that $(\bar{\bm{y}},\bar{\theta})$ is in $\mathcal{A}$. Hence, $\mathcal{A}$ is complete with respect to the norm $\|\cdot\|_{s}$.

To show that $\mathcal{A}$ is totally bounded, we prove that the product space $\mathcal{A}_{\bm{y}}\times \mathcal{A}_{\theta}$, which contains $\mathcal{A}$, is totally bounded. Define a regular grid on $\Omega$ with edge lengths $\frac{1}{m_{1}}$. Denote the grid points by $\{\bm{x}_{i}\}_{i\in I}$ after re-indexing. We define the set of tent functions on each grid point by $\mathcal{T}_{m_{1},n_{1}} = \{\frac{k}{n_{1}}\phi_{\bm{x}_{i}}\}_{i\in I, 1\leq k\leq n_{1}}$. Here, $\frac{1}{n_{1}}$ is the length of each interval dividing $[0,1]$, and $\phi_{\bm{x}_{i}}$ is the tent function such that $\phi_{\bm{x}_{i}}(\bm{x}_{i}) = c_{1}$ and $\phi_{\bm{x}_{i}}(\bm{x}_{j}) = 0$ for $i\neq j$. Let $\mathcal{B}_{m_{1},n_{1}} = \{\hat{\theta}\in\mathcal{A}_{\theta}: \hat{\theta}=\sum_{i}\mathcal{F}_{i}, \mathcal{F}_{i}\in\mathcal{T}_{m_{1},n_{1}}\}$, which has finitely many elements. In this way, given any $\epsilon>0$, we can choose large enough $m_{1},n_{1}$ such that for any $\theta\in\mathcal{A}_{\theta}$, there exists a $\hat{\theta}\in\mathcal{B}_{m_{1},n_{1}}$ with $\|\theta-\hat{\theta}\|_{\infty}<\epsilon$. Hence, $\mathcal{A}_{\theta}\subset\cup_{\hat{\theta}\in\mathcal{B}_{m_{1},n_{1}}} B_{\epsilon}(\hat{\theta})$, where $B_{\epsilon}(\hat{\theta})$ is the open ball centering at $\hat{\theta}$ with radius $\epsilon$. Therefore, $\mathcal{A}_{\theta}$ is totally bounded. In fact, each element in $\mathcal{T}_{m_{1},n_{1}} = \{\frac{k}{n_{1}}\phi_{\bm{x}_{i}}\}_{i\in I, 1\leq k\leq n_{1}}$ is a $\mathcal{C}^{0}$ spline. Following this methodology, we can construct $\mathcal{P}_{m_{2},n_{2}}$ containing finite $\mathcal{C}^{2}$ splines such that for given any $\epsilon>0$ and any $\bm{y}\in\mathcal{A}_{\bm{y}}$, for large enough $m_{2}$ and $n_{2}$, there exists a $\hat{\bm{y}} \in \mathcal{B}_{m_{2},n_{2}} = \{\hat{\bm{y}}\in\mathcal{A}_{\bm{y}}: \hat{\bm{y}}=\sum_{i}\mathcal{F}_{i}, \mathcal{F}_{i}\in\mathcal{T}_{m_{2},n_{2}}\}$ such that $\|\bm{y} - \hat{\bm{y}}\|_{y} < \epsilon$. Then we can conclude that $\mathcal{A}_{\bm{y}}$ is also totally bounded. Hence, for given any $\epsilon>0$, by setting $m=\max\{m_{1},m_{2}\}$ and $n=\max\{n_{1},n_{2}\}$, we see that $\mathcal{A}_{\bm{y}}\times \mathcal{A}_{\theta}$ is totally bounded. Since any subset of a totally bounded set is totally bounded, we conclude that $\mathcal{A}$ is totally bounded. 

Therefore, $\mathcal{A}$ is compact. Since the objective functional in the proposed model \eqref{PM} is continuous in $\mathcal{A}$, the proposed model admits a minimizer in $\mathcal{A}$.
\end{proof}

Note that the proposed unifying model \eqref{PM} contains different components that correspond to different geometric quantities. By suitably setting the parameters of the components, we can obtain different models that yield different types of $n$-dimensional quasi-conformal mappings. Below, we list some possible formulations and explain their usage.

\begin{enumerate}

\item \textbf{Landmark-constrained $n$-dimensional quasi-conformal mapping}.  Let $\alpha_{1} =\alpha_{4} = \alpha_{5} = 0$. The proposed model \eqref{PM} becomes
\begin{equation}\label{case1}
\begin{split}
&\min_{\bm{y},\theta} \alpha_{2}\int_{\Omega}\frac{\|\nabla\bm{y}\|_{\mathrm{F}}^{2}}{n(\det\nabla\bm{y})^{2/n}}\mathrm{d}\bm{x}  +  \frac{\alpha_{3}}{2}\int_{\Omega}\|\Delta\bm{y}\|^{2}_{2}\mathrm{d}\bm{x}, \\
&\mathrm{s.t.} \ \det\nabla\bm{y}=\mathrm{e}^{\theta},\ \bm{y}(p_{i}) = q_{i},\ i=1,\dots,m.  
\end{split}
\end{equation}
Note that \eqref{case1} is theoretically equivalent to the formulation \eqref{LLL} proposed in~\cite{lee2016landmark}. However, the use of the exponential function $\mathrm{e}^{\theta}$ in \eqref{case1} largely simplifies the numerical implementation and hence leads to a significant improvement in practice when compared to \eqref{LLL} (see Section~\ref{implementation} for more details). 

\item \textbf{Landmark- and intensity-based $n$-dimensional quasi-conformal registration}. Let $\alpha_{1} = \alpha_{4} = 0$. The proposed model \eqref{PM} becomes
\begin{equation}\label{case2}
\begin{split}
&\min_{\bm{y},\theta} \alpha_{2}\int_{\Omega}\frac{\|\nabla\bm{y}\|_{\mathrm{F}}^{2}}{n(\det\nabla\bm{y})^{2/n}}\mathrm{d}\bm{x}  +  \frac{\alpha_{3}}{2}\int_{\Omega}\|\Delta\bm{y}\|^{2}_{2}\mathrm{d}\bm{x} +\frac{\alpha_{5}}{2}\int_{\Omega}(T\circ\bm{y}-R)^{2}\mathrm{d}\bm{x}, \\
&\mathrm{s.t.} \ \det\nabla\bm{y}=\mathrm{e}^{\theta},\ \bm{y}(p_{i}) = q_{i},\ i=1,\dots,m.  
\end{split}
\end{equation}
The model can be viewed as an extension of~\cite{lam2014landmark,lee2016landmark}. It is a hybrid registration model based on the intensity information and landmarks, which simultaneously guarantees the bijectivity of the resulting transformation.

\item \textbf{$n$-dimensional quasi-conformal mapping with volume prior and optimized volumetric distortion}. Let $\alpha_{5}=0$ and suppose there is no landmark. The proposed model \eqref{PM} becomes 
\begin{equation}\label{case3}
\begin{split}
&\min_{\bm{y},\theta} \frac{\alpha_{1}}{2}\int_{\Omega}|\theta|^{2}\mathrm{d}\bm{x} +\alpha_{2}\int_{\Omega}\frac{\|\nabla\bm{y}\|_{\mathrm{F}}^{2}}{n(\det\nabla\bm{y})^{2/n}}\mathrm{d}\bm{x}  +  \frac{\alpha_{3}}{2}\int_{\Omega}\|\Delta\bm{y}\|^{2}_{2}\mathrm{d}\bm{x}  +\frac{\alpha_{4}}{2}\int_{\Omega'}(\theta-\bar{\theta})^{2}\mathrm{d}\bm{x}, \\
&\mathrm{s.t.} \ \det\nabla\bm{y}=\mathrm{e}^{\theta}.   
\end{split}
\end{equation}
Note that the resulting transformation will be as volume-preserving as possible because of the first term, while the specific region $\Omega'$ will change in volume based on the volume prior.

\item \textbf{The most general model}. In case all components are needed, we can set $\alpha_i \neq 0$ for all $i = 1, \dots, 5$ to keep all terms in the proposed model \eqref{PM} and solve it directly.

\end{enumerate}

\begin{remark}
While the theoretical minimizer in Theorem~\ref{thm:existence} is a global minimizer of the energy~\eqref{PM}, in general it may not fully satisfy the prescribed constraints. Note that the volume prior and intensity constraints can be set arbitrarily, and in some cases they may not even be compatible with each other and/or the landmark constraints. Therefore, the minimizer will just give the lowest energy by considering all components and achieving a balance between them. If a certain constraint is desired to be more important, we can set a larger weight for it in the unifying model~\eqref{PM}. As the weight tends to infinity, the constraint violation for the corresponding term will go to 0 in the resulting minimizer. 
\end{remark}

\begin{remark}
One may also be interested in the uniqueness of the solution to the proposed model~\eqref{PM}.
To analyze it, we can consider the components in the model one by one. Note that the first term and the fourth term involving $\theta$ are invariant under translations and rotations. The second term involving the generalized conformality distortion is invariant under M\"obius transformations, analogous to the 2D quasi-conformal case. The third term involving the second order regularization is invariant under translations, scalings and rotations. However, the fifth term and the landmark constraints may not be invariant under translations, scalings or rotations. This shows that the M\"obius transformation freedom is only in certain components of the proposed model. If the fitting term and the landmark constraints are not considered, the solution of the model is not unique, while if we have a general model with the presence of all components, there may not be any above-mentioned freedom in the solution. 
\end{remark}

\subsection{Implementation} \label{implementation}
In this section, we devise a numerical method to solve the model \eqref{PM} using ADMM.

Firstly, we review the main idea of ADMM. Consider the following general optimization problem:
\begin{equation}\label{Example}
\min_{u,v} h(u)+g(v) \ \mathrm{s.t.} \ c(u) = v,
\end{equation}
where $h:\mathcal{U}\rightarrow \mathbb{R}$, $g:\mathcal{V}\rightarrow \mathbb{R}$ and $c:\mathcal{U}\rightarrow \mathcal{V}$ are three functions. The augmented Lagrangian function of \eqref{Example} is as follows:
\begin{equation}\label{Example_ALM}
\mathcal{L}(u,v,\lambda,\rho) := h(u)+g(v) + \langle\lambda, c(u)-v\rangle + \frac{\rho}{2}\|c(u)-v\|_{2}^{2},
\end{equation}
where $\lambda\in\mathcal{V}$ is the Lagrange multiplier, $\langle\cdot,\cdot\rangle$ is the corresponding inner product and $\rho>0$ is a penalty parameter. ADMM aims to solve \eqref{Example_ALM} using an iterative scheme. In particular, the $k$-th iteration of ADMM is as follows:

\begin{equation}
\left\{
\begin{split}
u^{k+1} &:= \mathrm{argmin}_{u}\ h(u)+\frac{\rho}{2}\left\|c(u)-v^{k}+\frac{\lambda^{k}}{\rho}\right\|_{2}^{2},\\
v^{k+1} &:= \mathrm{argmin}_{v}\ g(v)+\frac{\rho}{2}\left\|c(u^{k+1})-v+\frac{\lambda^{k}}{\rho}\right\|_{2}^{2}, \\
\lambda^{k+1} &:= \lambda^{k}+\rho(c(u^{k+1})-v^{k+1}).
\end{split}
\right.
\end{equation}
In other words, ADMM first solves for $u^{k+1}$ by fixing $v=v^{k}$ and then solves for $v^{k+1}$ by fixing $u=u^{k+1}$.

Now, we are ready to apply ADMM to the proposed model \eqref{PM}. First, we write the corresponding augmented Lagrangian function of \eqref{PM}:
\begin{equation}\label{ALF}
\begin{split}
\mathcal{L}(\bm{y},\theta,\lambda_{1},\lambda_{2}, \rho_{1}, \rho_{2}) &:= \frac{\alpha_{1}}{2}\int_{\Omega}|\theta|^{2}\mathrm{d}\bm{x}+\alpha_{2}\int_{\Omega}\frac{\|\nabla\bm{y}\|_{\mathrm{F}}^{2}}{n(\mathrm{e}^{\theta})^{2/n}}\mathrm{d}\bm{x}+\frac{\alpha_{3}}{2}\int_{\Omega}\|\Delta\bm{y}\|_{\mathrm{F}}^{2}\mathrm{d}\bm{x}\\
&+\frac{\alpha_{4}}{2}\int_{\Omega'}(\theta-\bar{\theta})^{2}\mathrm{d}\bm{x}+\frac{\alpha_{5}}{2}\int_{\Omega}(T\circ\bm{y} - R)^{2}\mathrm{d}\bm{x}  \\
&+ \int_{\Omega}\langle \lambda_{1}, \det\nabla\bm{y}-\mathrm{e}^{\theta} \rangle \mathrm{d}\bm{x} + \frac{\rho_{1}}{2}\int_{\Omega}(\det\nabla\bm{y}-\mathrm{e}^{\theta})^{2}\mathrm{d}\bm{x} \\
&+ \sum_{i=1}^{m} (\lambda_{2}^{i})^{T}(\bm{y}(p_{i})-q_{i}) + \frac{\rho_{2}}{2}\sum_{i=1}^{m}|\bm{y}(p_{i})-q_{i}|^{2}.
\end{split}
\end{equation}
Here, $\lambda_{2}^{i}$ is the $i$-th component of $\lambda_{2}$. Also, for the second term, we can replace $\det\nabla\bm{y}$ with $\mathrm{e}^{\theta}$, which helps simplify one subproblem in ADMM without changing the original problem. Hence, we have the following iterative scheme:
\begin{equation}\label{IteraiveScheme}
\left\{\begin{aligned}
\theta^{k+1}  :=&\ \mathrm{argmin}_{\theta}\ \frac{\alpha_{1}}{2}\int_{\Omega}|\theta|^{2}\mathrm{d}\bm{x} +\alpha_{2}\int_{\Omega}\frac{\|\nabla\bm{y}^{k}\|_{\mathrm{F}}^{2}}{n(\mathrm{e}^{\theta})^{2/n}}\mathrm{d}\bm{x}+ \frac{\alpha_{4}}{2}\int_{\Omega'}(\theta-\bar{\theta})^{2}\mathrm{d}\bm{x} \\
&+ \frac{\rho_{1}}{2}\int_{\Omega}\left(\det\nabla\bm{y}^{k}-\mathrm{e}^{\theta}+\frac{\lambda_{1}^{k}}{\rho_{1}}\right)^{2}\mathrm{d}\bm{x}, \\
\bm{y}^{k+1}  :=&\ \mathrm{argmin}_{\bm{y}}\ \alpha_{2}\int_{\Omega}\frac{\|\nabla\bm{y}\|_{\mathrm{F}}^{2}}{n(\mathrm{e}^{\theta^{k+1}})^{2/n}}\mathrm{d}\bm{x}+\frac{\alpha_{3}}{2}\int_{\Omega}\|\Delta\bm{y}\|_{\mathrm{F}}^{2}\mathrm{d}\bm{x}+\frac{\alpha_{5}}{2}\int_{\Omega}(T\circ\bm{y} - R)^{2}\mathrm{d}\bm{x} \\
&+\frac{\rho_{1}}{2}\int_{\Omega}\left(\det\nabla\bm{y}-\mathrm{e}^{\theta^{k+1}}+\frac{\lambda_{1}^{k}}{\rho_{1}}\right)^{2}\mathrm{d}\bm{x} + \frac{\rho_{2}}{2}\sum_{i=1}^{m} \left|\bm{y}(p_{i})-q_{i}+\frac{(\lambda_{2}^{i})^{k}}{\rho_{2}}\right|, \\
\lambda_{1}^{k+1} :=&\ \lambda_{1}^{k} + \rho_{1}(\det\nabla\bm{y}^{k+1}-\mathrm{e}^{\theta^{k+1}}), \\
(\lambda_{2}^{i})^{k+1} :=&\ (\lambda_{2}^{i})^{k}+ \rho_{2}(\bm{y}^{k+1}(p_{i})-q_{i}), \ i = 1,\dots,m.
\end{aligned}\right.
\end{equation}

Next, we discretize \eqref{IteraiveScheme} and choose the suitable optimization methods to solve the resulting finite dimensional optimization subproblems.

\subsubsection{Discretization}
For simplicity, here we describe the discretization for the case $n=2$, which can be easily extended to other $n$. Let the domain $\Omega$ be $[0,1]^{2}$. By using the standard triangular partition, we divide $\Omega$ into $2N^{2}$ small triangles and have $\Omega = \cup_{i=1}^{2N^{2}}\Omega_{i}$. Let $X, Y\in\mathbb{R}^{2(N+1)^{2}}$ be the discretized identity map and discretized transformation on the nodal grid respectively. We can then use the piecewise linear function to approximate the transformation. Let $\Theta\in\mathbb{R}^{2N^{2}}$ be the discretized $\theta$ and $\Lambda\in\mathbb{R}^{2N^{2}}$ be the discretized $\lambda$ on each subdomain $\Omega_{i}$, respectively. Set $\Lambda_{2}\in\mathbb{R}^{2m}$ as the discretized $\lambda_{2}$, and $h$ as the area of the standard triangle. For the intensity term, we assume that the intensity values are defined on the cell-centered grid. Hence, we need to give an averaging matrix $P$ from the nodal grid $Y$ to the cell-centered grid $PY$ \cite{haber2004numerical,haber2007image}. Then we can set $\vec T(PY)$ and $\vec R \in \mathbb{R}^{N^{2}}$ as the discretized deformed template image and discretized reference image, respectively.

\begin{remark}
Since $Y$ is usually not located exactly at the pixel points, an interpolation operator is necessary. Here, we choose the cubic spline interpolation \cite{modersitzki2009fair} to compute $\vec T(PY)$. The linear interpolation cannot be applied because it is not differentiable at grid points.
\end{remark}

Since the approximated first order partial derivatives are constants in each subdomain $\Omega_{i}$, we can construct $A_{i}, l=1,\dots,4$ as the discrete first order operator and $B$ as the discrete Laplace operator \cite{zhang2018novel,zhang20203d}, namely,
\begin{equation}
\partial_{x_{1}}\bm{y}_{1} \approx A_{1}Y,\ \partial_{x_{2}}\bm{y}_{1}\approx A_{2}Y,\ \partial_{x_{1}}\bm{y}_{2}\approx A_{3}Y,\ \partial_{x_{2}}\bm{y}_{2}\approx A_{4}Y,\ \Delta\bm{y}\approx BY.
\end{equation}
Hence, by denoting $\odot$ as the Hadamard product, we have
\begin{equation}
\det\nabla\bm{y} \approx A_{1}Y\odot A_{4}Y - A_{2}Y\odot A_{3}Y\ \mathrm{and} \ \|\nabla\bm{y}\|_{\mathrm{F}}^{2} \approx\sum_{l=1}^{4} A_{l}Y\odot A_{l}Y. 
\end{equation}

Now, for \eqref{IteraiveScheme}, we have the following discretized scheme:
\begin{equation}\label{SIteraiveScheme}
\resizebox{\textwidth}{!}{$
\left\{\begin{aligned}
\Theta^{k+1}  :=&\ \mathrm{argmin}_{\Theta}\ J_{1}(\Theta) = \frac{\alpha_{1}h}{2}|\Theta|_{2}^{2}+\frac{\alpha_{2}h}{2}|r^{k}./\mathrm{e}^{\Theta}|_{1} + \frac{\alpha_{4}h}{2}|I_{1}\Theta-\bar{\Theta}|_{2}^{2}+\frac{\rho_{1} h}{2}|\mathrm{e}^{\Theta}-s^{k}|_{2}^{2}, \\
Y^{k+1}  :=&\ \mathrm{argmin}_{Y}\ J_{2}(Y) = \frac{\alpha_{2}h}{2}|(\sum_{l=1}^{4} A_{l}Y\odot A_{l}Y)./\mathrm{e}^{\Theta^{k+1}}|_{1}+\frac{\alpha_{3}h}{2}|BY|_{2}^{2}+\frac{\alpha_{5}h}{2}|\vec T(PY)-\vec R|_{2}^{2} \\
&+\frac{\rho_{1} h}{2}\left|A_{1}Y\odot A_{4}Y - A_{2}Y\odot A_{3}Y - \mathrm{e}^{\Theta^{k+1}} +\frac{\Lambda_{1}^{k}}{\rho_{1}}\right|_{2}^{2} + \frac{\rho_{2}}{2}\left|I_{2}Y-Q+\frac{\Lambda_{2}}{\rho_{2}}\right|^{2}, \\
\Lambda_{1}^{k+1} :=&\ \Lambda^{k}_{1} + \rho_{1}(A_{1}Y^{k+1}\odot A_{4}Y^{k+1} - A_{2}Y^{k+1}\odot A_{3}Y^{k+1}-\mathrm{e}^{\Theta^{k+1}}), \\
\Lambda_{2}^{k+1} :=&\ \Lambda^{k}_{2} + \rho_{2}(I_{2}Y^{k+1}-Q),
\end{aligned}\right.
$}
\end{equation}
where $r^{k} = \sum_{l=1}^{4}A_{l}Y^{k}\odot A_{l}{Y}^{k}$, $s^{k} = A_{1}Y^{k}\odot A_{4}Y^{k} - A_{2}Y^{k}\odot A_{3}Y^{k}+\frac{\Lambda_{1}^{k}}{\rho_{1}}$, $\bar{\Theta}$ is the discretization of the prior $\bar{\theta}$, $Q$ is the vectorization of $\bm{q}_{i}, i=1,\dots,m$, $\mathrm{e}^{\Theta}$ is a vector whose component is $\mathrm{e}^{\Theta_{i}}$, $./$ denotes the component-wise division, $I_{1}$ is the index matrix of the domain $\Omega'$ and $I_{2}$ is the index matrix of the landmarks.

\subsubsection{Subproblem $\Theta$ in \eqref{SIteraiveScheme}}\label{SP:Theta}
Now, we consider solving the subproblem $\Theta$ in \eqref{SIteraiveScheme}. Note that the subproblem $\Theta$ in \eqref{SIteraiveScheme} is nonconvex because it involves the term $|\mathrm{e}^{\Theta}-s^{k}|_{2}^{2}$. Here, we choose the Gauss-Newton method, which picks a symmetric positive definite part of its full Hessian as the approximated Hessian and then solves the resulting Gauss-Newton system to obtain a descent search direction.

We first compute the gradient and Hessian of the subproblem $\Theta$ in \eqref{SIteraiveScheme}, respectively:
\begin{equation}\label{sp1:gradandH}
\left\{
\begin{split}
d_{\Theta} &= \alpha_{1}h\Theta - \frac{\alpha_{2}h}{2} r^{k}./\mathrm{e}^{\Theta} + \alpha_{4}h I_{1}^{T}(I_{1}\Theta-\bar{\Theta})+\rho_{1} h\diag(\mathrm{e}^{\Theta})(\mathrm{e}^{\Theta}-s^{k}), \\
H_{\Theta} &= \alpha_{1}h I + \frac{\alpha_{2}h}{2}\diag(r^{k}./\mathrm{e}^{\Theta})+ \alpha_{4}h I_{1}^{T}I_{1} + \rho_{1} h \diag(2\mathrm{e}^{2\Theta}-\mathrm{e}^{\Theta}\odot s^{k}),
\end{split}
\right.
\end{equation}
where $\diag(v)$ represents a diagonal matrix whose diagonal entries are $v$. 
To get a descent search direction, we choose 
\begin{equation}\label{sp1:approxH}
\hat{H}_{\Theta} = \alpha_{1}h I + \frac{\alpha_{2}h}{2}\diag(r^{k}./\mathrm{e}^{\Theta})+ \alpha_{4}h I_{1}^{T}I_{1} +  \rho_{1} h \diag(\mathrm{e}^{2\Theta})
\end{equation}
as the approximated Hessian and the resulting Gauss-Newton system is as follows:
\begin{equation}\label{sp1:GNsystem}
\hat{H}_{\Theta}p_{\Theta} = -d_{\Theta}.
\end{equation}
Then the iterative scheme of the subproblem $\Theta$ in \eqref{SIteraiveScheme} is 
\begin{equation}
\Theta^{i+1} = \Theta^{i} + \delta^{i}_{\Theta}p^{i}_{\Theta},
\end{equation}
where $\delta^{i}_{\Theta}$ is the $i$-th step length obtained by using the Armijo line search \cite{nocedal2006numerical}. The steps are summarized in the following Algorithm \ref{Alg:SP1}.

\begin{algorithm}[h!]
\caption{Solving Subproblem $\Theta$ in \eqref{SIteraiveScheme}.}
\label{Alg:SP1}
\begin{algorithmic}
\STATE{Set $\Theta^{0}$ and $i=0$; Compute $d_{\Theta}^{i}$ and $\hat{H}_{\Theta}^{i}$ from \eqref{sp1:gradandH} and \eqref{sp1:approxH};}
\WHILE{stopping criteria is not satisfied}
\STATE{Solve $\hat{H}_{\Theta}^{i} p_{\Theta}^{i}=-d_{\Theta}^{i}$;}
\STATE{Update $\Theta^{i+1}$ by Armijo line search;}
\STATE{Set $i = i+1$;}
\STATE{Compute $d_{\Theta}^{i}$ and $\hat{H}_{\Theta}^{i}$ from \eqref{sp1:gradandH} and \eqref{sp1:approxH};}
\ENDWHILE
\end{algorithmic}
\end{algorithm}

\subsubsection{Subproblem $Y$ in \eqref{SIteraiveScheme}}\label{SP:Y}
Here, for the subproblem $Y$ in \eqref{SIteraiveScheme}, we again use the Gauss-Newton method. Although the cost of computing the approximated Hessian and solving a linear system may be expensive, in practice we find that it is still much faster than first-order methods such as L-BFGS\cite{liu1989limited}. This is because first-order methods usually require much more iterations and computing interpolation many times, which can consume a lot of running time.

To implement the Gauss-Newton method, we need to compute the gradient and the approximated Hessian of subproblem $Y$ in \eqref{SIteraiveScheme}, respectively:
\begin{equation}\label{sp2:gradandH}
\begin{split}
d_{Y} &=  \alpha_{2}hM_{1}^{T}(1./\mathrm{e}^{\Theta^{k+1}}) + \alpha_{3}hB^{T}BY +\alpha_{5}hP^{T}\vec T_{PY}^{T}(\vec T(PY)-\vec R) \\
& + \rho_{1} h M_{2}^{T}(A_{1}Y\odot A_{4}Y - A_{2}Y\odot A_{3}Y - \mathrm{e}^{\Theta^{k+1}} +\frac{\Lambda_{1}^{k}}{\rho_{1}}) + \rho_{2} I_{2}^{T}(I_{2}Y-Q+\frac{\Lambda_{2}}{\rho_{2}}), \\
\hat{H}_{Y} &=  \alpha_{2}hM_{1}^{T}M_{1}+\alpha_{3}hB^{T}B+\alpha_{5}hP^{T}\vec T_{PY}^{T}\vec T_{PY}P + \rho_{1} h M_{2}^{T}M_{2} + \rho_{2} I_{2}^{T}I_{2},
\end{split}
\end{equation}
where $\vec T_{PY}$ is the Jacobian of $\vec T$ with respect to $PY$, $M_{1} = \sum_{l=1}^{4}\diag(A_{l}Y)A_{l}$ and $M_{2} = \diag(A_{1}Y)A_{4}+\diag(A_{4}Y)A_{1}-\diag(A_{2}Y)A_{3}-\diag(A_{3}Y)A_2$.

Hence, the iterative scheme of the subproblem $Y$ in \eqref{SIteraiveScheme} is 
\begin{equation}
Y^{i+1} = Y^{i} + \delta^{i}_{Y}p^{i}_{Y},
\end{equation}
where $p^{i}_{Y}$ is solved by the Gauss-Newton system $\hat{H}^{i}_{Y}p^{i}_{Y} = - d^{i}_{Y}$ and $\delta^{i}_{Y}$ is the $i$-th step length obtained again by using the Armijo line search. The steps are summarized in the following Algorithm \ref{Alg:SP2}.

\begin{algorithm}[h!]
\caption{Solving Subproblem $Y$ in \eqref{SIteraiveScheme}.}
\label{Alg:SP2}
\begin{algorithmic}
\STATE{Set $Y^{0}$ and $i=0$; Compute $d_{Y}^{i}$ and $\hat{H}_{Y}^{i}$ from \eqref{sp2:gradandH};}
\WHILE{stopping criteria is not satisfied}
\STATE{Solve $\hat{H}_{Y}^{i} p _{Y}^{i}=-d_{Y}^{i}$;}
\STATE{Update $Y^{i+1}$ by Armijo line search;}
\STATE{Set $i = i+1$;}
\STATE{Compute $d_{Y}^{i}$ and $\hat{H}_{Y}^{i}$ from \eqref{sp2:gradandH};}
\ENDWHILE
\end{algorithmic}
\end{algorithm}

\begin{remark}
The Gauss-Newton method has been successfully used to solve the resulting optimization problem derived by different image registration models \cite{burger2013hyperelastic,zhang2018novel}. Note that in \cite{burger2013hyperelastic,zhang2018novel}, the line search strategy has to satisfy the sufficient descent condition and simultaneously guarantee that the transformation is diffeomorphic under the sense of discretization. However, in this paper, due to the nonnegative constraint, we only need to focus on the sufficient descent condition and the diffeomorphic property is ensured by the quadratic penalty term in the Lagrangian function \eqref{ALF}.
\end{remark}

\subsubsection{Summary of ADMM}
In Section \ref{SP:Theta} and \ref{SP:Y}, we have discussed how to use the Gauss-Newton method to solve the subproblems in the iterative scheme of ADMM \eqref{SIteraiveScheme}. Altogether, the procedure of solving the proposed model \eqref{PM} is summarized in Algorithm \ref{Alg:ADMM}. Note that in the standard ADMM, the penalty parameters $\rho_{1}$ and $\rho_{2}$ are fixed. Changing these penalty parameters $\rho_{1}$ and $\rho_{2}$ dynamically may heavily affect the convergence rate of ADMM \cite{boyd2011distributed}. In practice, we set $E = A_{1}Y\odot A_{4}Y - A_{2}Y\odot A_{3}Y-\mathrm{e}^{\Theta}$ and enlarge $\rho_{1}$ by a factor of $2$ if $|E^{k+1}|_{\infty} > 0.95|E^{k}|_{\infty}$ and keep $\rho_{2}$ fixed.

\begin{algorithm}[h!]
\caption{Solving the proposed model \eqref{PM} by ADMM.}
\label{Alg:ADMM}
\begin{algorithmic}
\STATE{Input $T, R$, volume prior $\bar\Theta$ and Landmarks $(p_{i},q_{i}), i=1,\dots,m$. Set $Y^{0},\Theta^{0}, \Lambda_{1}^{0}, \Lambda_{2}^{0}, \rho_{1}^{0}, \rho_{2}^{0}$, and $k=1$; 	Give parameters $\alpha_{i}, i=1,\dots,5$;} 
\WHILE{stopping criteria is not satisfied}
\STATE{Compute $\Theta^{k+1}$ by Algorithm \ref{Alg:SP1};}
\STATE{Compute $Y^{k+1}$ by Algorithm \ref{Alg:SP2};}
\STATE{Compute $\Lambda_{1}^{k+1}$ by $\Lambda_{1}^{k} + \rho_{1}^{k}(A_{1}Y^{k+1}\odot A_{4}Y^{k+1} - A_{2}Y^{k+1}\odot A_{3}Y^{k+1}-\mathrm{e}^{\Theta^{k+1}})$;}
\STATE{Compute $\Lambda_{2}^{k+1}$ by $\Lambda_{2}^{k} + \rho_{2}^{k}(I_{2}Y^{k+1}-Q)$;}
\IF{$|E^{k+1}|_{\infty} > 0.95|E^{k}|_{\infty}$}
\STATE{Update $\rho^{k+1} = 2\rho^{k}$; }
\ENDIF
\STATE{Set $k = k +1$;}
\ENDWHILE
\end{algorithmic}
\end{algorithm}

Note that there are some existing convergence results of ADMM in nonconvex optimization~\cite{wang2019global} but they are not applicable to the proposed Algorithm \ref{Alg:ADMM} as the equality constraint in \eqref{PM} is nonlinear. Although the convergence of Algorithm \ref{Alg:ADMM} cannot be guaranteed in theory, we can show that Algorithm \ref{Alg:SP1} and Algorithm \ref{Alg:SP2}, which solve the subproblems in \eqref{SIteraiveScheme}, are convergent. To prove this result, we first recall the following lemma.

\begin{lemma}[see~\cite{kelley1999iterative}]\label{Lemma}
Consider the following unconstrained optimization problem:
\begin{equation}
\min_{x} f(x),
\end{equation}
where $f:\mathbb{R}^{n}\rightarrow\mathbb{R}$ is a twice differentiable function. Let $\nabla f$ be Lipschitz continuous with Lipschitz constant $L$. Assume that the approximated Hessian matrices $\hat{H}^{i}$ are symmetric positive definite and that there are $\xi_{1}$ and $\xi_{2}$ such that $\kappa(\hat{H}^{i})\leq \xi_{1}$, and $\|\hat{H}^{i}\|\leq \xi_{2}$ for all $i$. Then either $f(x^{i})$ is unbounded from below or 
\begin{equation}
\lim_{i\rightarrow\infty} \nabla f(x^{i}) = 0
\end{equation}
and hence any limit point of the sequence of iterates produced by the Gauss-Newton method with Armijo line search is a stationary point.

In particular, if $f(x^{i})$ is bounded from below and $x_{i_{l}}\rightarrow x^{*}$ is any convergent subsequence of $\{x^{i}\}$, then $\nabla f(x^{*}) = 0$.
\end{lemma}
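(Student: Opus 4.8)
The statement is a classical convergence result for gradient-related descent methods with Armijo backtracking, so the plan is to follow the standard Zoutendijk-type argument, specialized to the Gauss--Newton direction $p^{i} = -(\hat{H}^{i})^{-1}\nabla f(x^{i})$. First I would record the two geometric consequences of the hypotheses on $\hat{H}^{i}$. Writing $g_{i}:=\nabla f(x^{i})$, since $\hat{H}^{i}$ is symmetric positive definite with $\|\hat{H}^{i}\|\le\xi_{2}$ one gets $-\langle g_{i},p^{i}\rangle = g_{i}^{T}(\hat{H}^{i})^{-1}g_{i}\ge \xi_{2}^{-1}\|g_{i}\|^{2}>0$, so $p^{i}$ is a genuine descent direction whenever $g_i\neq 0$, and also $\|p^{i}\|\ge \xi_{2}^{-1}\|g_{i}\|$. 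Expanding $g_{i}$ in the eigenbasis of $(\hat{H}^{i})^{-1}$ and using $\|(\hat{H}^{i})^{-1}g_{i}\|\le \sqrt{\lambda_{\max}((\hat{H}^{i})^{-1})}\,\sqrt{g_{i}^{T}(\hat{H}^{i})^{-1}g_{i}}$, the angle $\theta_{i}$ between $p^{i}$ and $-g_{i}$ satisfies $\cos\theta_{i} = \frac{g_{i}^{T}(\hat{H}^{i})^{-1}g_{i}}{\|g_{i}\|\,\|(\hat{H}^{i})^{-1}g_{i}\|}\ge \kappa(\hat{H}^{i})^{-1/2}\ge \xi_{1}^{-1/2}$. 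This uniform lower bound is the crucial point: it keeps the search directions from becoming asymptotically orthogonal to the gradient.

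Next I would analyze the Armijo line search. Using the Lipschitz continuity of $\nabla f$ and the descent lemma $f(x^{i}+\delta p^{i})\le f(x^{i})+\delta\langle g_{i},p^{i}\rangle + \tfrac{L}{2}\delta^{2}\|p^{i}\|^{2}$, one checks that the Armijo sufficient-decrease inequality $f(x^{i}+\delta p^{i})\le f(x^{i}) + c\,\delta\langle g_{i},p^{i}\rangle$ (with the fixed Armijo constant $c\in(0,1)$) holds as soon as $\delta\le \frac{2(1-c)\,(-\langle g_{i},p^{i}\rangle)}{L\|p^{i}\|^{2}}$, so the backtracking terminates with an accepted step $\delta_{i}\ge \min\{\bar\delta,\ \beta\,\frac{2(1-c)\,(-\langle g_{i},p^{i}\rangle)}{L\|p^{i}\|^{2}}\}$, where $\bar\delta$ is the initial trial step and $\beta\in(0,1)$ the backtracking factor \cite{nocedal2006numerical}. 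Feeding this into $f(x^{i})-f(x^{i+1})\ge c\,\delta_{i}\,(-\langle g_{i},p^{i}\rangle)$ and writing $-\langle g_{i},p^{i}\rangle = \cos\theta_{i}\,\|g_{i}\|\,\|p^{i}\|$, the backtracking case gives a bound $\ge \frac{2c\beta(1-c)}{L}\cos^{2}\theta_{i}\|g_{i}\|^{2}$ (the factor $\|p^{i}\|^{2}$ cancels), and the full-step case gives $\ge c\bar\delta\,\xi_{2}^{-1}\cos\theta_{i}\|g_{i}\|^{2}\ge c\bar\delta\,\xi_{2}^{-1}\cos^{2}\theta_{i}\|g_{i}\|^{2}$; in either case $f(x^{i})-f(x^{i+1})\ge \eta\cos^{2}\theta_{i}\|g_{i}\|^{2}\ge \eta\,\xi_{1}^{-1}\|g_{i}\|^{2}$ for a constant $\eta>0$ depending only on $c,\beta,\bar\delta,L,\xi_{2}$.

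Finally I would assemble the conclusion. The sequence $\{f(x^{i})\}$ is nonincreasing; if it is unbounded below we are in the first alternative of the dichotomy. Otherwise it converges, so telescoping the last inequality yields $\sum_{i}\|\nabla f(x^{i})\|^{2}<\infty$, whence $\nabla f(x^{i})\to 0$. For the last assertion, if $x_{i_{l}}\to x^{*}$ along a subsequence then continuity of $\nabla f$ gives $\nabla f(x^{*}) = \lim_{l}\nabla f(x_{i_{l}}) = 0$, so $x^{*}$ is a stationary point. I expect the only delicate point to be the lower bound on the accepted Armijo step: this is where the Lipschitz hypothesis on $\nabla f$ is used, and one must be slightly careful that the resulting estimate is uniform in $i$ (which it is, since $L,c,\beta,\bar\delta$ are fixed and the $\|p^{i}\|$ terms either cancel or are controlled by $\|p^{i}\|\ge\xi_2^{-1}\|g_i\|$). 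Everything else is routine bookkeeping, and since this is the statement of a known result we simply cite~\cite{kelley1999iterative} in the paper and omit these details.
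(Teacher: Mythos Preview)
Your argument is correct and is exactly the standard Zoutendijk-type proof one finds in the cited reference. In the paper itself this lemma is not proved at all: it is simply recalled from \cite{kelley1999iterative} and used as a black box to establish Theorem~\ref{Theorem}, so your closing remark that the details are omitted in favor of the citation matches precisely what the paper does.
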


Based on Lemma \ref{Lemma}, we have the following convergence result about the subproblems in \eqref{SIteraiveScheme}.
\begin{theorem}\label{Theorem}
Assume that $T$ and $R$ are twice differentiable. Let the parameters $\alpha_{l},l=1,\dots,5, \rho$ and $\sigma$ be positive. In addition, assume that $\alpha_{3}B^{T}B+\rho_{2} I_{2}^{T}I_{2}$ is symmetric positive definite. Then Algorithm \ref{Alg:SP1} and Algorithm \ref{Alg:SP2}, which solve the subproblems in \eqref{SIteraiveScheme}, are globally convergent. 
\end{theorem}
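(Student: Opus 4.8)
The plan is to verify, for each of the two subproblems $\Theta$ and $Y$ in~\eqref{SIteraiveScheme}, the hypotheses of Lemma~\ref{Lemma}: that the objective is twice differentiable with Lipschitz‑continuous gradient, that it is bounded below, and that the Gauss--Newton Hessian approximations are symmetric positive definite with uniformly bounded norm and condition number. Global convergence of Algorithm~\ref{Alg:SP1} and Algorithm~\ref{Alg:SP2} then follows directly from the conclusion of the lemma. Twice differentiability, boundedness from below and positive definiteness are essentially algebraic and quick; the real content is to control the iterate‑dependent quantities (the exponentials $\mathrm{e}^{\Theta}$, the matrices $M_1,M_2$, the interpolation Jacobian $\vec T_{PY}$), and the device for this is to show that each objective is \emph{coercive}, so that the monotonicity enforced by the Armijo line search confines the entire sequence of iterates to a compact sublevel set on which all the required bounds hold.

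For the $\Theta$‑subproblem I would first note that $J_1$ is a finite sum of squared Euclidean norms, an affine‑composed quadratic, and the smooth term $\frac{\alpha_2 h}{2}\sum_i r^k_i\,\mathrm{e}^{-\Theta_i}$ with $r^k\ge 0$; hence $J_1\in\mathcal{C}^\infty$ and, every summand being nonnegative, $J_1\ge 0$ is bounded below. Since $\alpha_1>0$, the term $\frac{\alpha_1 h}{2}|\Theta|_2^2$ makes $J_1$ coercive, so its sublevel set $S_1=\{\Theta:J_1(\Theta)\le J_1(\Theta^0)\}$ is compact; because $\hat H_\Theta^i$ in~\eqref{sp1:approxH} is symmetric positive definite, the Gauss--Newton direction from~\eqref{sp1:GNsystem} is a genuine descent direction, so the Armijo line search is well defined and yields $J_1(\Theta^{i+1})\le J_1(\Theta^{i})$, whence $\Theta^i\in S_1$ for all $i$. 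On $S_1$ the gradient $d_\Theta$ in~\eqref{sp1:gradandH} is $\mathcal{C}^1$ and hence Lipschitz on the compact set $S_1$ (and one may extend $J_1$ to a globally $\mathcal{C}^{1,1}$ function agreeing with it near $S_1$ if a global constant is desired). Finally $\hat H_\Theta^i\succeq \alpha_1 h\,I$ uniformly, while the remaining diagonal and $I_1^{T}I_1$ contributions are bounded on the compact set $S_1$, so $\|\hat H_\Theta^i\|\le\xi_2$ and $\kappa(\hat H_\Theta^i)\le \xi_2/(\alpha_1 h)=:\xi_1$ for all $i$; Lemma~\ref{Lemma} then gives $\nabla J_1(\Theta^i)\to 0$, and since $\{\Theta^i\}\subset S_1$ is bounded every limit point is stationary.

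For the $Y$‑subproblem I would run the same argument. Twice differentiability of $J_2$ uses that $T,R$ are twice differentiable (equivalently, the cubic‑spline interpolant $\vec T$ is $\mathcal{C}^2$), and every summand of $J_2$ is either a squared norm or a nonnegative weighted sum of squares, so $J_2\ge 0$. Here coercivity is precisely where the hypothesis that $\alpha_3 B^{T}B+\rho_2 I_2^{T}I_2$ is symmetric positive definite is used: it forces $Y\mapsto \frac{\alpha_3 h}{2}|BY|_2^2+\frac{\rho_2}{2}|I_2Y-Q+\Lambda_2/\rho_2|_2^2$ to be a positive‑definite quadratic plus lower‑order terms, which together with the nonnegativity of the remaining terms gives $J_2(Y)\to\infty$ as $|Y|\to\infty$; thus $S_2=\{Y:J_2(Y)\le J_2(Y^0)\}$ is compact and, by the descent/Armijo argument again, contains all iterates. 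On $S_2$ the matrices $M_1,M_2$ and the Jacobian $\vec T_{PY}$ are bounded, so from~\eqref{sp2:gradandH} one obtains $\hat H_Y^i\succeq \alpha_3 h\,B^{T}B+\rho_2 I_2^{T}I_2\succ 0$ with $\|\hat H_Y^i\|\le\xi_2$ and $\kappa(\hat H_Y^i)\le\xi_1$ uniformly, while $d_Y$ is $\mathcal{C}^1$ and hence Lipschitz on $S_2$; Lemma~\ref{Lemma} then yields the global convergence of Algorithm~\ref{Alg:SP2}.

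The main obstacle is that Lemma~\ref{Lemma} is stated with a \emph{global} Lipschitz gradient and \emph{uniform} bounds on the approximate Hessians, whereas $J_1$ and $J_2$ only satisfy these conditions locally, owing to the exponentials in $\Theta$ and the spline Jacobian; the crux of the proof is therefore the coercivity step — immediate for $J_1$ because $\alpha_1>0$, but for $J_2$ genuinely dependent on the positive‑definiteness assumption (if, for instance, there are no landmarks then $I_2=0$ and one would need $B^{T}B\succ 0$, which fails for a raw discrete Laplacian) — after which the monotonicity of the Armijo iterates pins the sequence inside a fixed compact set and the remaining estimates are routine.
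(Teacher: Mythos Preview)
Your argument is correct and follows the paper's overall strategy: for each subproblem, verify the hypotheses of Lemma~\ref{Lemma} by establishing coercivity of the objective, confining the Armijo iterates to a compact sublevel set, and then extracting the Lipschitz-gradient and uniform condition-number bounds there. The one substantive difference is the coercivity argument for $J_2$. The paper appeals to the discrete Jacobian-determinant term, arguing that as $Y\to\infty$ the area, and hence $A_1Y\odot A_4Y-A_2Y\odot A_3Y$, must blow up; you instead obtain coercivity directly from the assumed positive definiteness of $\alpha_3 B^{T}B+\rho_2 I_2^{T}I_2$, which yields a positive-definite quadratic lower bound on $J_2$. Your route is cleaner and makes transparent why the SPD hypothesis is needed at all, whereas the paper's geometric claim is somewhat informal (the discrete Jacobian determinant is invariant under rigid translation of $Y$, so some other term must in any case control those directions). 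Apart from this point the two proofs coincide.
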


\begin{proof}
For Algorithm \ref{Alg:SP1}, from \eqref{sp1:approxH}, since each component of $r^{k}$ is nonnegative, it is obvious that $\hat{H}^{i}_{\Theta}$ is symmetric positive definite. In addition, it is also clear that $J_{1}(\Theta)$ tends to infinity when some component of $\Theta$ goes to infinity. Hence, the sublevel set $\{\Theta | J_{1}(\Theta)\leq J_{1}(\Theta^{0})\}$ can be covered by a compact (closed and bounded) set $\mathcal{U}_{1}$. Then $\nabla J_{1}(\Theta)$ is Lipschitz continuous with Lipschitz constant $L_{1}$, and $\kappa(\hat{H}_{\Theta}^{i})$ and $\|\hat{H}_{\Theta}^{i}\|$ for all $i$ have the upper bound under the set $\mathcal{U}_{1}$, respectively. In addition, $0$ is a lower bound of $J_{1}(\Theta)$. Therefore, by Lemma \ref{Lemma}, we have 
\begin{equation}
\lim_{i\rightarrow\infty} \nabla J_{1}(\Theta^{i}) = 0.
\end{equation}

For Algorithm \ref{Alg:SP2}, since $\alpha_{3}B^{T}B+\sigma I_{2}^{T}I_{2}$ is assumed to be symmetric positive definite, $\hat{H}^{i}_{Y}$ for all $i$ in \eqref{sp2:gradandH} is symmetric positive definite. We also have that $J_{2}(Y)$ tends to infinity when some component of $Y$ goes to infinity. This is because $A_{1}Y\odot A_{4}Y - A_{2}Y\odot A_{3}Y$ represents the discrete Jacobian determinant and the ratio of change of the area. When $Y$ goes to infinity, the area will expand to infinity and then $A_{1}Y\odot A_{4}Y - A_{2}Y\odot A_{3}Y$ will go to infinity. Hence, a compact set $\mathcal{U}_{2}$ can cover the sublevel set $\{Y | J_{2}(Y)\leq J_{2}(Y^{0})\}$. Following a similar discussion in the first part, we can conclude that
\begin{equation}
\lim_{i\rightarrow\infty} \nabla J_{2}(Y^{i}) = 0.
\end{equation}

\end{proof}

\begin{remark}
In Theorem \ref{Theorem}, we assume that $\alpha_{3}B^{T}B^{B}+\rho_{2} I_{2}^{T}I_{2}$ is symmetric positive definite, which can be guaranteed when the Dirichlet boundary condition is employed or the Neumann boundary condition is employed and $\rho_{2}$ is positive. In addition, we also let the parameters $\alpha_{l},l=1,\dots,5, \rho_{1}$ and $\rho_{2}$ be positive. In fact, this requirement can be relaxed to: (1) $\rho_{1} = 0$ with no landmarks; (2) $\alpha_{2}=0$ with no fitting term; (3) $\alpha_{4}=0$ with no volume prior. Theorem \ref{Theorem} can still be established under these conditions.
\end{remark}
 
\begin{remark}
While Theorem~\ref{Theorem} guarantees the global convergence of Algorithm~\ref{Alg:SP1} and Algorithm~\ref{Alg:SP2} for solving the subproblems in \eqref{SIteraiveScheme}, the global convergence of the entire ADMM framework (Algorithm~\ref{Alg:ADMM}) has not been proved due to the nonlinear equality constraint \eqref{PM}. Nevertheless, we find that the solutions satisfy the KKT conditions in all our experiments. As the solutions may not be unique, from this point of view, the current approach only finds a stationary point which may not necessary be the global optimum. 
\end{remark}

\section{Experiments}\label{sect:experiment}
In this section, we test the proposed framework \eqref{PM} with several synthetic examples. The algorithms are implemented in MATLAB R2019a and tested on a MacBook Pro with 2.2 GHz Quad-Core Intel Core i7 processor and 16 GB RAM.

\begin{figure}[t!]
\centering
\subfigure[Landmarks]{
\includegraphics[width=1.8in,height=1.8in]{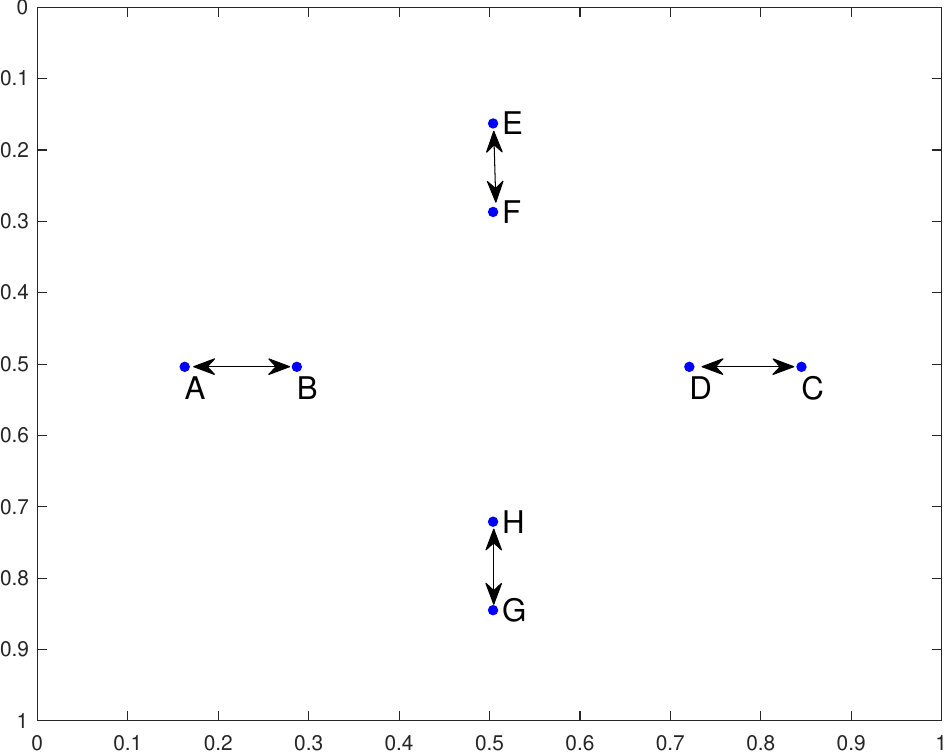}}
\subfigure[Transformation $\bm{y}$]{
\includegraphics[width=1.8in,height=1.8in]{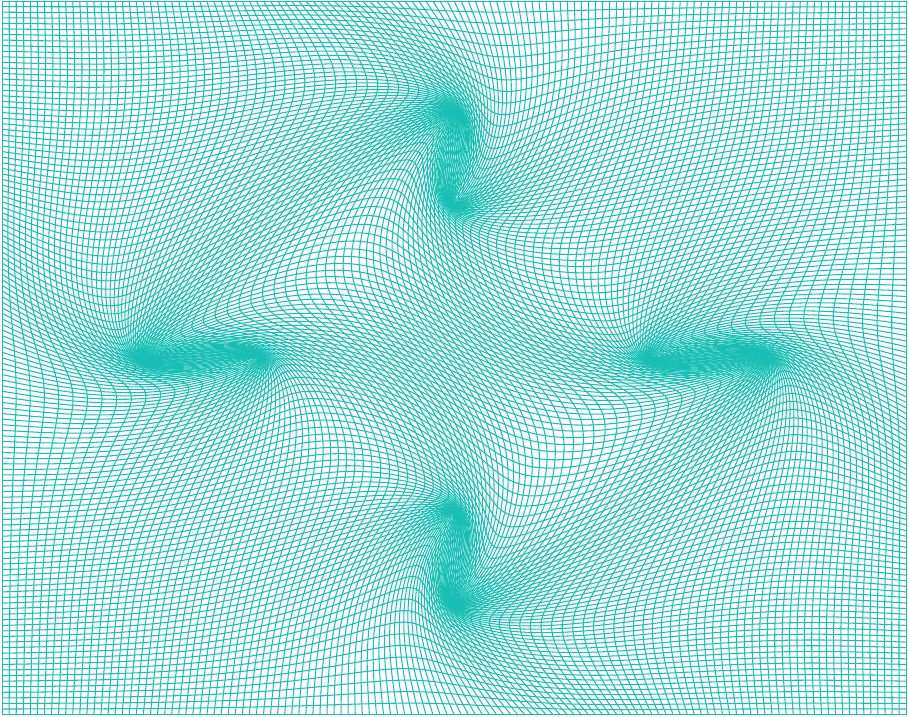}}
\subfigure[$\det\nabla\bm{y}$]{
\includegraphics[width=1.8in,height=1.8in]{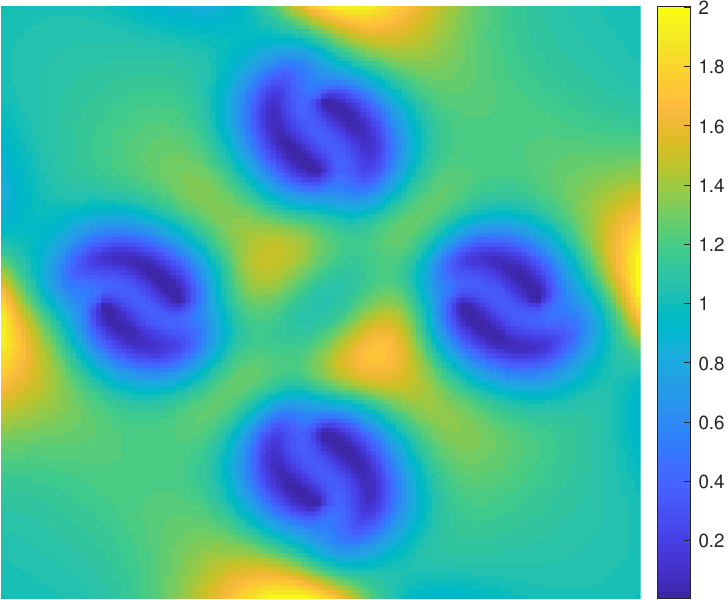}}
\caption{An example of landmark-constrained 2D quasi-conformal mapping achieved by our proposed framework. (a) The corresponding landmarks. (b)--(c) The transformation and the corresponding Jacobian determinant.}\label{fig:landmark_2D}
\end{figure}

\subsection{Landmark-constrained $n$-dimensional quasi-conformal mapping}
We first test our proposed model for landmark-constrained $n$-dimensional quasi-conformal mappings as described in~\eqref{case1}. Here, we set $\alpha_{2}=1$ and $\alpha_{3}=0.01$.

Consider a 2D example with 8 prescribed landmarks $A, B, C, D, E, F, G, H$ in the domain $[0,1]\times [0,1]$ (see Fig.~\ref{fig:landmark_2D}(a)). The landmarks are divided into 4 pairs $(A,B)$, $(C,D)$, $(E,F)$, $(G,H)$, and for each pair of landmarks we constrain the points to swap their positions under the quasi-conformal mapping, which involves a large deformation and could easily lead to overlaps. Using our proposed model, we can easily achieve a smooth transformation $\bm{y}$ that satisfies the landmark constraints (see Fig.~\ref{fig:landmark_2D}(b)). By computing the Jacobian determinant $\det\nabla\bm{y}$ as shown in Fig.~\ref{fig:landmark_2D}(c), we can see that $\det\nabla\bm{y}$ is always greater than 0 and hence the mapping is folding-free.

We then consider a 3D example with 8 prescribed landmarks $A, B, C, D, E, F, G, H$ in the domain $[0,1]\times [0,1]\times [0,1]$ (see Fig.~\ref{fig:landmark_3D}(a)). This time, the landmarks are divided into two groups $(A,B,C,D)$ and $(E,F,G,H)$, and for each group of landmarks we enforce the points to their neighbouring points to form two large-scale twists at two different angles. As shown in Fig.~\ref{fig:landmark_3D}(b)--(f), our model is capable of producing a smooth and folding-free 3D mapping that satisfies the landmark constraints (see also Supplementary Movie 1).

\begin{figure}[t!]
\centering
\includegraphics[width=\textwidth]{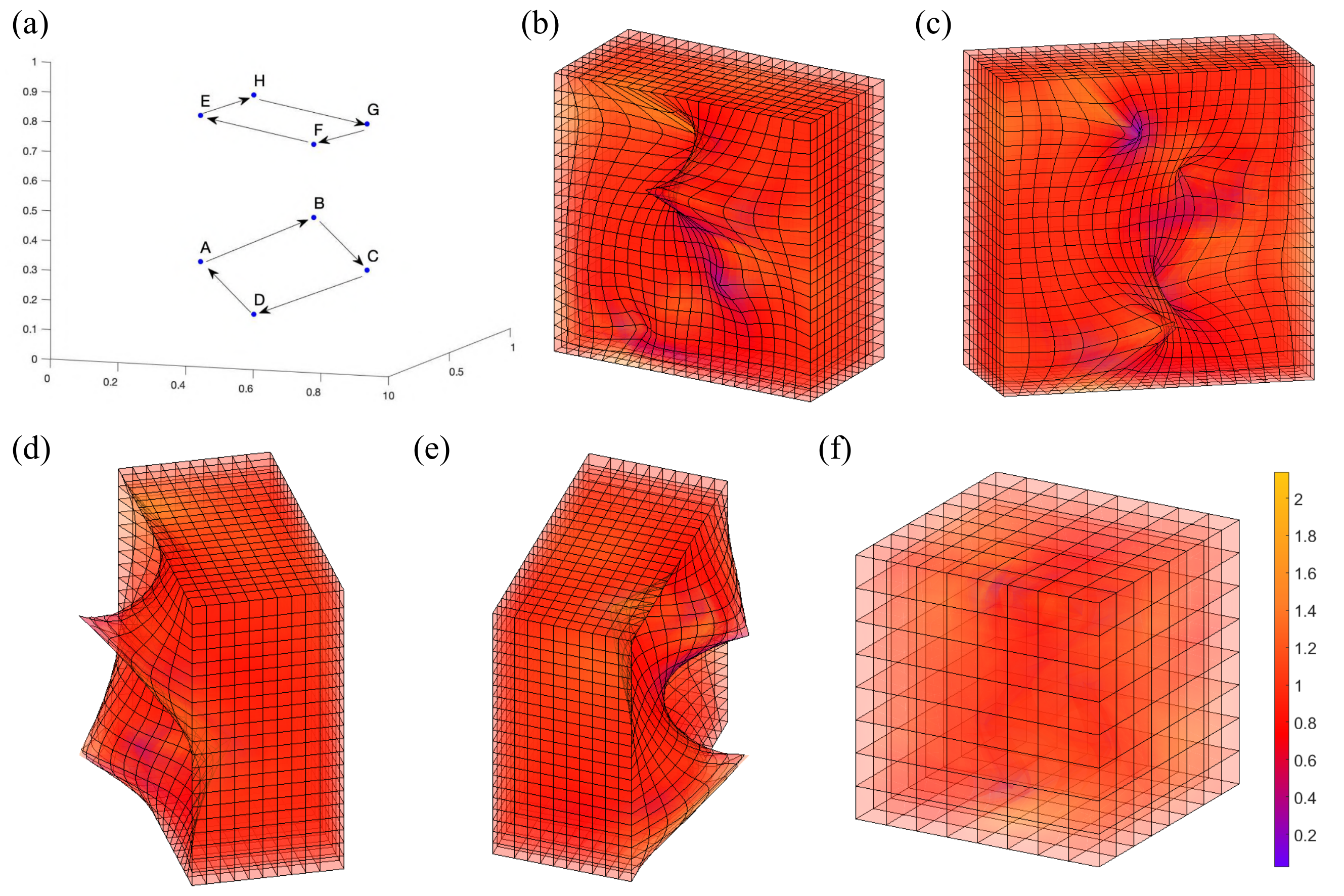}
\caption{An example of landmark-constrained 3D quasi-conformal mapping achieved by our proposed framework. (a) The 8 landmarks with their target positions indicated by the arrows. (b) A cross-sectional view of the transformation $\bm{y}$. (c)--(e) Several rotated versions of (b) for visualizing the large 3D deformation (see also Supplementary Movie 1). (f) The corresponding Jacobian determinant $\det\nabla\bm{y}$.}\label{fig:landmark_3D}
\end{figure}

\begin{figure}[t!]
\centering
\subfigure[Energy by the proposed algorithm.]{
\includegraphics[width=2.5in,height=2.0in]{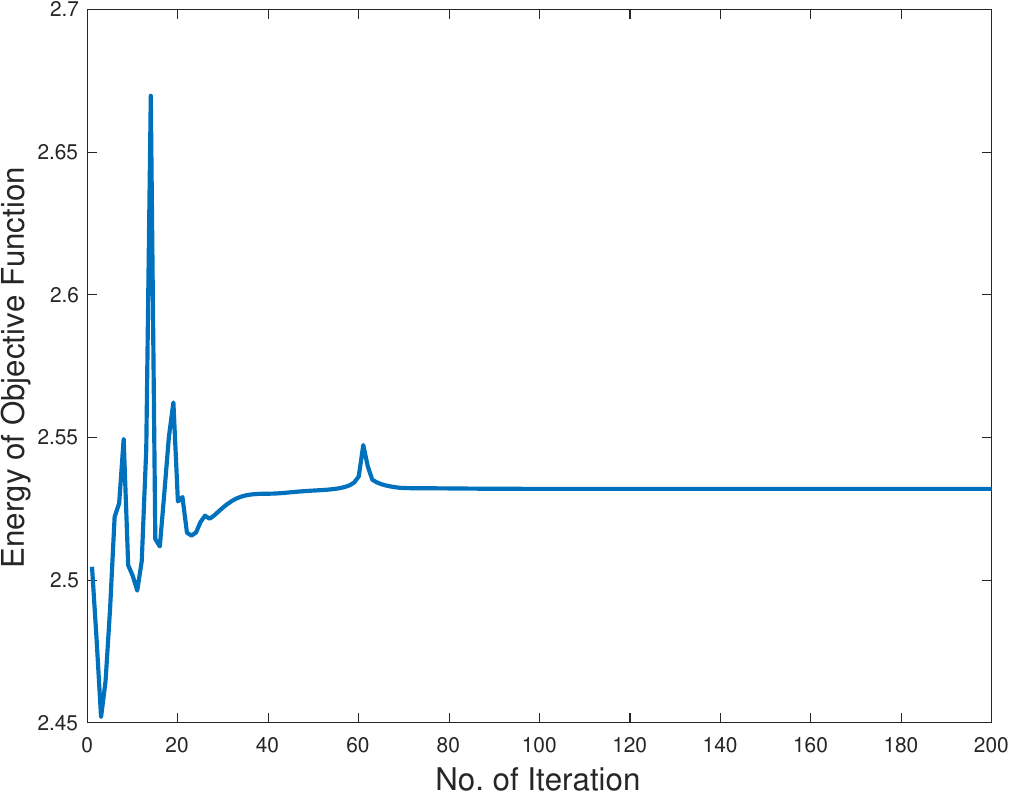}}
\subfigure[Energy by \cite{lee2016landmark}]{
\includegraphics[width=2.5in,height=2.0in]{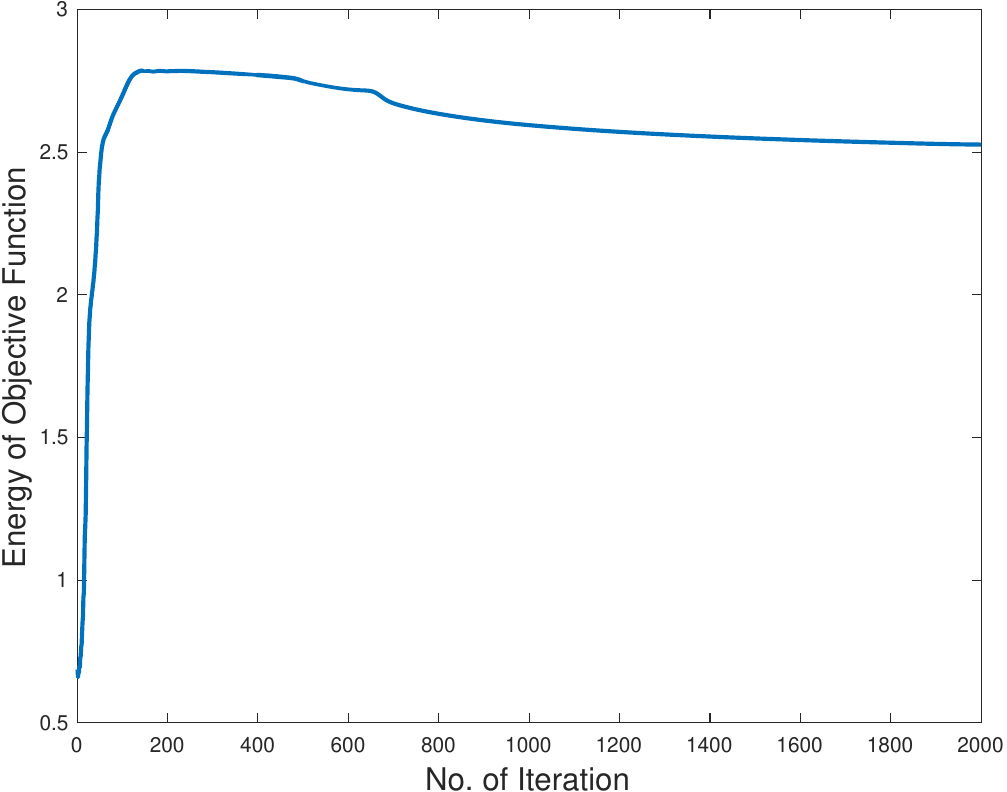}}\\
\subfigure[Violation of equality constraint by the proposed algorithm]{
\includegraphics[width=2.5in,height=2.0in]{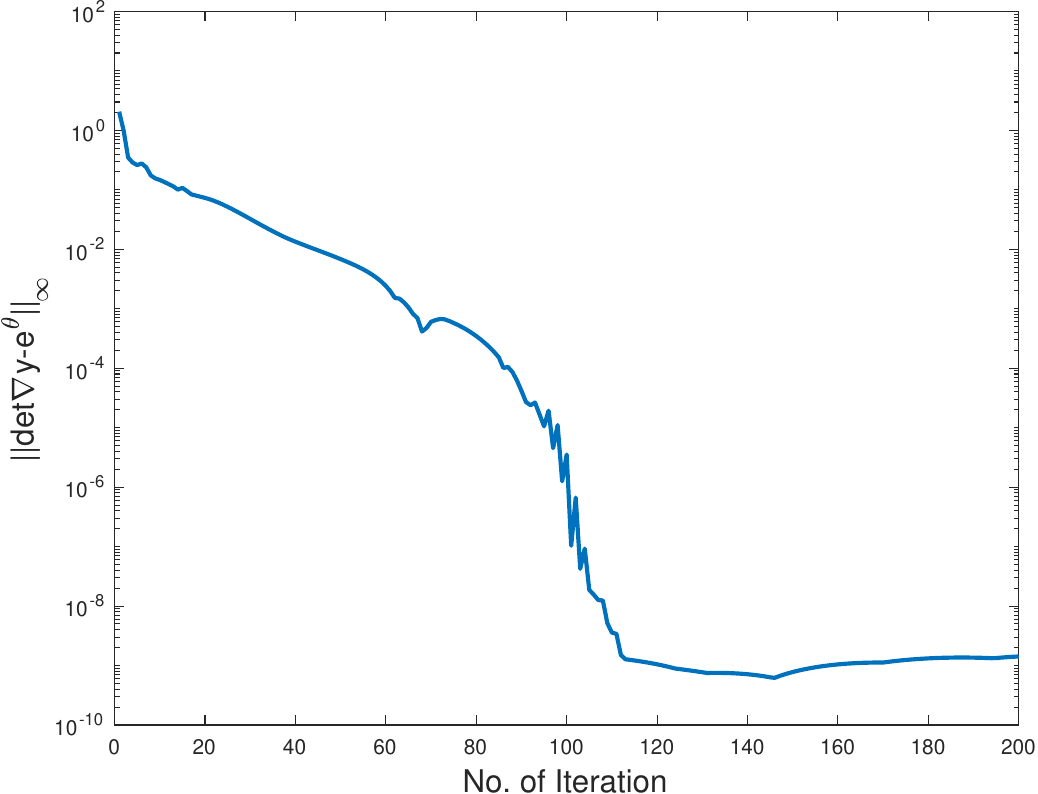}}
\subfigure[Violation of equality constraint by \cite{lee2016landmark}]{
\includegraphics[width=2.5in,height=2.0in]{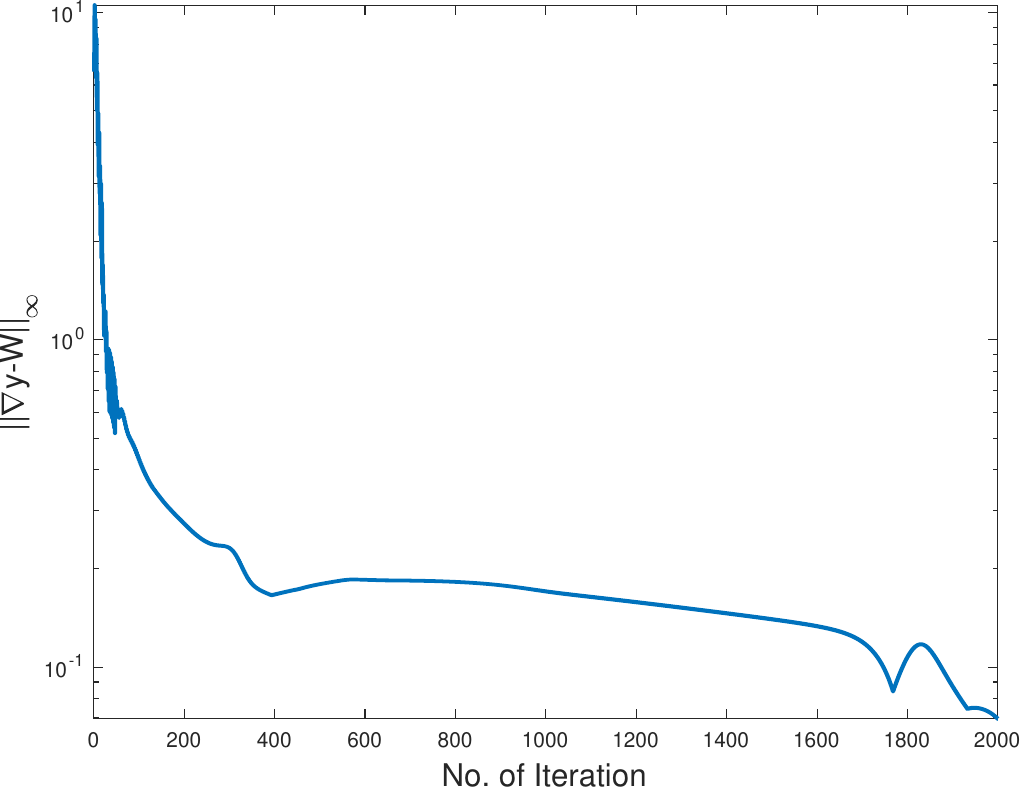}}
\caption{Comparison between the proposed algorithm and the algorithm in~\cite{lee2016landmark} for landmark-constrained $n$-dimensional quasi-conformal mapping. The first row displays the energy generated by the proposed algorithm (Algorithm~\ref{Alg:ADMM}) and the algorithm in \cite{lee2016landmark}. The second row gives the gaps of the equality constraint by these two algorithms. It is clear that these two algorithms achieve a similar final energy but the proposed algorithm converges much faster. More specifically, the proposed algorithm only takes about 100 iterations to reach $10^{-8}$ for the violation of the equality constraint with respect to the infinity norm, while the prior method in~\cite{lee2016landmark} requires over 2000 iterations and only reaches $10^{-1}$ for the constraint violation.}\label{fig:energy}
\end{figure}

It is natural to compare the performance of our model with the existing landmark-constrained $n$-dimensional quasi-conformal mapping method in~\cite{lee2016landmark}. For the algorithm in~\cite{lee2016landmark}, to apply ADMM, it introduces an auxiliary variable $\bm{v}$ such that $\bm{v}=\nabla\bm{y}$ and then substitutes $\bm{v}$ into the denominator of $K$ in \eqref{nDistortion}. A comparison between the performance of the proposed algorithm and the algorithm in~\cite{lee2016landmark} is provided in Fig.~\ref{fig:energy}, from which it can be observed that our proposed model outperforms the method~\cite{lee2016landmark}. More specifically, note that both algorithms are capable of achieving a similar final energy. However, for the violation of the equality constraint, the proposed algorithm only needs about 100 iterations to reach $10^{-8}$ with respect to the infinity norm while the method in~\cite{lee2016landmark} requires 2000 iterations and only reaches $10^{-1}$ for the constraint violation. In other words, the proposed method is significantly faster and hence more practical when compared to the prior method~\cite{lee2016landmark}. The main reason for the improvement achieved by our method is the use of the exponential term $\mathrm{e}^{\theta}$ for the Jacobian determinant of the transformation, which largely simplifies the computation for the energy minimization problem.

\subsection{Landmark- and intensity-based $n$-dimensional quasi-conformal registration}
Next, we test our proposed model for landmark- and intensity-based $n$-dimensional quasi-conformal registration as described in~\eqref{case2}. Here, we set $\alpha_{2}=1$, $\alpha_{3}=0.01$ and $\alpha_{5} = 10^4$.

Again, we start with a 2D example as shown in Fig.~\ref{fig:intensity_landmark_2D}. In this example, the goal is to obtain a registration between an ``I'' shape and a ``C'' shape (see Fig.~\ref{fig:intensity_landmark_2D}(a)--(b)). In addition to the intensity of the two letters, six landmarks are prescribed to represent the correspondence between the two shapes. This makes the proposed model~\eqref{case2} well-suited for computing the registration. As shown in Fig.~\ref{fig:intensity_landmark_2D}(c)--(d), the landmark- and intensity-based quasi-conformal registration model gives a smooth and accurate registration result with the overall shape as well as the landmarks well-matched. From Fig.~\ref{fig:intensity_landmark_2D}(e), we see that the registration is bijective. For comparison, we consider a purely intensity-based registration without any prescribed landmarks (Fig.~\ref{fig:intensity_landmark_2D}(f)--(h)). It can be observed that while the overall intensity is matched in the registration result, the corners of the ``C'' shape is not perfectly matched. 

\begin{figure}[t!]
\centering
\subfigure[Template with landmarks]{
\includegraphics[width=1.8in,height=1.8in]{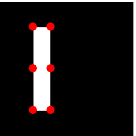}}
\subfigure[Reference with landmarks]{
\includegraphics[width=1.8in,height=1.8in]{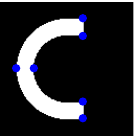}} \\
\subfigure[$T(\bm{y})$ with landmarks]{
\includegraphics[width=1.8in,height=1.8in]{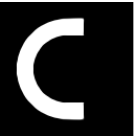}}
\subfigure[$\bm{y}$ with landmarks]{
\includegraphics[width=1.8in,height=1.8in]{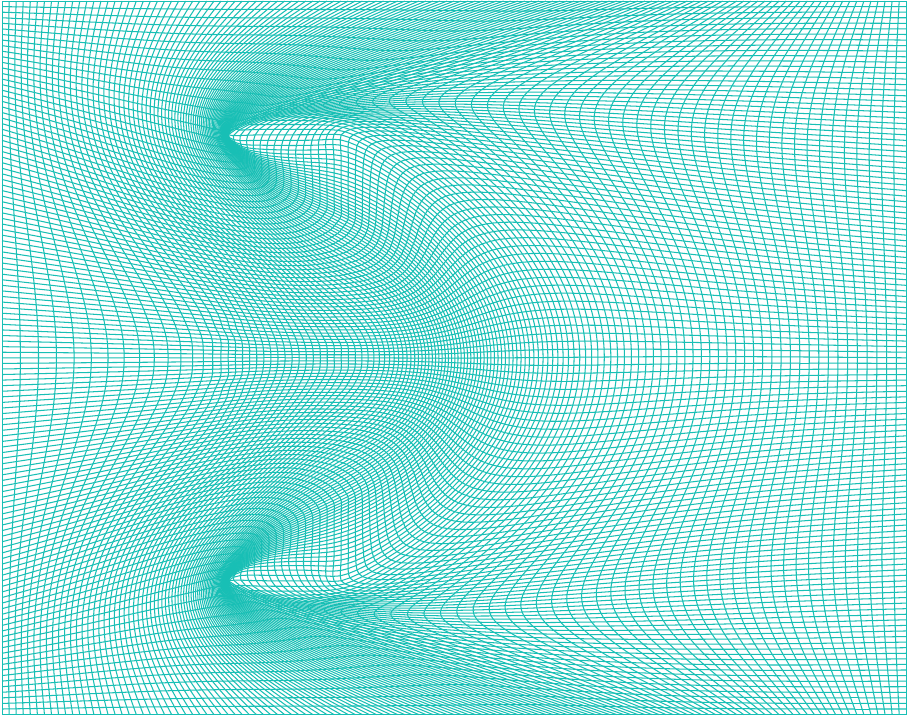}}
\subfigure[$\det\nabla\bm{y}$ with landmarks]{
\includegraphics[width=1.8in,height=1.8in]{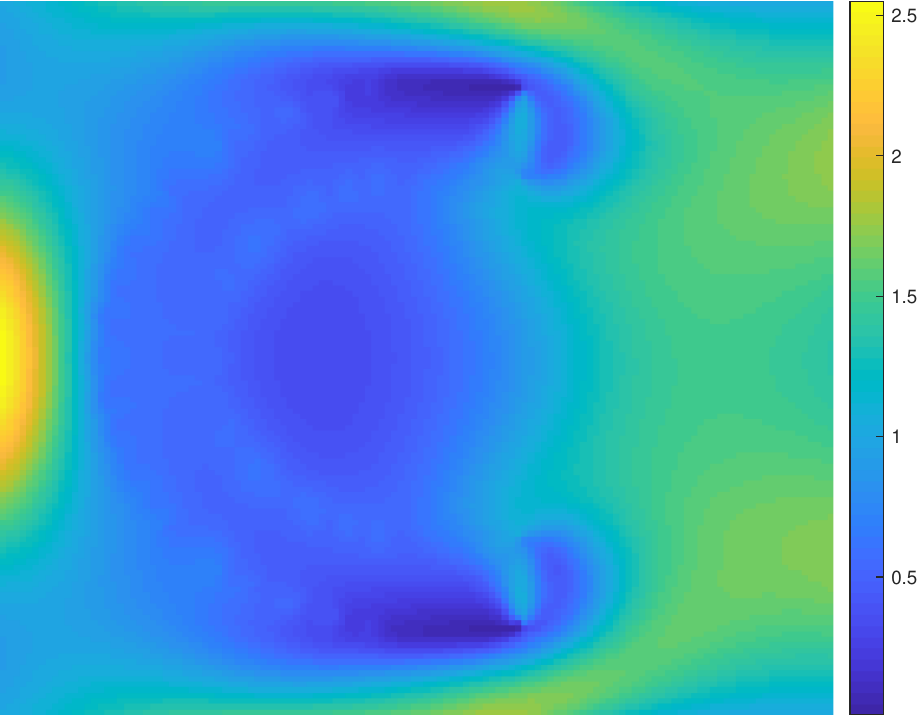}} \\
\subfigure[$T(\bm{y})$ without landmarks]{
\includegraphics[width=1.8in,height=1.8in]{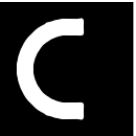}}
\subfigure[$\bm{y}$ without landmarks]{
\includegraphics[width=1.8in,height=1.8in]{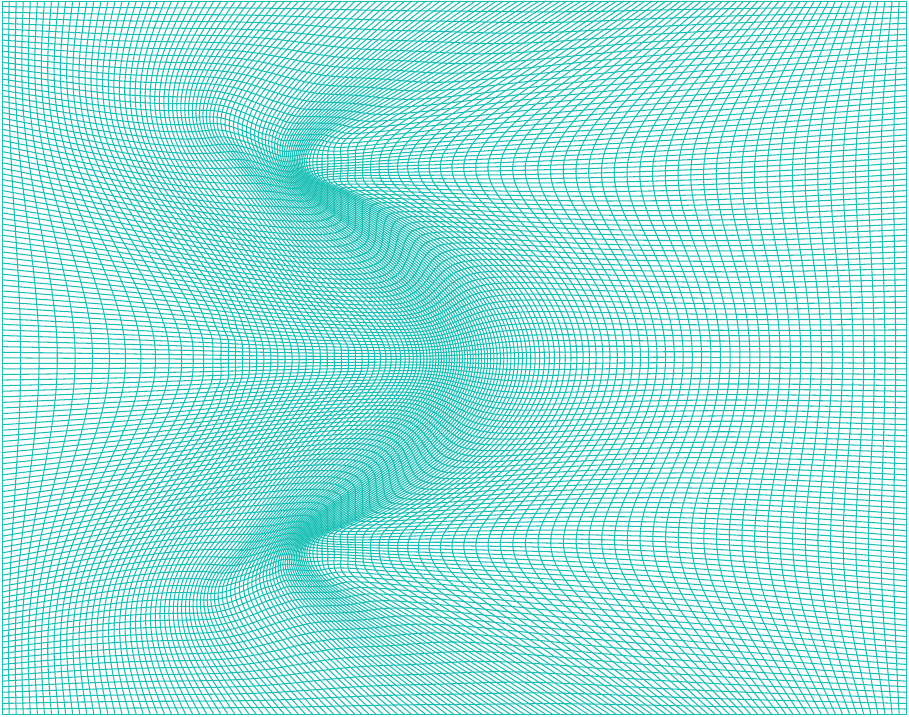}}
\subfigure[$\det\nabla\bm{y}$ without landmarks]{
\includegraphics[width=1.8in,height=1.8in]{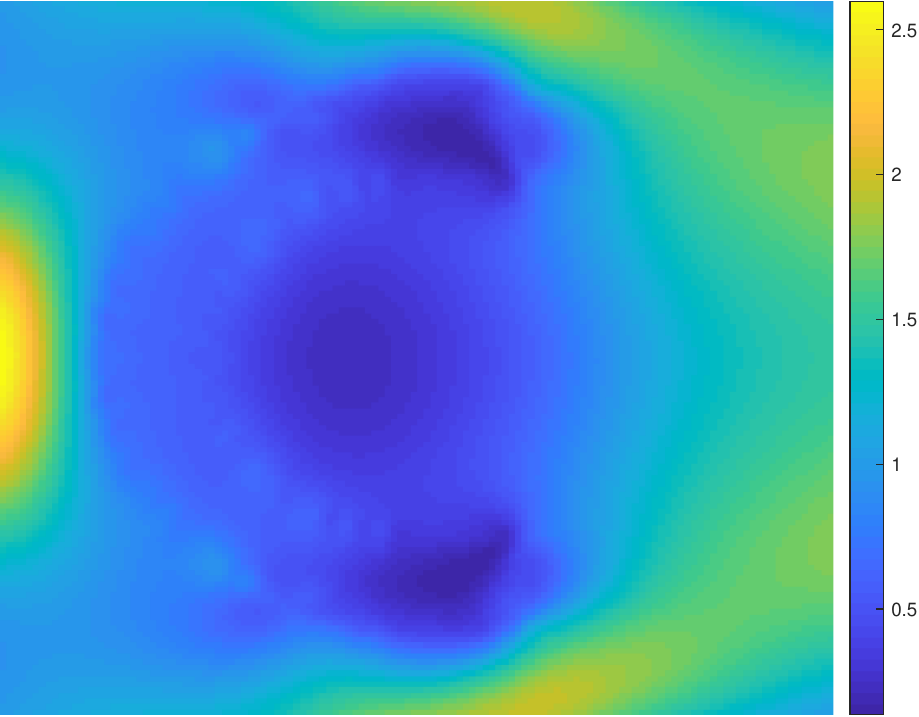}}
\caption{An example of landmark- and intensity-based 2D quasi-conformal registration achieved by our proposed framework. The first row gives the template and reference with landmarks. The second and third rows shows the deformed template, transformation and the corresponding Jacobian determinant  with and without landmarks respectively.}\label{fig:intensity_landmark_2D}
\end{figure}

\begin{figure}[t!]
\centering
\includegraphics[width=\textwidth]{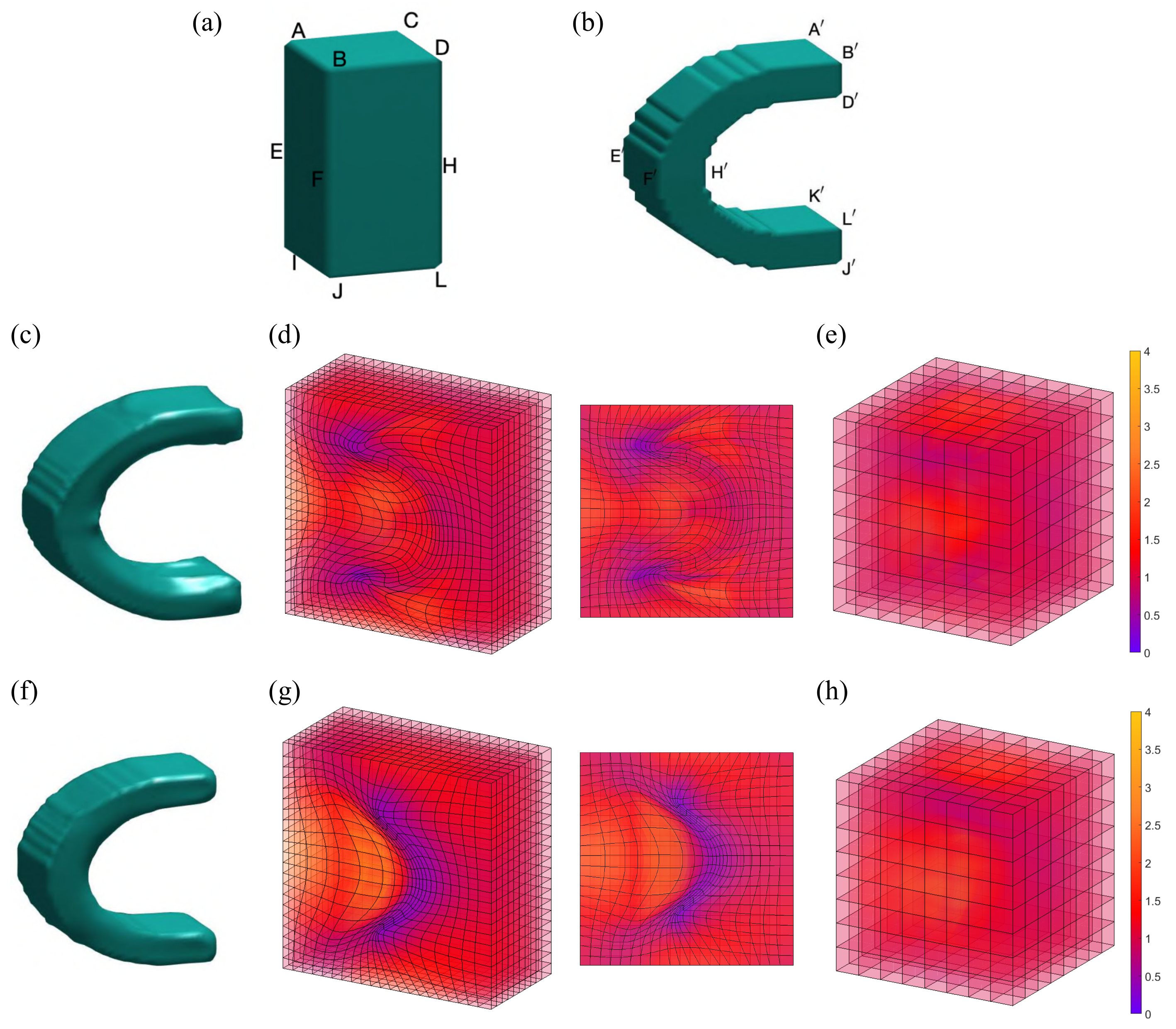}
\caption{An example of landmark- and intensity-based 3D quasi-conformal registration achieved by our proposed framework. (a) The template with landmarks. (b) The reference with landmarks. (c)--(e) The deformed template, a cross-sectional view of transformation $T(\bm{y})$, and the corresponding Jacobian determinant $\det\nabla\bm{y}$ obtained using the landmark- and intensity-based model. (f)--(h) The result obtained using a purely intensity-based model without landmark constraints.}\label{fig:intensity_landmark_3D}
\end{figure}

We then consider a 3D analog of the ``I'' to ``C'' problem as shown in Fig.~\ref{fig:intensity_landmark_3D}. This time, the template shape is a rectangular solid that represents an ``I'' shape, and the reference shape is a solid ``C'' shape. Similar to the 2D case, 12 landmarks are prescribed to represent the correspondence between the two shapes. Using the proposed model, we can compute a landmark- and intensity-based 3D quasi-conformal registration between the two shapes (Fig.~\ref{fig:intensity_landmark_3D}(c)--(e)). Even for this large deformation problem, our model is capable of producing a smooth and accurate registration result with the overall shape as well as the landmarks well-matched, and with bijectivity ensured. On the contrary, if we use a purely intensity-based model, the mapping will not be able to match the endpoints of the ``C'' shape perfectly (see Fig.~\ref{fig:intensity_landmark_3D}(f)--(h)). This demonstrates the strength of the proposed model in combining landmarks and intensity for producing an accurate registration.

\begin{figure}[t!]
\centering
\subfigure[]{
\includegraphics[width=1.4in,height=1.4in]{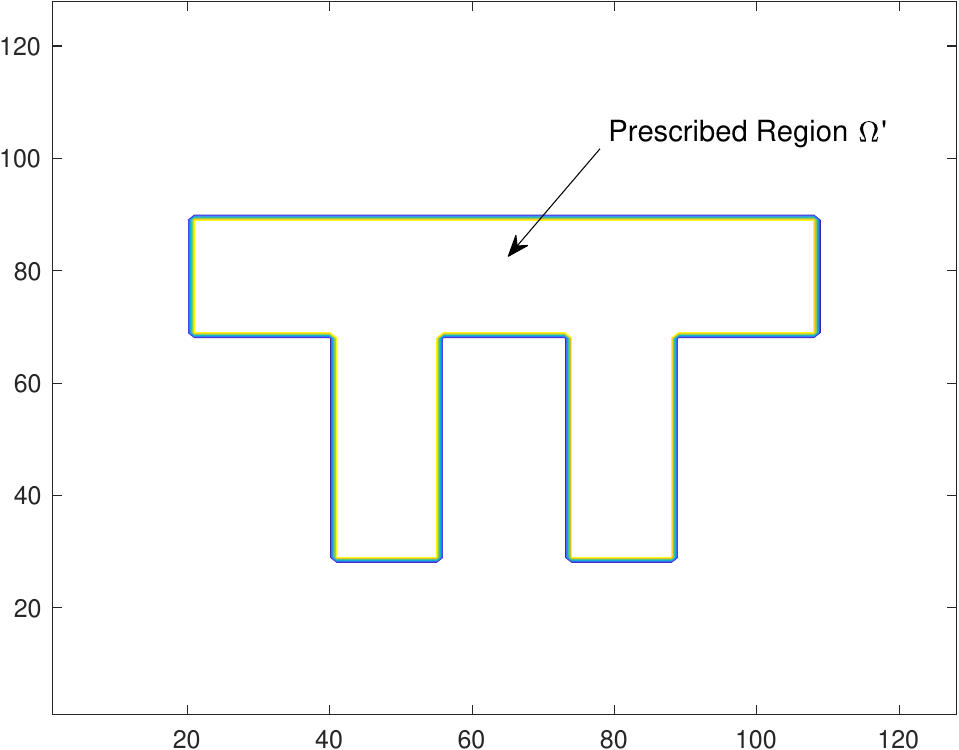}
}
\subfigure[]{
\includegraphics[width=1.4in,height=1.4in]{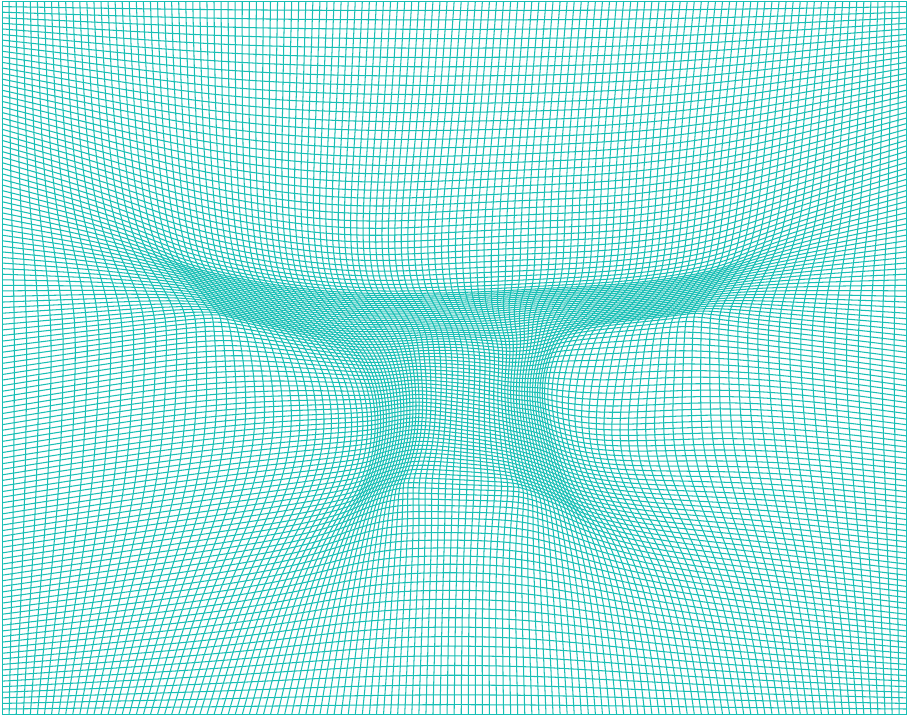}
\includegraphics[width=1.4in,height=1.4in]{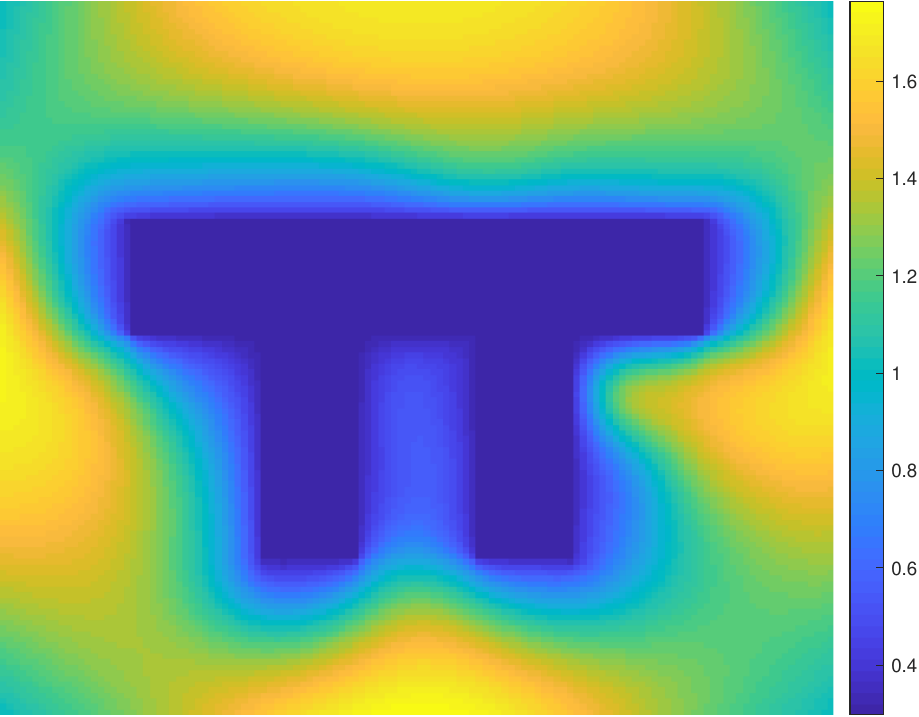}
\includegraphics[width=1.4in,height=1.4in]{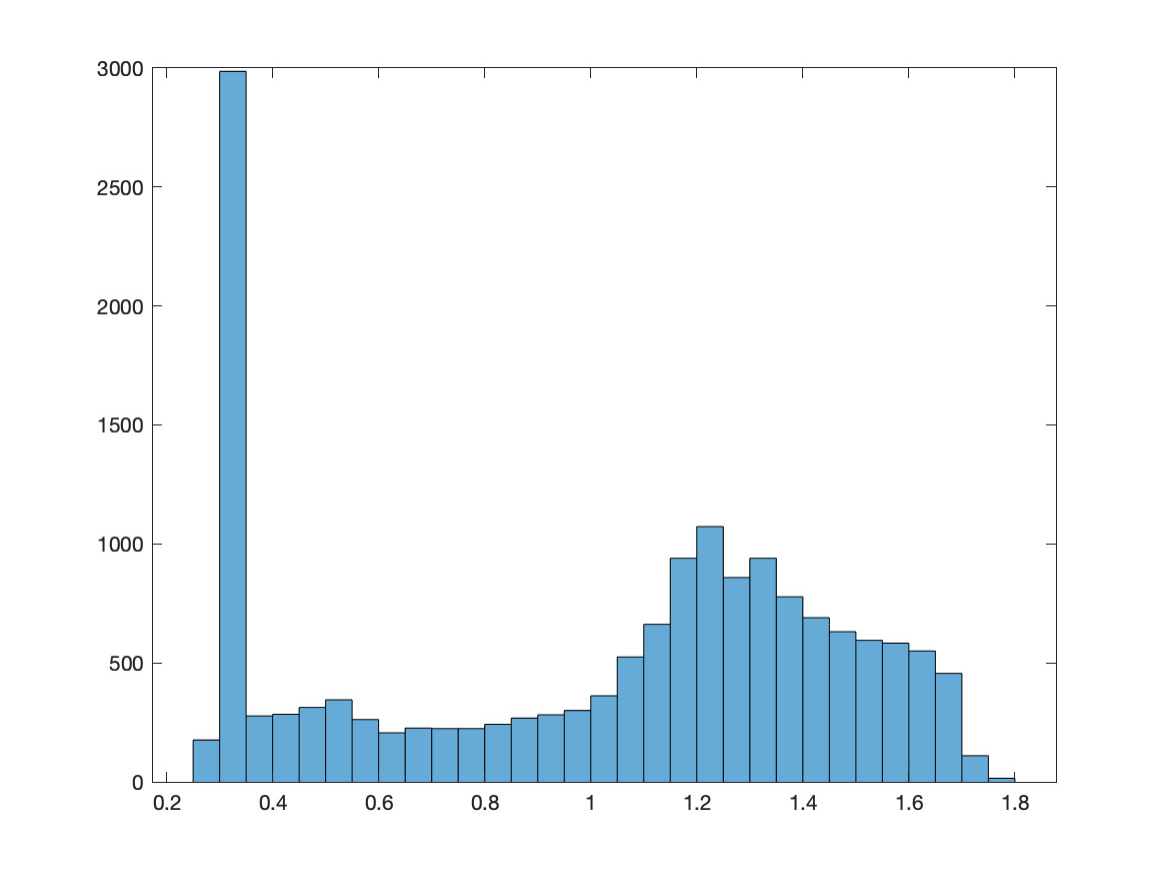}}
\subfigure[]{
\includegraphics[width=1.4in,height=1.4in]{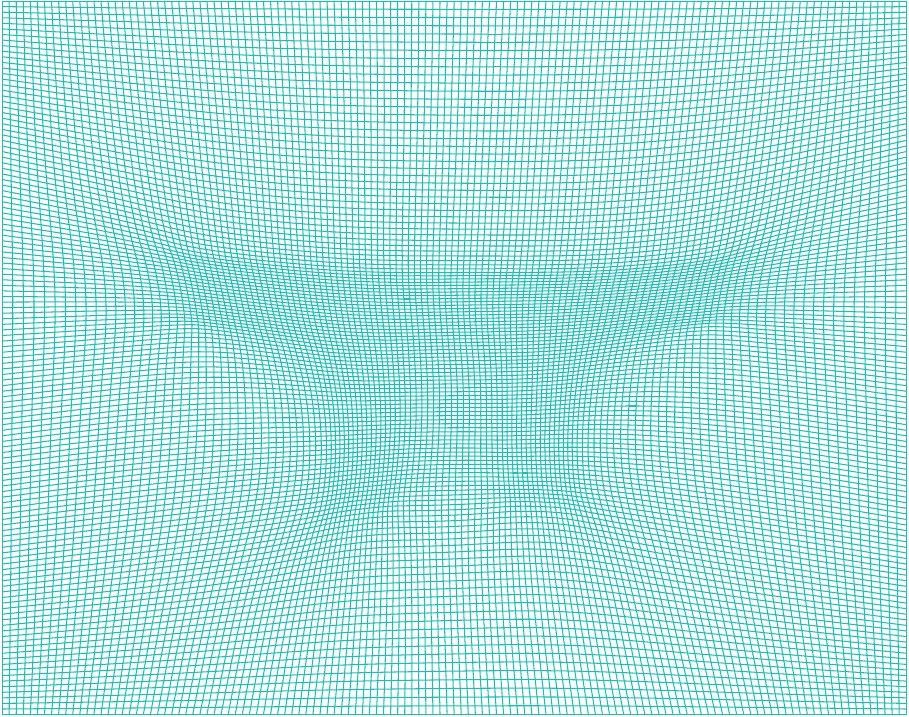}
\includegraphics[width=1.4in,height=1.4in]{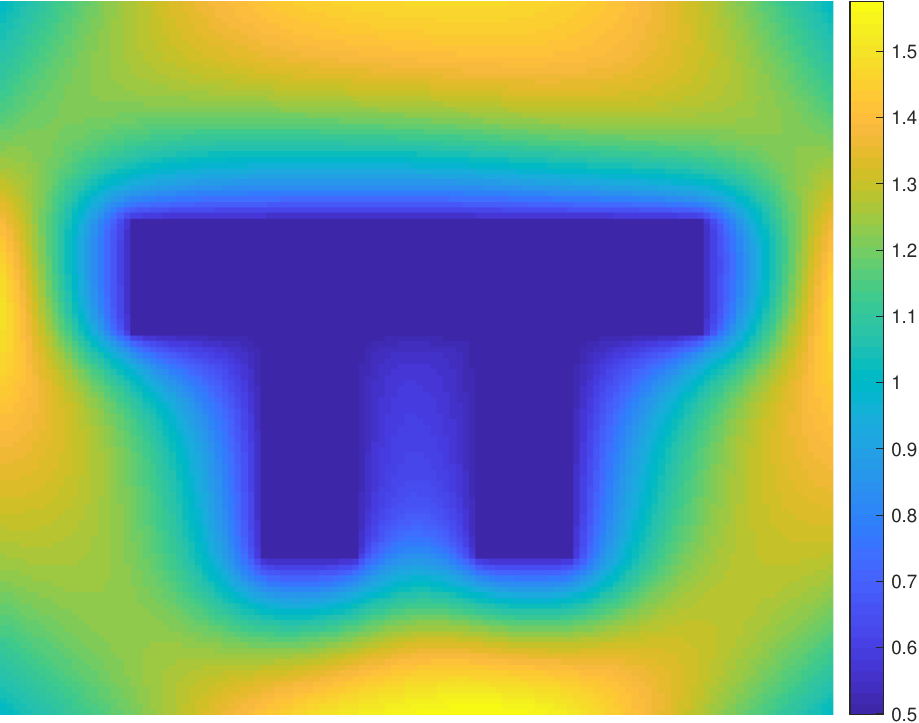}
\includegraphics[width=1.4in,height=1.4in]{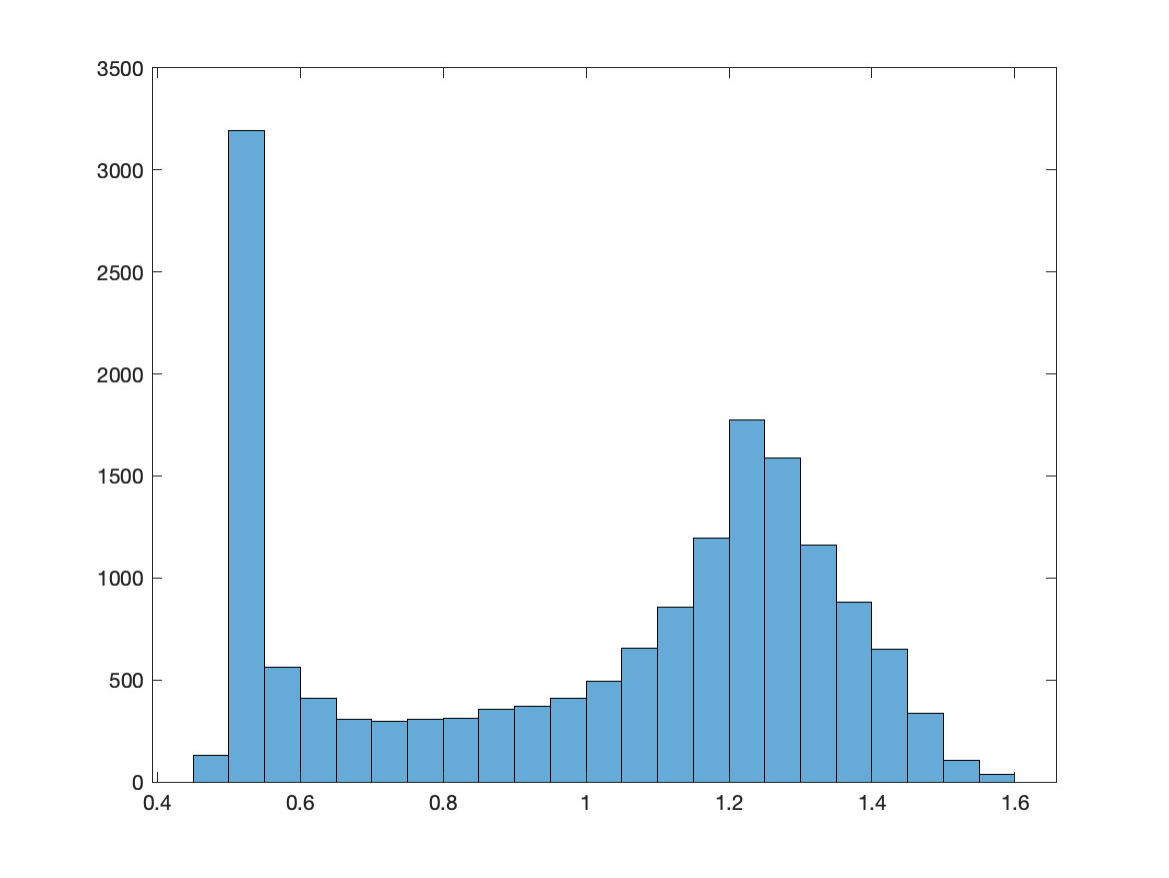}}
\subfigure[]{
\includegraphics[width=1.4in,height=1.4in]{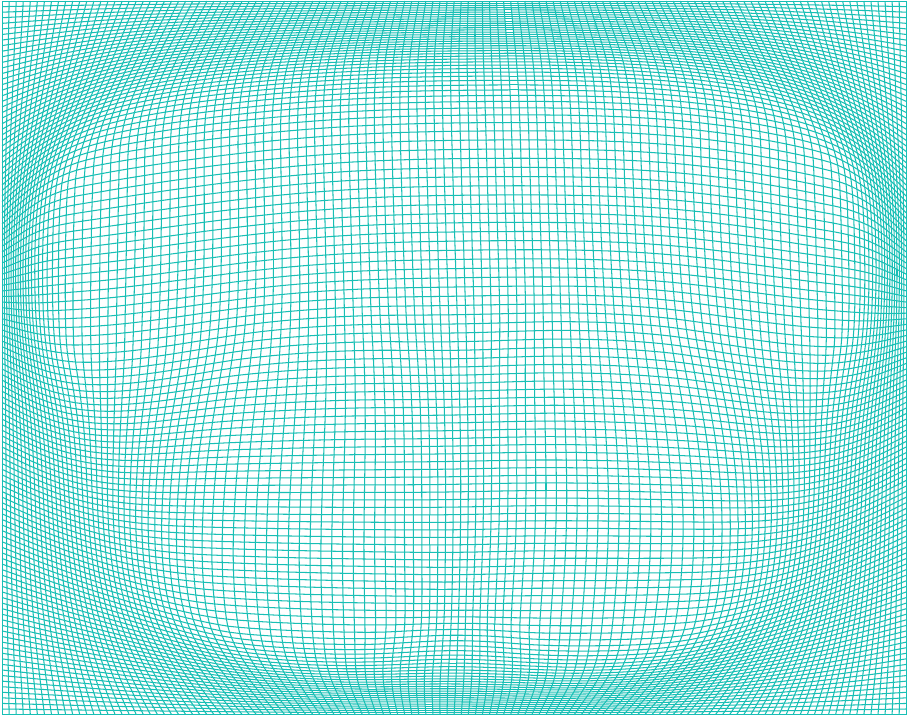}
\includegraphics[width=1.4in,height=1.4in]{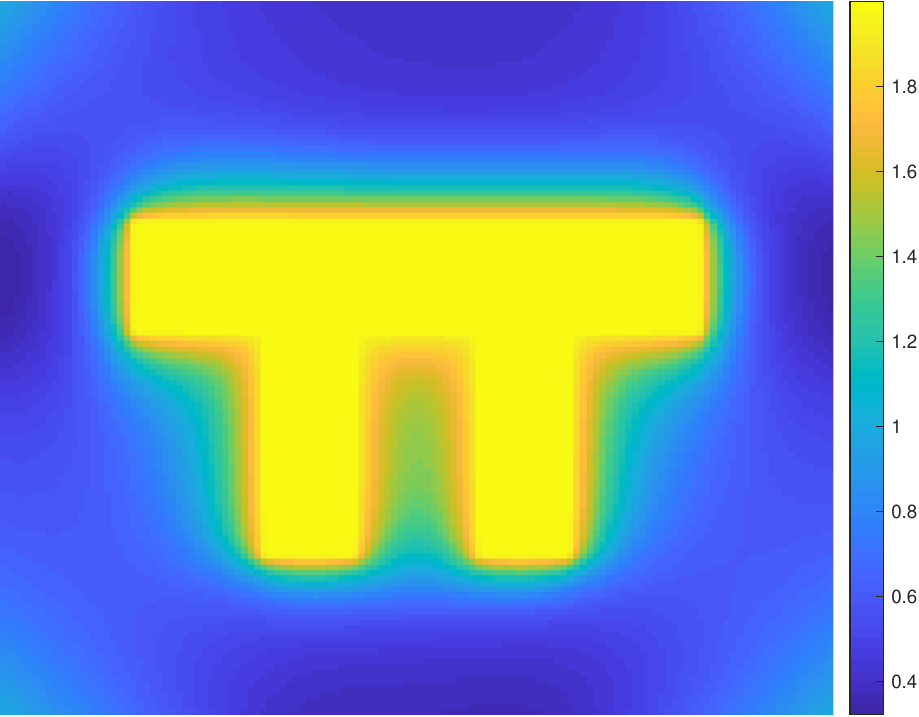}
\includegraphics[width=1.4in,height=1.4in]{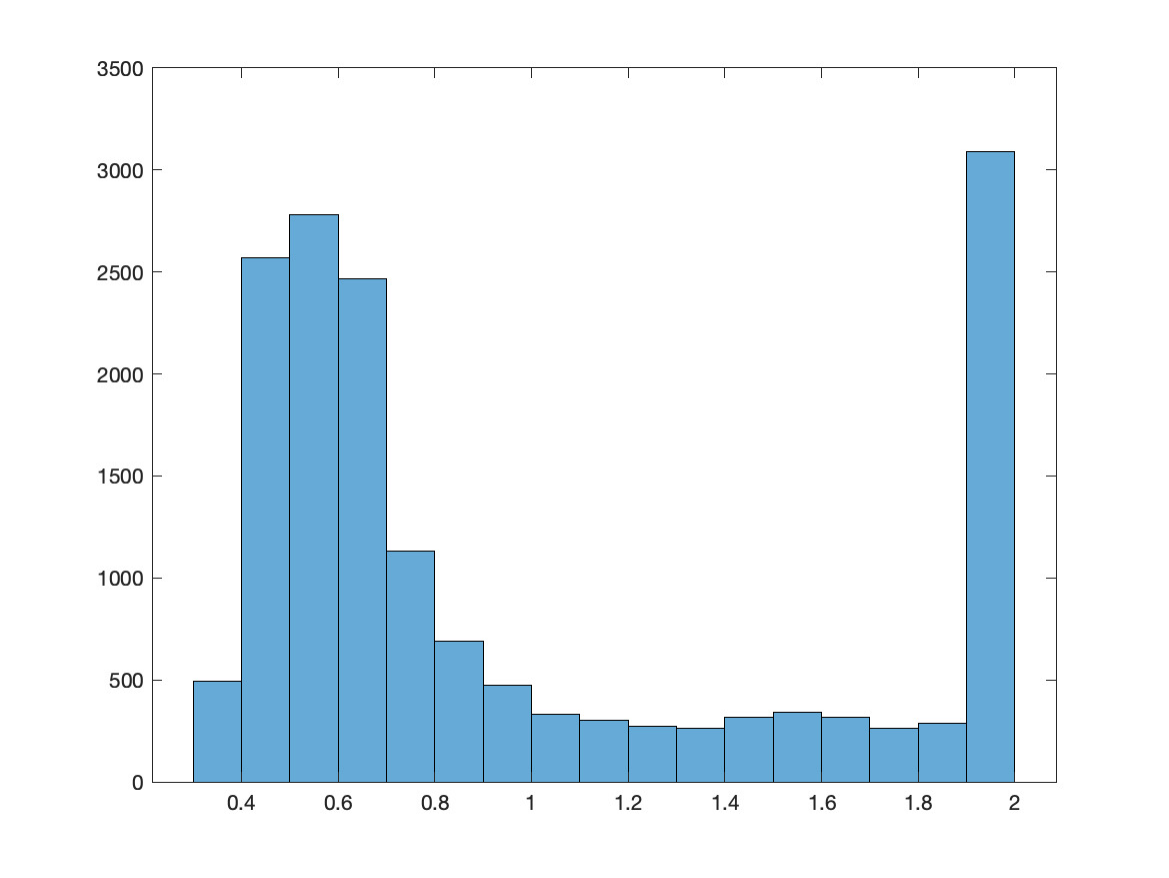}}
\subfigure[]{
\includegraphics[width=1.4in,height=1.4in]{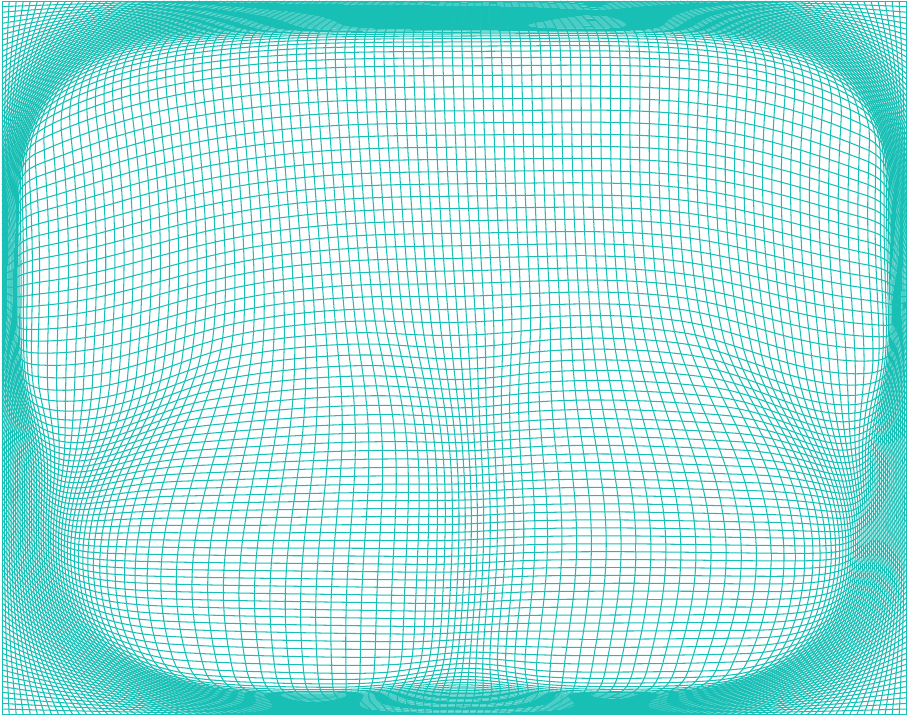}
\includegraphics[width=1.4in,height=1.4in]{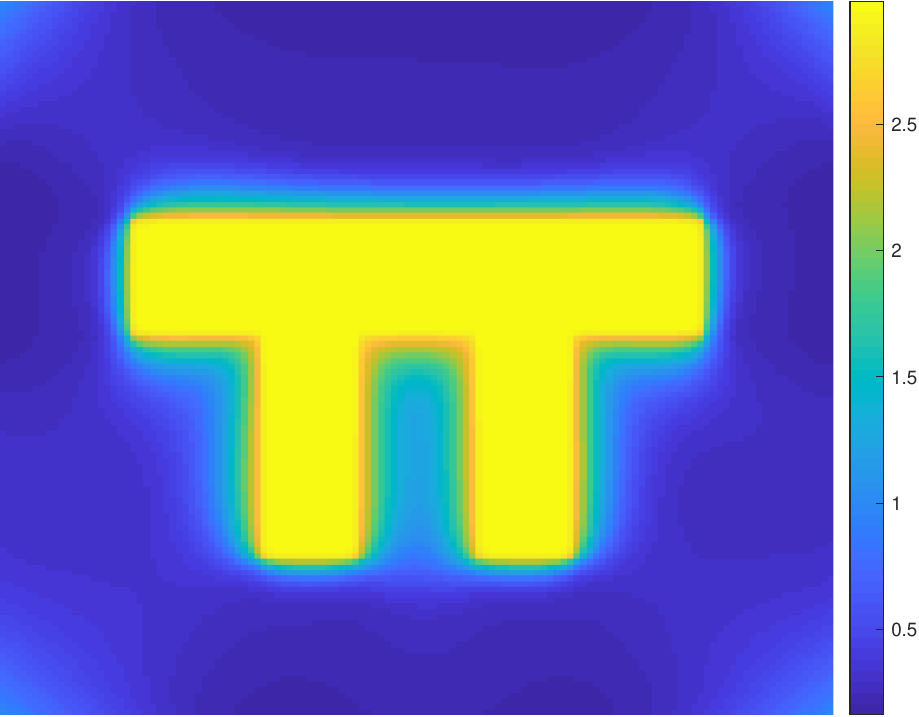}
\includegraphics[width=1.4in,height=1.4in]{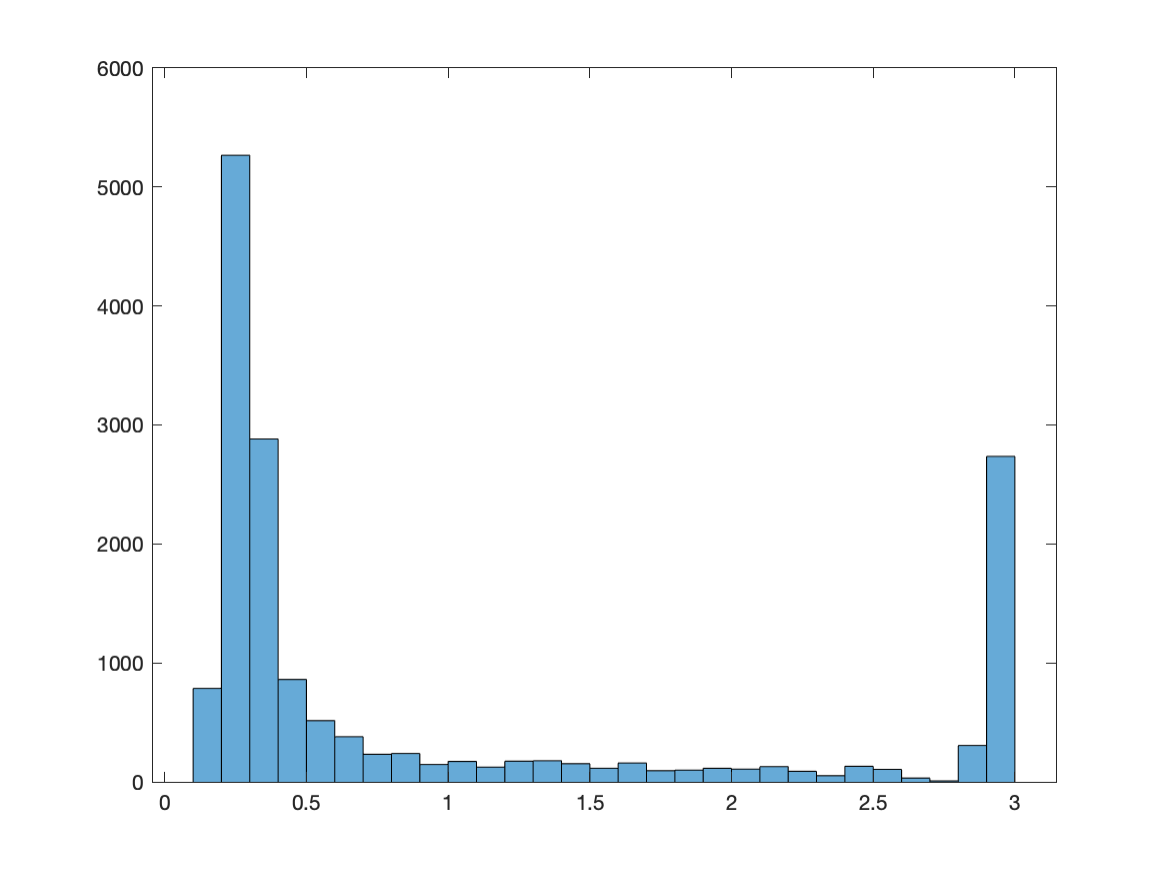}}
\caption{An example of quasi-conformal mapping with volume prior and optimized volumetric distortion in 2D. (a) A 2D domain with a prescribed $\pi$-shaped region. (b)--(e) The results with volume prior $\theta=\ln{0.3}$, $\ln{0.5}$, $\ln{2}$ and $\ln{3}$ for the specific region $\Omega'$. For each volume prior, the resulting transformation, the Jacobian determinant map and its corresponding histograms are shown.}\label{fig:case3_2D}
\end{figure}

\begin{figure}[t]
    \centering
\includegraphics[width=\textwidth]{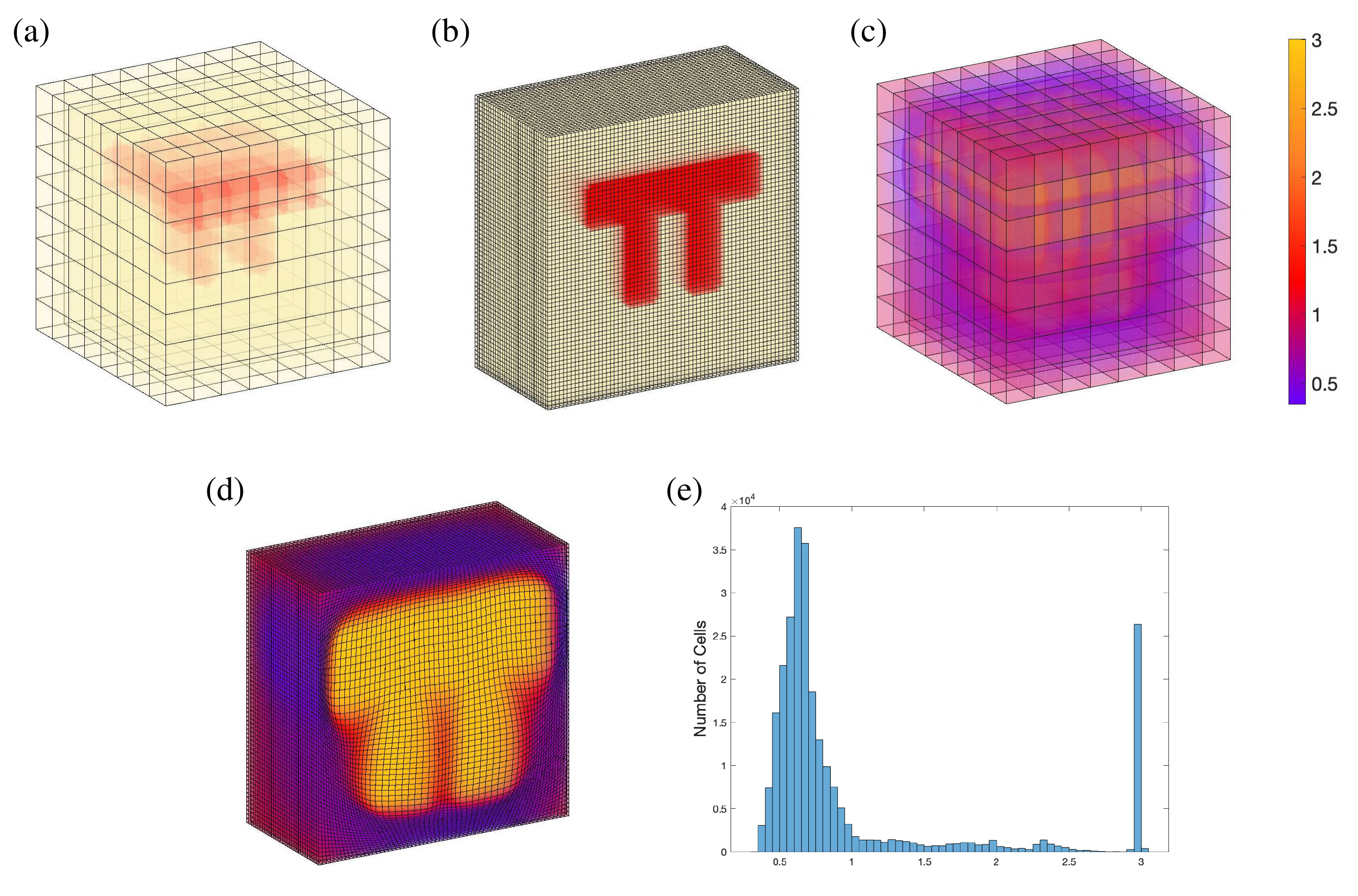}
    \caption{An adaptive remeshing example in 3D. (a) The initial volumetric domain, with the region for which we specify a volume prior of $\ln(3)$ highlighted in red. For better visualization, only a part of the grid lines are shown. (b) A cross-sectional view of the domain in (a) with all grid lines shown. (c) The 3D quasi-conformal mapping result color-coded with the Jacobian determinant. For better visualization, only a part of the grid lines are shown. (d) A cross-sectional view of the result in (c) with all grid lines shown. (e) A histogram of the Jacobian determinant.}
    \label{fig:case3_3D}
\end{figure}

\subsection{$n$-dimensional quasi-conformal mapping with volume prior and optimized volumetric distortion}
We now test the proposed model \eqref{case3}. In the following experiments, we set $\alpha_{1} = 1$, $\alpha_{2}=1$, $\alpha_{3}=0.1$ and $\alpha_{4} = 10^{5}$ in \eqref{case3}.

As shown in Fig.~\ref{fig:case3_2D}(a), here we consider a $\pi$-shaped region (denoted as $\Omega'$) in a 2D domain and set different volume priors for the specific region to produce different deformations. Here, we use four different volume priors: $\ln(0.3)$, $\ln(0.5)$, $\ln(2)$ and $\ln(3)$, with results presented in Fig.~\ref{fig:case3_2D}(b)--(e). For each experiment, the corresponding transformation, the Jacobian determinant map and the histogram of the Jacobian determinant are also shown. It can be observed that in all cases, the transformations are folding-free. The Jacobian determinant maps also illustrate that the proposed model \eqref{PM} indeed produce the desired transformations that satisfy the volume priors.

We then test the proposed model using another synthetic example in 3D. For this example, we again set a volume prior of $\ln(3)$ for a specific region in a solid domain (see Fig.~\ref{fig:case3_3D}(a)--(b)) and solve the proposed model to obtain a 3D transformation. This results in a deformed volumetric domain with the specified region is significantly expanded (see Fig.~\ref{fig:case3_3D}(c)--(d)). As shown in the histogram of the Jacobian determinant in Fig.~\ref{fig:case3_3D}(e), the quasi-conformal mapping satisfies the volume prior accurately.

\subsection{The most general model}
Here, we use a synthetic 3D example to highlight the advantage of the general model \eqref{PM}. As shown in Fig.~\ref{fig:general_model}(a)--(b), the template and reference are both cubes but the reference misses some parts, which represent occlusions. The eight outermost corner vertices are chosen as the landmarks. Fig.~\ref{fig:general_model}(c)--(f) show the mapping results using different models, from which we can clearly see that the general model \eqref{PM} gives the best result (Fig.~\ref{fig:general_model}(c)). More specifically, note that the intensity term helps match the deformed template and the reference. However, because of the missing part in the reference, the intensity term may cause some overfitting in matching the two shapes. By introducing the volume prior, we can prevent such overfitting effectively. Finally, the landmarks are useful for registering the salient features of the two shapes. As shown in Fig.~\ref{fig:general_model}(d)--(f), missing any of these components leads to an unsatisfactory result.

\begin{figure}[t!]
\centering
\includegraphics[width=\textwidth]{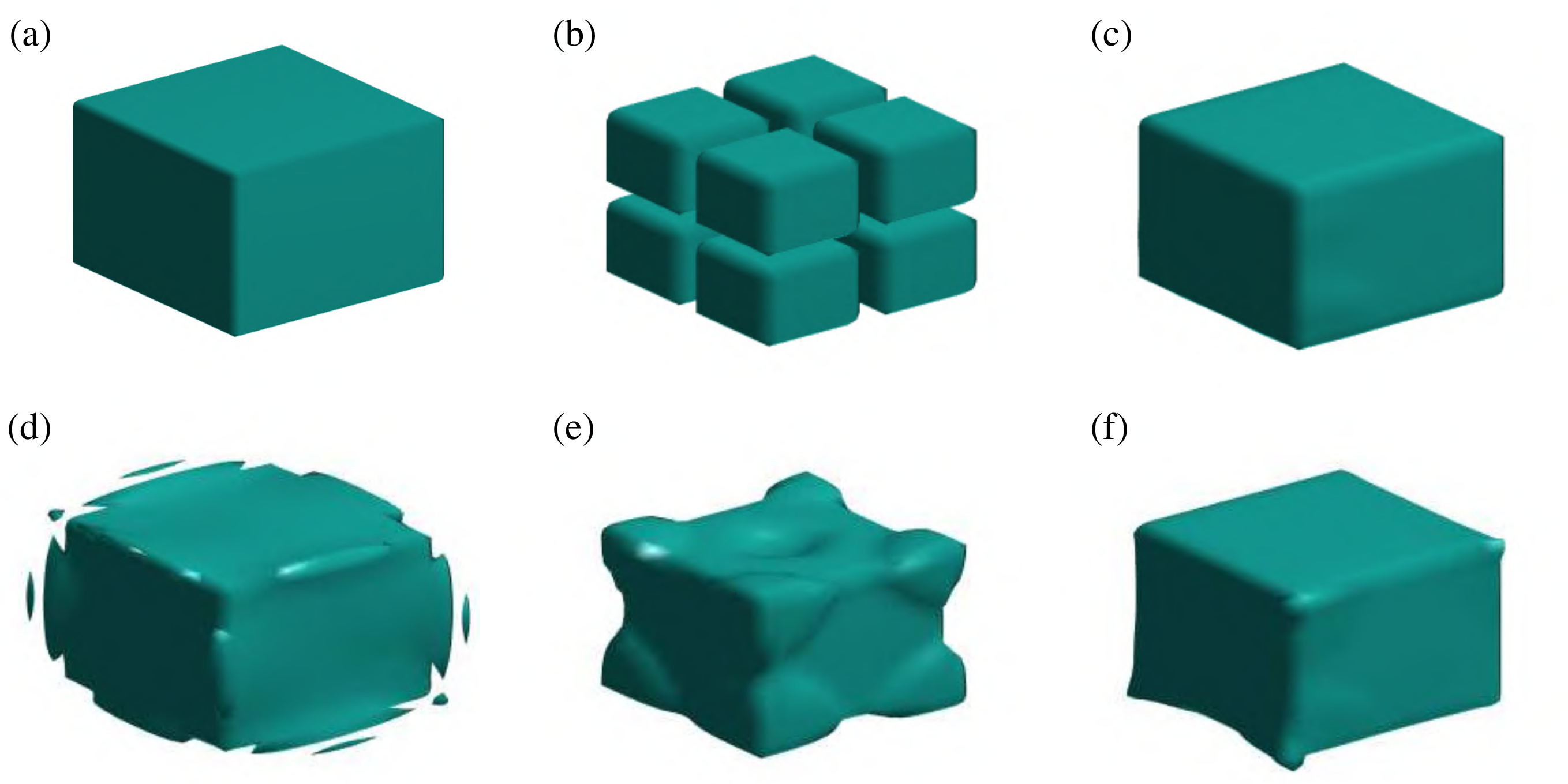}
\caption{An 3D example of quasi-conformal mapping with intensity, volume prior and landmarks. (a) The template. (b) The reference. (c) The registration result obtained using the general model\eqref{PM}. (d)--(f) The registration result obtained using the models without intensity, volume prior, and landmarks respectively. Here, the landmarks are the eight corner vertices of the cubes.}\label{fig:general_model}
\end{figure}

\begin{figure}[t!]
\centering
\includegraphics[width=0.95\textwidth]{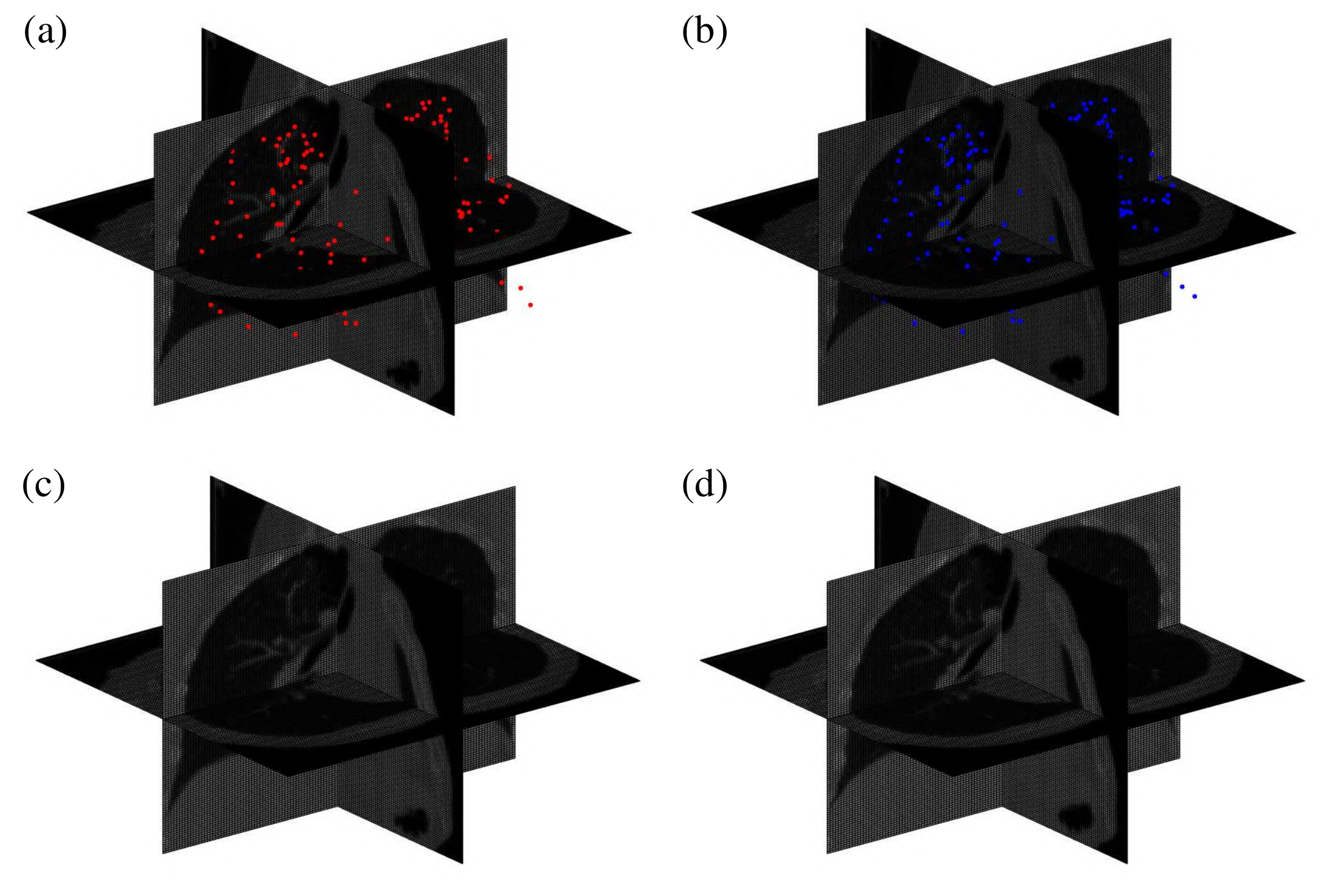}
\caption{Registration of 3D lung images. (a) The template image with landmarks. (b) The reference image with landmarks. (c) The deformed template generated by our proposed model with landmarks only (i.e. \eqref{case1}). (d) The deformed template generated by our model with both landmark and intensity considered (i.e. \eqref{case2}).}\label{fig:Lung1}
\end{figure}

\section{Applications} \label{sect:application}
After evaluating the performance of the proposed unifying framework using the above examples, we explore the applications of the framework in medical imaging, engineering and graphics.

\subsection{Medical image registration}
The prior landmark-constrained $n$-dimensional quasi-conformal mapping method~\cite{lee2016landmark} has been shown to be useful for medical image registration. Analogously, our proposed framework can be effectively applied to the registration of $n$-dimensional medical images. More specifically, given two medical images and some common prominent features in each of them, we can compute a landmark-matching quasi-conformal mapping between them using the proposed model in~\eqref{case1}. In addition, we also incorporate the intensity information and find a landmark- and intensity-matching quasi-conformal mapping by using the proposed model in~\eqref{case2}.

We test \eqref{case1} and \eqref{case2} using lung CT data with 300 pairs of landmarks as shown in Fig.~\ref{fig:Lung1}(a)--(b), which is freely available from the Deformable Image Registration Laboratory (\url{www.dir-lab.com}). We rescale the data into the size of $128\times128\times128$. For the parameters, we set $\alpha_{2} =1, \alpha_{3}=0.01$ for \eqref{case1} and $\alpha_{2} =1, \alpha_{3}=0.01, \alpha_{5} = 0.1$ for \eqref{case2}. The registration results for the two cases are displayed in Fig.~\ref{fig:Lung1}(c)--(d). Both of the registration results satisfy the landmark constraints, with the landmark mismatch error less than $10^{-6}$. This indicates that both models are capable of matching prescribed landmark features. To better compare the two registration results, we consider the axial, sagittal, and coronal views as shown in Fig.~\ref{fig:Lung2}. The reference and the template from the three views are first provided in Fig.~\ref{fig:Lung2}(a)--(b). We then consider their intensity difference as shown in Fig.~\ref{fig:Lung2}(c) and quantify the intensity mismatch by defining
\begin{equation}
    Re_{SSD} = \frac{\|T(\bm{y})-R\|_{2}^{2}}{\|T-R\|_{2}^{2}} \times 100\%,
\end{equation}
where $T$ and $R$ denote the template and reference images, and $\bm{y}$ is the deformation. This gives $Re_{SSD} = 100\%$ for the original difference (with $\bm{y} = Id$). For the registration result obtained by the landmark-based model~\eqref{case1} (Fig.~\ref{fig:Lung2}(d)--(e)), it is clear that the intensity is not well matched, with $Re_{SSD} = 75.74\%$. By contrast, for the result obtained by the landmark- and intensity-based model~\eqref{case2} (Fig.~\ref{fig:Lung2}(f)--(g)), it can be observed that the intensity mismatch is low (with $Re_{SSD} = 9.10\%$). This demonstrates the effectiveness of our framework for medical image registration.

\begin{figure}[t!]
\subfigure[Reference from three views]{
\includegraphics[width=0.95in,height=0.95in]{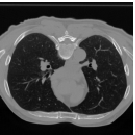}
\includegraphics[width=0.95in,height=0.95in]{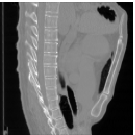}
\includegraphics[width=0.95in,height=0.95in]{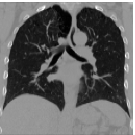}} \\
\subfigure[Template from  three views]{
\includegraphics[width=0.95in,height=0.95in]{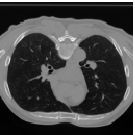}
\includegraphics[width=0.95in,height=0.95in]{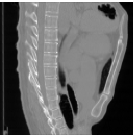}
\includegraphics[width=0.95in,height=0.95in]{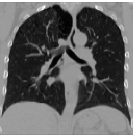}} 
\subfigure[Difference between $T$ and $R$, $Re_{SSD} = 100\%$]{
\includegraphics[width=0.95in,height=0.95in]{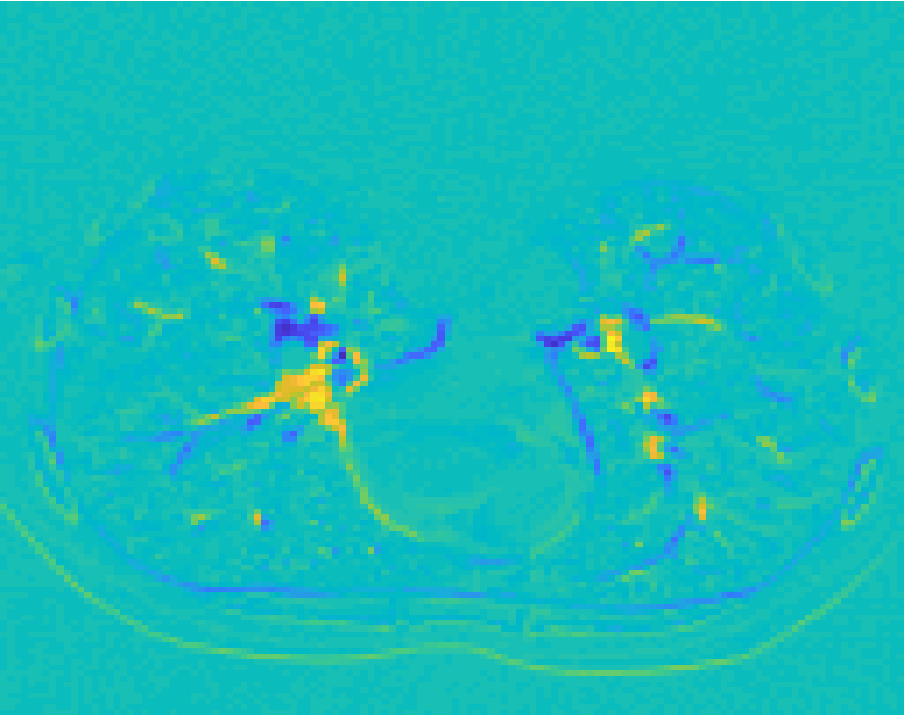}
\includegraphics[width=0.95in,height=0.95in]{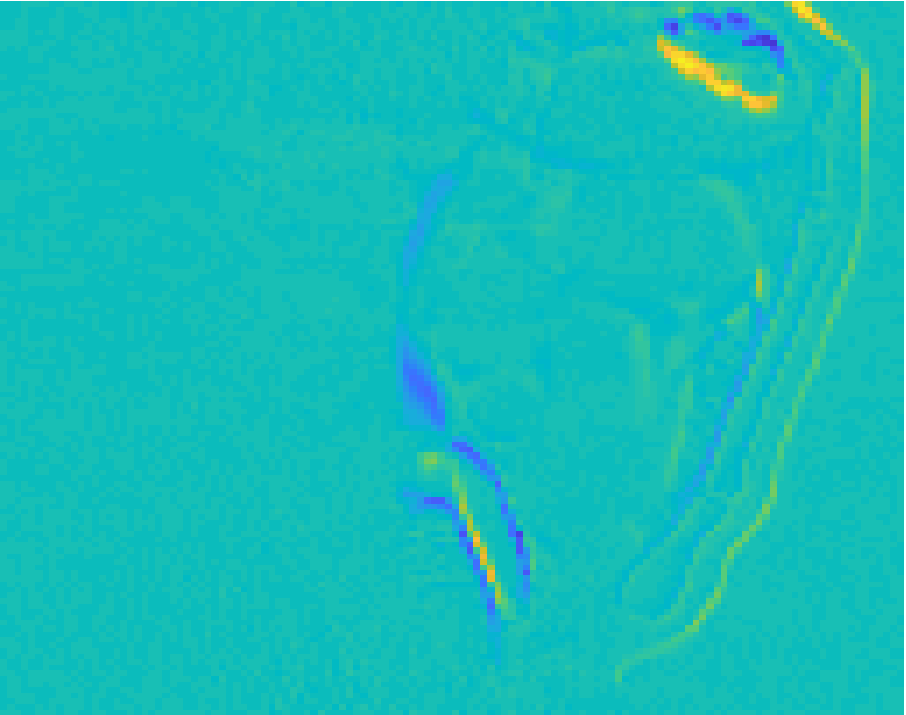}
\includegraphics[width=0.95in,height=0.95in]{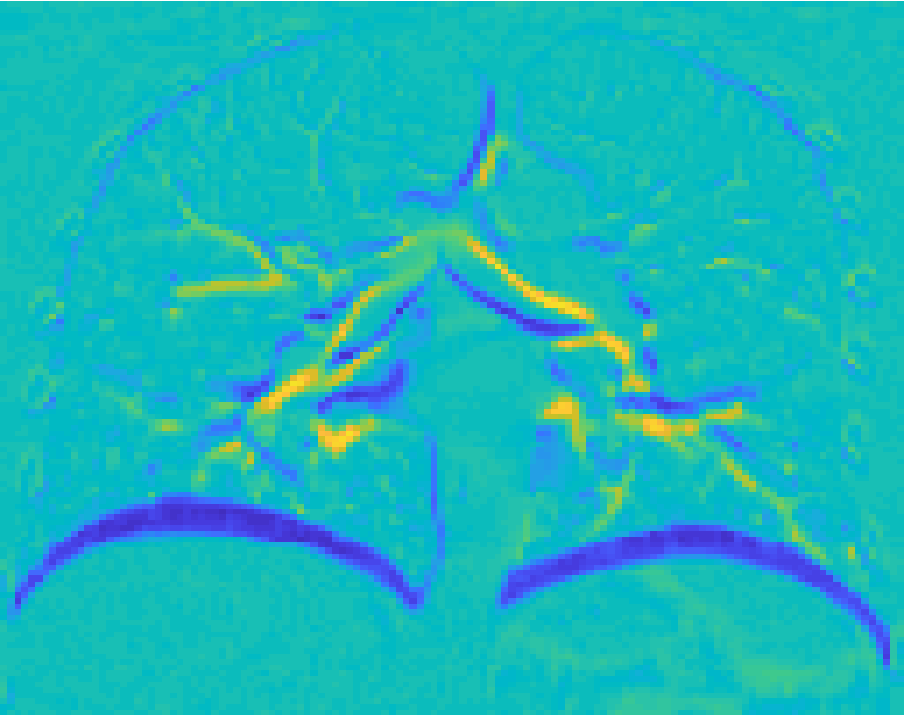}}  \\
\subfigure[$T(\bm{y})$ with landmarks from  three views]{
\includegraphics[width=0.95in,height=0.95in]{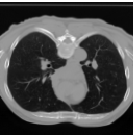}
\includegraphics[width=0.95in,height=0.95in]{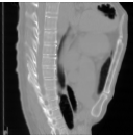}
\includegraphics[width=0.95in,height=0.95in]{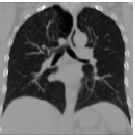}} 
\subfigure[Difference between $T(\bm{y})$ and $R$, $Re_{SSD} = 75.74\%$]{
\includegraphics[width=0.95in,height=0.95in]{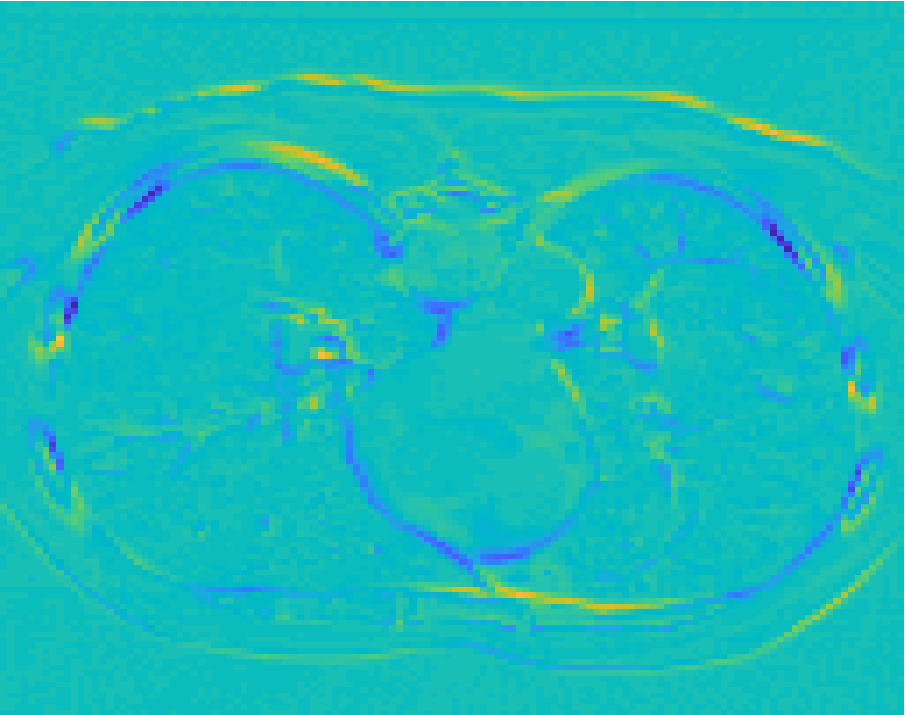}
\includegraphics[width=0.95in,height=0.95in]{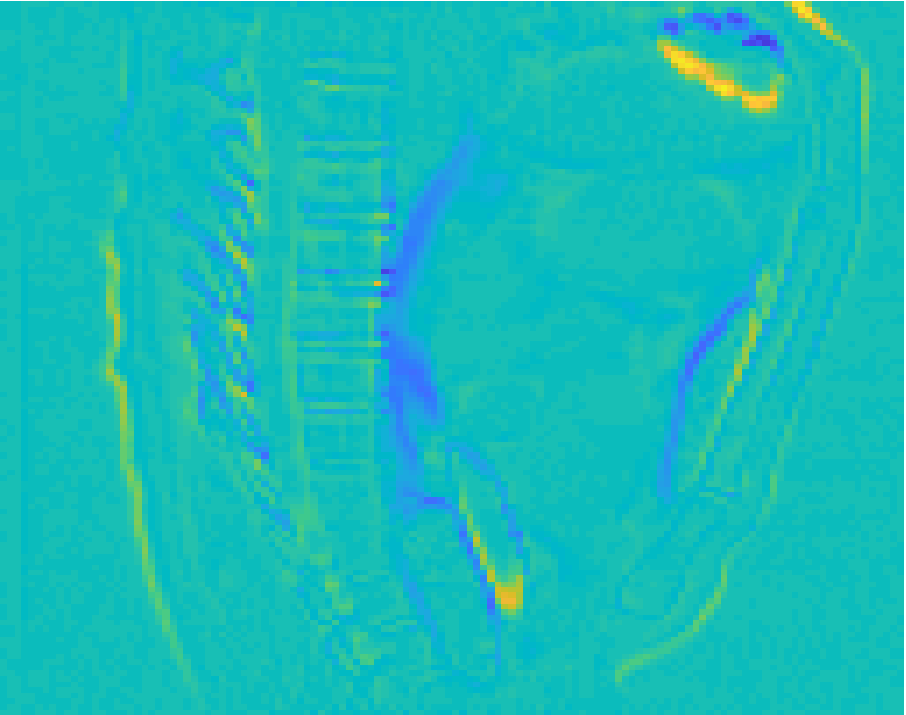}
\includegraphics[width=0.95in,height=0.95in]{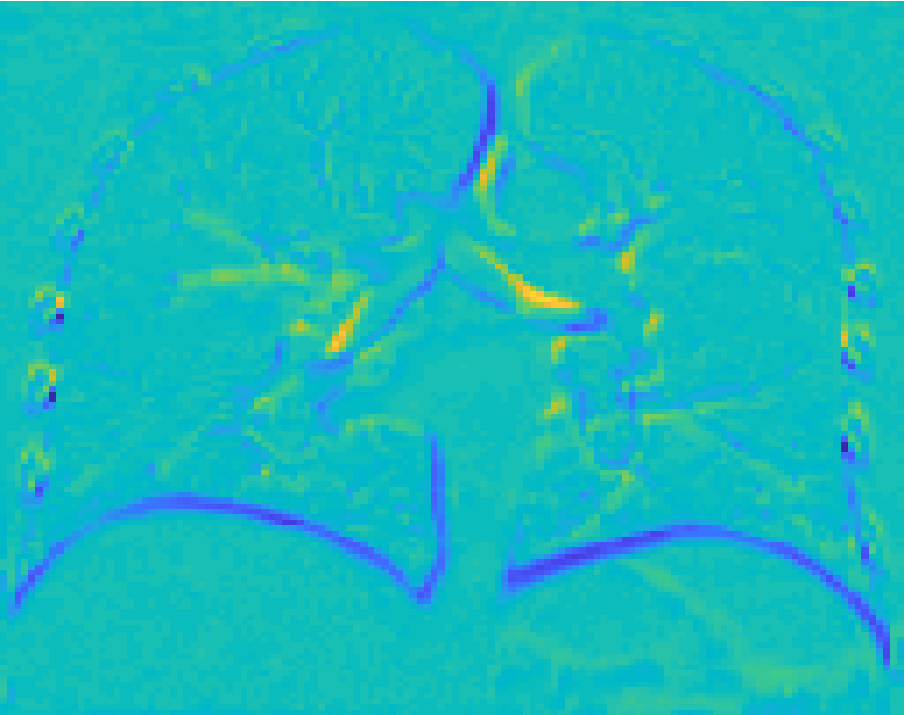}} \\
\subfigure[$T(\bm{y})$ with landmarks and intensity from three views]{
\includegraphics[width=0.95in,height=0.95in]{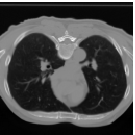}
\includegraphics[width=0.95in,height=0.95in]{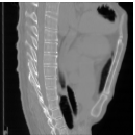}
\includegraphics[width=0.95in,height=0.95in]{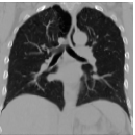}} 
\subfigure[Difference between $T(\bm{y})$ and $R$, $Re_{SSD} = 9.10\%$]{
\includegraphics[width=0.95in,height=0.95in]{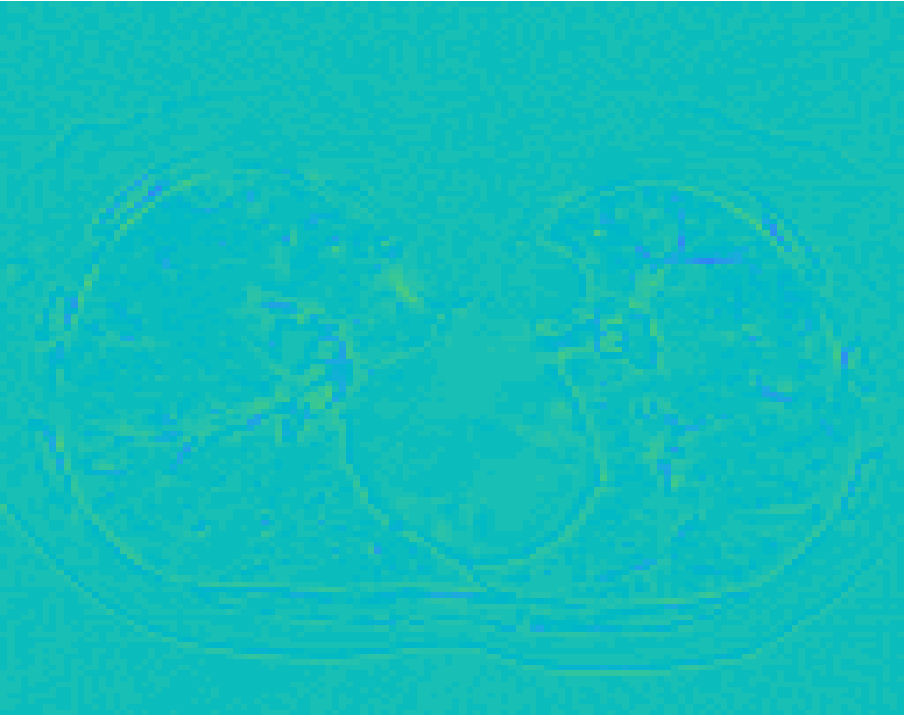}
\includegraphics[width=0.95in,height=0.95in]{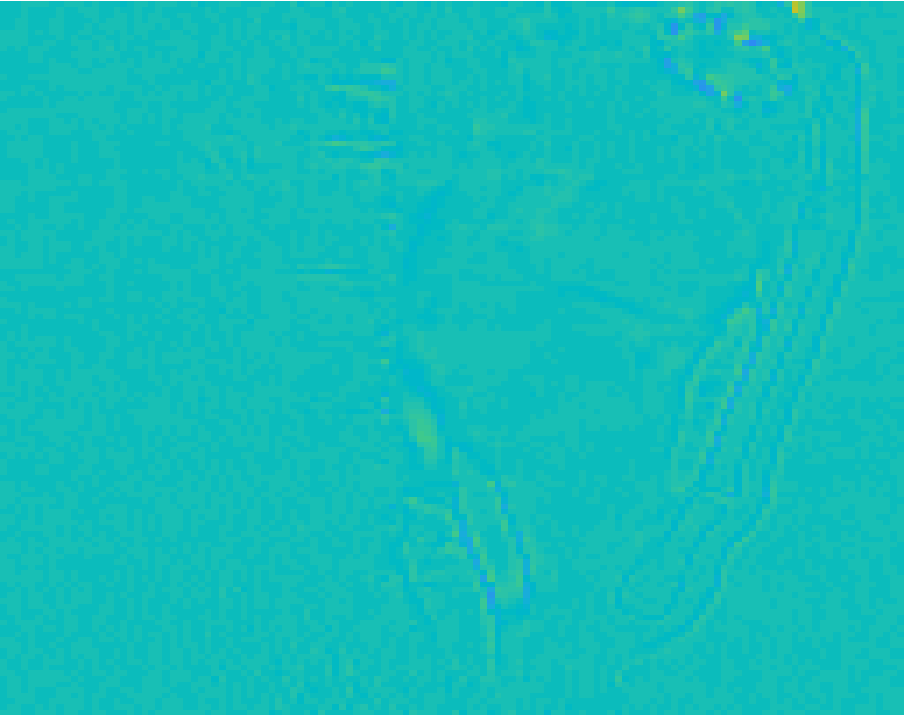}
\includegraphics[width=0.95in,height=0.95in]{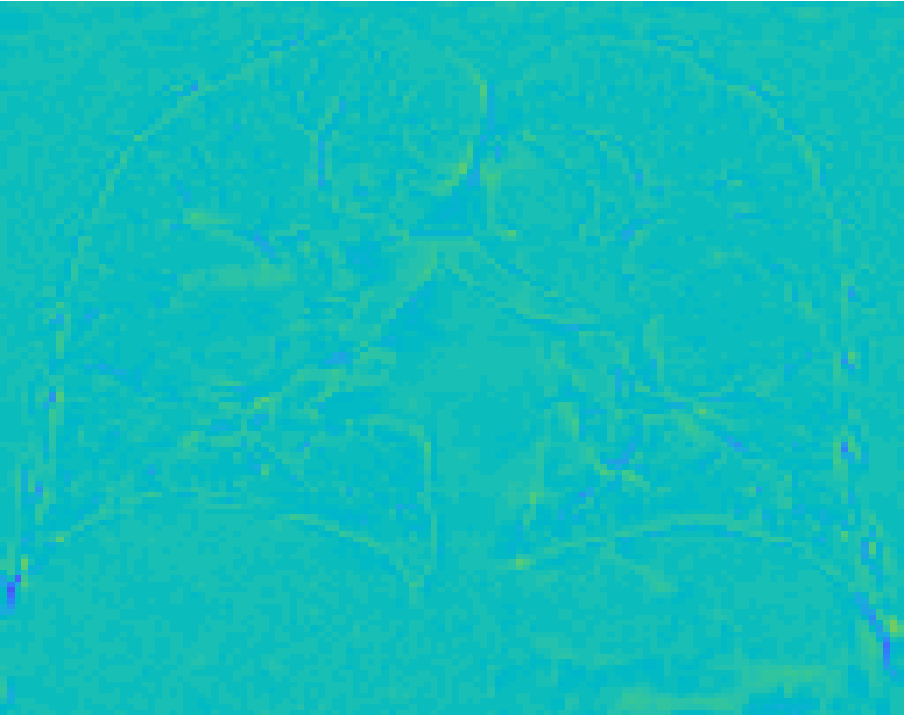}} 
\caption{Visualization of the 3D lung image registration results via the axial view, sagittal view, and coronal view. The difference between the reference and the template, and the difference between the reference and the two registration results are both provided. Note that both registration models \eqref{case1} and \eqref{case2} are capable of matching the features, and the more advanced model \eqref{case2} with the intensity fidelity term added gives a more accurate registration.}\label{fig:Lung2}
\end{figure}

\subsection{Adaptive remeshing}
For many applications that involve solving differential equations, it is crucial to remesh an object in an adaptive manner so that certain parts of it are with a finer mesh resolution. Using our proposed framework, adaptive remeshing in any $n$-dimensional domain can be easily achieved. More specifically, to remesh an object such that different regions of it have different mesh density, we can set an area/volume prior for each specific region and utilize the proposed model in~\eqref{case3} to find a quasi-conformal transformation. 

We illustrate this idea using the examples presented in Fig.~\ref{fig:case3_2D} and Fig.~\ref{fig:case3_3D}. After finding the resulting transformations, the remeshing result can be constructed by inverting them. Figure \ref{fig:synthetic_remeshing}(a)--(d) shows the adaptive remeshing results corresponding to the four volume priors in Fig.~\ref{fig:case3_2D}(b)--(e). We can see that different volume priors can lead to either a finer mesh in the prescribed region when $\theta>0$ or a coarser mesh in the prescribed region when $\theta<0$. Analogously, we achieve an adaptive remeshing result of a 3D solid domain with a higher mesh density at the specified region by generating a regular grid and transforming it using the inverse of the 3D quasi-conformal mapping (see Fig.~\ref{fig:synthetic_remeshing}(e)). 

\begin{figure}[t]
    \centering
\subfigure[]{
\includegraphics[width=1.8in,height=1.8in]{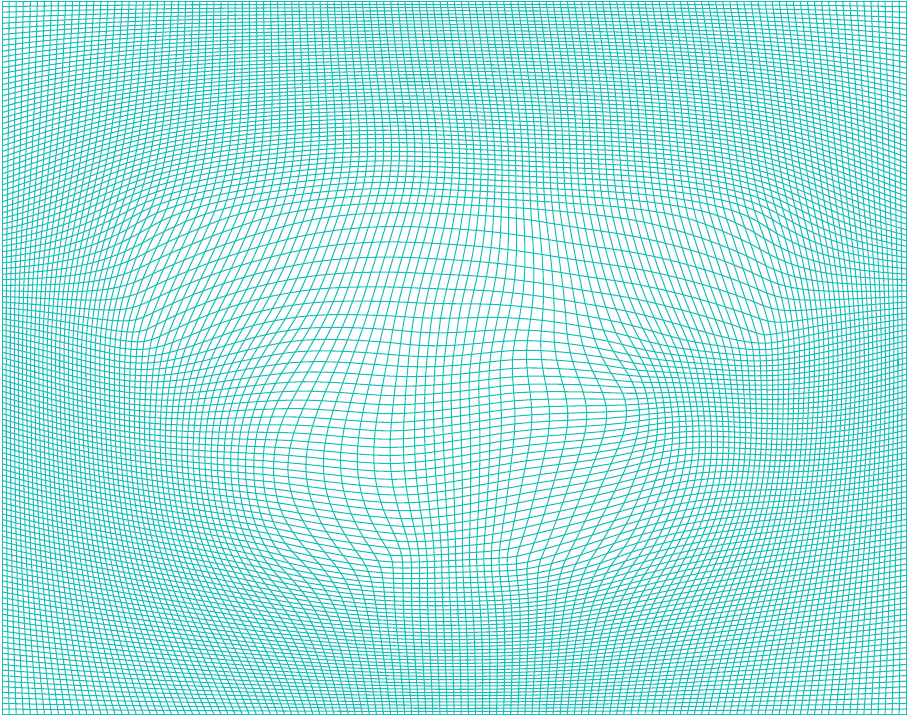}}
\subfigure[]{
\includegraphics[width=1.8in,height=1.8in]{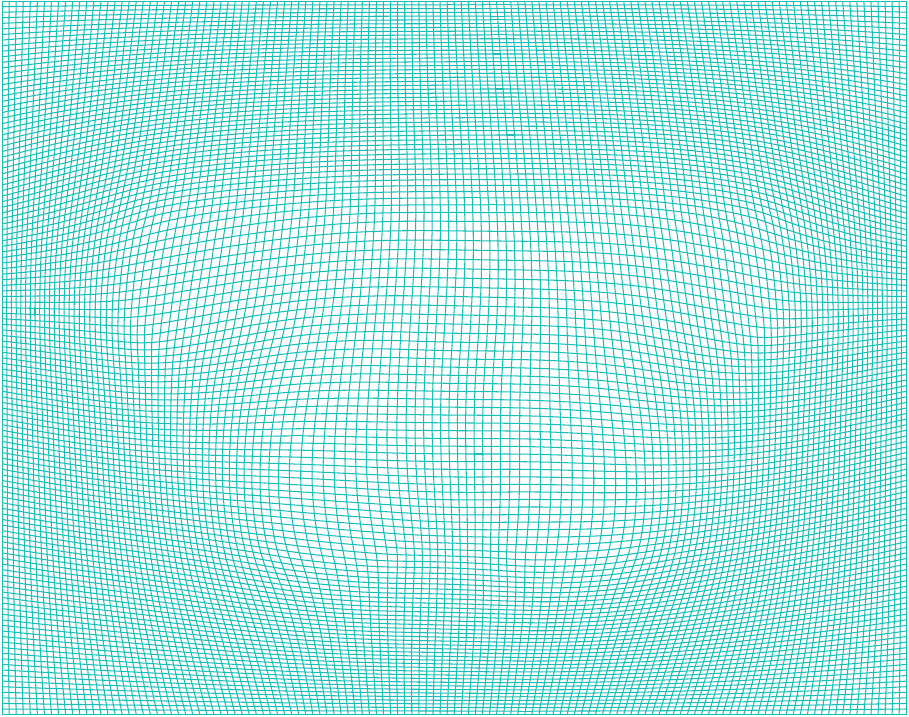}}
\subfigure[]{
\includegraphics[width=1.8in,height=1.8in]{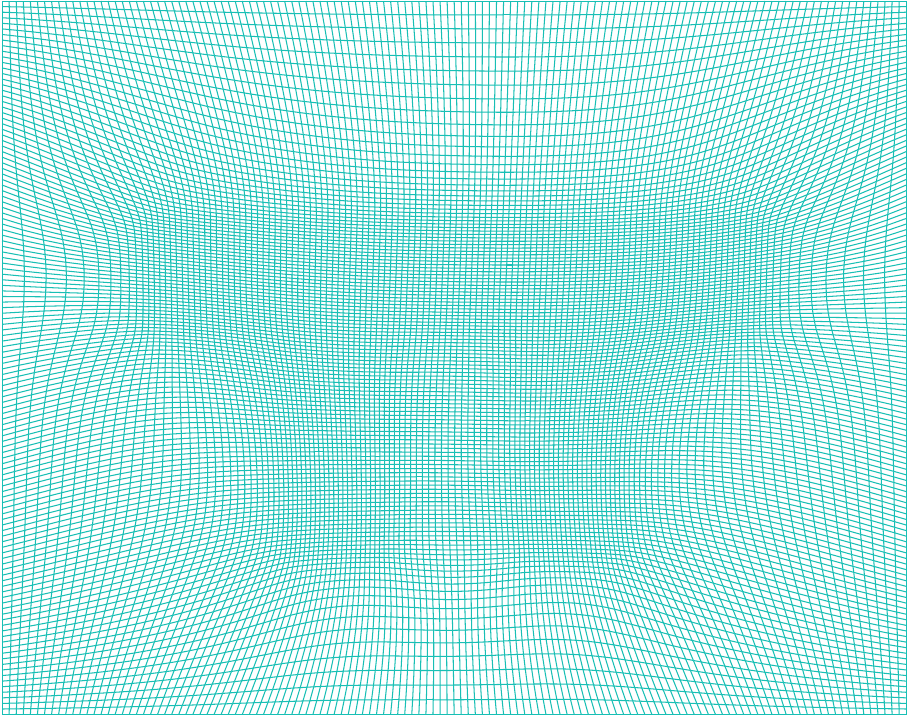}}
\subfigure[]{
\includegraphics[width=1.8in,height=1.8in]{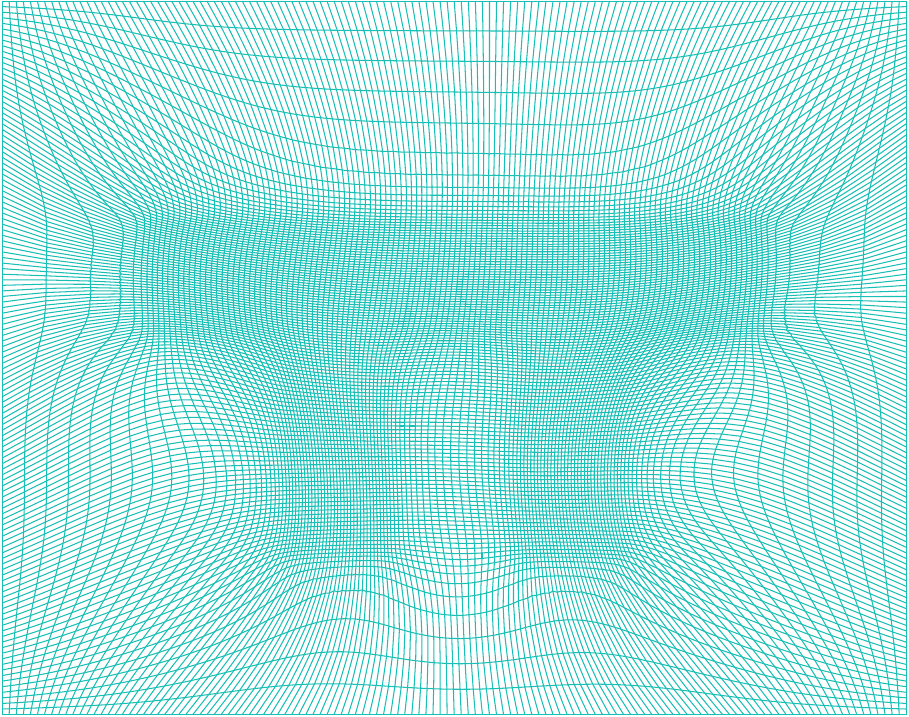}}
\subfigure[]{
\includegraphics[width=0.45\textwidth]{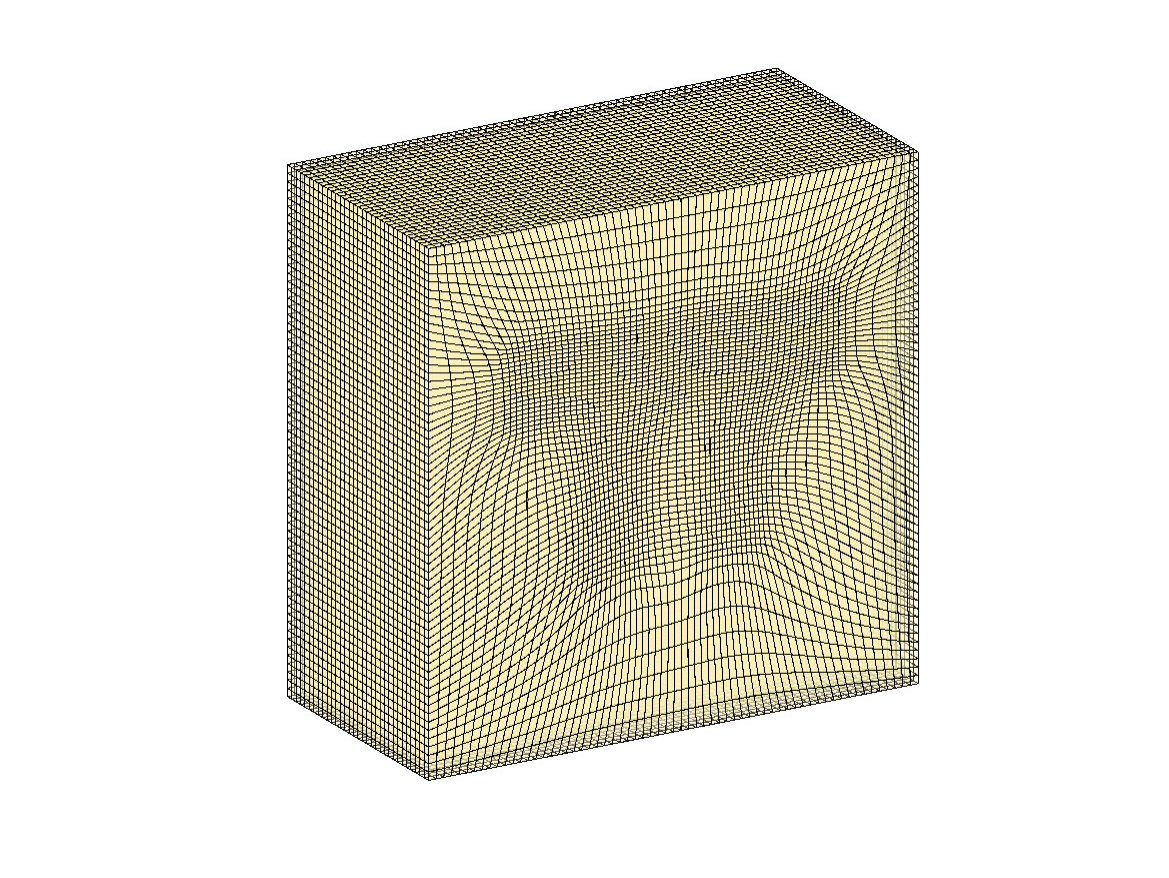}}
    \caption{Adaptive remeshing using our proposed model~\eqref{case3}. (a)--(d) The adaptive remeshing results achieved using the transformations in Fig.~\ref{fig:case3_2D}(b)--(e) respectively. (e) The adaptive remeshing result achieved using the transformation in Fig.~\ref{fig:case3_3D}.}
    \label{fig:synthetic_remeshing}
\end{figure}

\subsection{Deformation-based shape modeling}
Mesh deformations are frequently used in computer graphics for achieving different shape modelling and animation effects~\cite{zhou2005large,hildebrandt2011interactive}. As our proposed model is capable of producing quasi-conformal mappings in $n$-dimensional space, which are smooth and folding-free, it is well-suited for deformation-based shape modeling.

\begin{figure}[t!]
\centering
\includegraphics[width=0.95\textwidth]{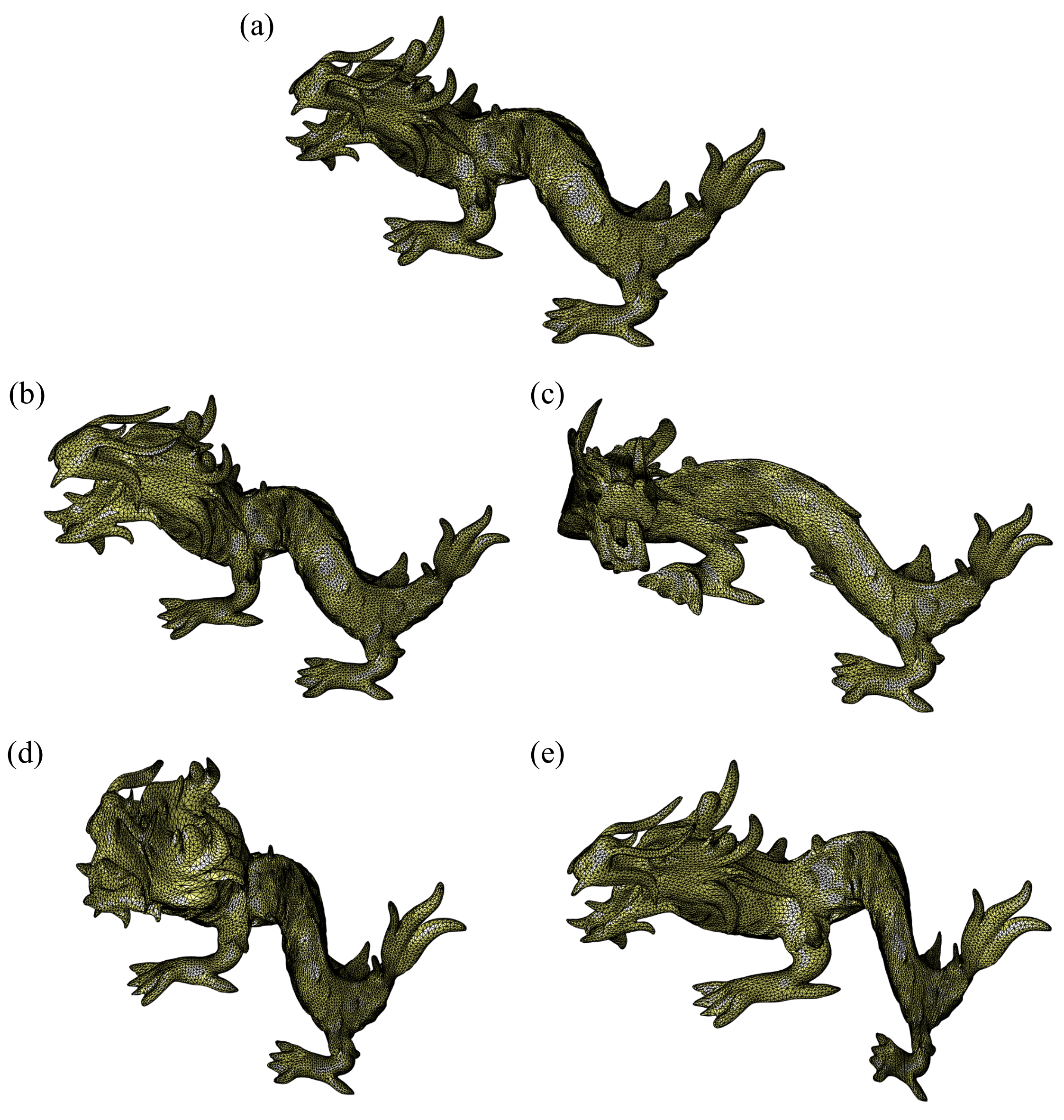}
\caption{Deformation-based shape modeling using the proposed framework. (a) The original dragon model from~\cite{stanford}. (b) A deformed dragon model with the head enlarged. (c) A deformed dragon model with the head rotated. (d) A deformed dragon model with the head lifted and the head and tail enlarged. (e) A deformed dragon model with the body twisted and shrunk.}\label{fig:dragon}
\end{figure}

To illustrate this idea, we consider a 3D dragon model adapted from The Stanford 3D Scanning Repository~\cite{stanford} (Fig.~\ref{fig:dragon}(a)). Note that the volumetric density-equalizing reference map method in~\cite{choi2021volumetric} is capable of producing a magnification of the dragon head by setting a higher density at a specific region of the volumetric domain. Analogously, by using the proposed model~\eqref{case3} and setting a large volume prior at the head of the dragon, we can precisely enlarge the head of the dragon as shown in Fig.~\ref{fig:dragon}(b). Moreover, using the proposed model~\eqref{case1} and setting the corners of the bounding box of the head as rotating landmarks, we can effectively turn the dragon head as shown in Fig.~\ref{fig:dragon}(c). It is also possible to change the position of the head and enlarge various parts of the dragon simultaneously by using the most general formulation of the proposed model~\eqref{PM} (see Fig.~\ref{fig:dragon}(d)). More specifically, landmarks are used for controlling the translational effect of the head, and volume priors are used for enlarging the head and the tail of the dragon. Similarly, one can twist and shrink the body of the dragon simultaneously (see Fig.~\ref{fig:dragon}(e)). The above results demonstrate the effectiveness of the proposed framework for producing a large variety of graphical effects.

\section{Conclusion and future works}\label{sect:conclusion}
In this work, we have proposed a unifying framework for computing $n$-dimensional quasi-conformal mappings. Specifically, our framework allows for the consideration of prescribed landmark and intensity information, volume prior, and the overall quasi-conformal and volumetric distortion. By adjusting the weights of different terms in the proposed energy model, we can easily achieve a large variety of mappings with different effects. 

For future work, we plan to employ the proposed framework on other real medical datasets for disease diagnosis. Another possible future direction is to explore the possibility of combining the framework with machine learning approaches~\cite{law2020cnn} for further improving $n$-dimensional image registrations. 

Also, note that the volumetric prior considered in the proposed framework is naturally related to optimal transportation (OT) maps~\cite{ur20093d,levy2015numerical,gu2016variational,su2017volume,lin20213d}. In particular, conventional optimal transportation maps are guaranteed to satisfy any prescribed volumetric constraints, and if we do not care about the optimality, there are infinitely many solutions according to Brenier's polar decomposition theorem~\cite{brenier1991polar}. When compared to the OT maps, our method considers additional components such as the generalized conformality distortion and the regularization and hence is likely to produce a smoother mapping which may not be exactly a solution of the discrete OT problem. Nevertheless, our method involves relatively simpler computation procedures and hence is likely to be more efficient. As a preliminary experiment, we consider a simple 2D problem of mapping a given initial Voronoi diagram to a new Voronoi diagram with prescribed area and compare the performance of our approach and the OT approach in~\cite{gu2016variational,omtcode}. For a domain of size $64\times 64$, our method only takes about 1.5 seconds while the OT approach in~\cite{gu2016variational,omtcode} takes about 4.7 seconds. For a domain of size $128\times 128$, our method takes about 4.5 seconds while the OT approach takes over 30 seconds. This suggests that our method is likely to be much more efficient than the existing OT approaches. Therefore, our framework with the volumetric prior may serve as a good initial map for the computation of optimal transportation maps.

Another natural next step is to extend the proposed framework for more general $n$-dimensional triangulations. Note that the current framework assumes the problem domain to be regular grids, as images often have a rectangular shape and the computation involves a differentiable interpolation operator for the deformed template (such as the cubic spline interpolation). If we only consider volume prior, landmark constraints and conformal distortion in searching for an optimal transformation, i.e. by using the proposed model~\eqref{PM} without the intensity fitting term, the finite element method is a good candidate to solve the problem for more general triangulations. In particular, the key step for the finite element implementation is to compute the Jacobian determinant of the transformation, which has already been studied in prior hyperelastic regularization image registration works~\cite{droske2004variational,ruthotto2012hyperelastic}. We plan to investigate the extension in detail in our future work.

\bibliographystyle{siam}
\bibliography{ndqc_bib.bib}

\end{document}